\newtheorem{lem}{Lemma}
\newtheorem{cor}{Corollary}
\newtheorem{prop}{Proposition}
\newtheorem*{prop*}{Proposition}
\newtheorem*{thm*}{Theorem}
\newtheorem{nota}{Notation}
\definecolor{nberblue}{RGB}{0, 90, 155}
\definecolor{resgreen}{RGB}{34, 84, 67}
\definecolor{pennblue}{RGB}{0, 44, 119}
\definecolor{pennred}{RGB}{152, 30, 50}
\definecolor{aerred}{RGB}{134, 27, 57}
\definecolor{esblue}{RGB}{59, 99, 143}
\definecolor{uncblue}{RGB}{75, 156, 211}
\definecolor{zjublue}{RGB}{0,63,136}
\definecolor{dblue}{RGB}{3,32,96}
\definecolor{dred}{RGB}{192,0,0}
\colorlet{shadecolor}{gray!40}
\begin{document}

\title{Data Trade and Consumer Privacy}
\author{Jiadong Gu\thanks{School of Economics, Zhejiang University. Email: \href{mailto:jgu123@zju.edu.cn}{jgu123@zju.edu.cn}. 
%\href{mailto:jiadong.gu@outlook.com}{jiadong.gu@outlook.com}.
		%This paper is based on the first chapter of my Ph.D. dissertation at UNC. 
		I am deeply indebted to my advisors Fei Li, Peter Norman, Gary Biglaiser, and Andrew Yates for their invaluable guidance and continued support. I especially thank Alessandro Pavan, my mentor of the ES Virtual Mentorship Initiative, for many extremely insightful comments and helpful suggestions and the ES mentoring program 2025. %,which substantially improves the paper. I am grateful to Qing Gong, Lutz Hendricks, Can Tian, Kyle Woodward, Kai Hao Yang, Zhendong Yin, Jiajia Cong, Zhuoran Lu, Xianwen Shi, Xin Zhao and
		I am grateful to the participants of the UNC Micro Theory and Industrial Organization workshops, Triangle Microeconomics Conference 2022, ACDE 2022, %CCER-NCER 2022, 
		the 7th CCER Summer Institute, APIOC 2023, ESWC 2025, ES Virtual Mentorship Meeting 2025, HKUST (Guangzhou), BNU, CUFE, Fudan University, Jinan University, Shandong University, UIBE, Zhejiang University seminars for their helpful comments. All errors are my own. 
} }

\date{\today}
%\date{{\small First version: November 2021} \\ \today}
%\date{August 19, 2025}

\maketitle

\begin{abstract}
	This paper studies optimal mechanisms for collecting and trading data. Consumers benefit from revealing information about their tastes to a service provider because this improves the service. However, the information is also valuable to a third party as it may extract more revenue from the consumer in another market called the product market. The paper characterizes the constrained optimal mechanism for the service provider subject to incentive feasibility. It is shown that although the service provider sometimes sells no information or only partial information in order to preserve profits in the service market, selling full information is optimal when the data-sourcing market is highly differentiated.
    Moreover, a ban on data trade may reduce social welfare because it makes it harder to price discriminate in the product market. Instead, reducing the intermediary's bargaining power can protect privacy without hurting social welfare, which suggests that the regulation of market power is more efficient than the regulation of data sharing. 
	
	\vskip .5in
	
	\noindent \textbf{Keywords:} Data market and regulation, Intermediary, Privacy, Price discrimination, Mechanism design, Information design %, Bargaining power
	\vskip .1in
	
	\noindent \textbf{JEL Codes:} D82,  D83 
	%\paragraph{JEL Codes:}D18, D42, D82, D83, L12
	
	\thispagestyle{empty}
	%\newpage 
\end{abstract}

%\newpage\tableofcontents\thispagestyle{empty}

\newpage \baselineskip 18pt

\pagenumbering{arabic}

\section{Introduction}

The digital economy is deeply rooted in the sharing and use of personal data. Meanwhile, the transmission of personal data {\it across markets} brings concerns about privacy,\footnote{For instance, personal data of Facebook users was collected by British consulting firm Cambridge Analytica in the 2010s. The privacy issues attract attention from both the public and regulatory authorities. The General Data Protection Regulation (GDPR) was put into effect in Europe in 2018, the California Consumer Privacy Act (CCPA) went into effect in 2020 and China’s Personal Information Protection Law was in force in late 2021.} which can bite: The privacy concerns induced by data transmission, such as data trade, can adversely affect the data generating activities (e.g., data sharing, service usage) in related markets. The special feature of data endows it with a {\it cross-market} role. 

Data is usually collected in one market and is also valuable in other markets. For example, Amazon Music, a music service provider, may collect taste data when users are enjoying the service of music recommendations. Amazon Music can trade the user data to a headphone seller (firm, producer, or data buyer), and the latter can use the data to price against users' interests. As consumers have become increasingly {\it aware} of how their data can be collected, traded, and used, their willingness to use Amazon Music service and to share their data will be affected. With the cross-market role of data trade, the intermediary (service provider or platform) would design mechanisms for both collecting and trading data. What are the optimal mechanisms? How do the data trade and its cross-market effect shape the equilibrium privacy, surplus allocation, and social welfare? Does the regulation of data sharing work better than the regulation of market power?

To address these questions, this paper develops an integrated model including three sorts of interactions: (1) the intermediary’s data collection from consumers by providing service in the service market; (2) data trade between the intermediary and a firm in the data market; (3) pricing game between the firm and consumers in the product market. \Cref{fig_overview} illustrates an overview of these interactions.
We use the framework to study the intermediary optimal mechanisms of data collection and trade, and data market regulations. 

\begin{figure}[h]
	\centering
	\begin{tikzpicture}
		[ node distance=4ex and 0em,
		block/.style={circle, draw, fill=nberblue!20, text width=5.5em, text centered, rounded corners, minimum height=5em},
		line/.style={draw,-latex} ]
		
		\node [block,line width=1pt] (0) {intermediary};
		\node [block,line width=1pt, below left = 5em of 0] (1) {consumers};
		\node [block,line width=1pt, below right = 5em of 0] (2) {firm};
		
		\path[-stealth,line width=2pt,transparent!18]	
		([yshift=10pt]0.east) edge ([xshift=10pt]2.north) node [above = 2.5em of 2,black] {data trade}
		([xshift=-25pt]2.north) edge ([yshift=-25pt]0.east) node [below right= 1em of 0,black] {payment}
		
		([xshift=-10pt]1.north) edge ([yshift=10pt]0.west) node [above = 2.5em of 1,black] {data collection}
		([xshift=-10pt]1.north) edge ([yshift=10pt]0.west) node [above = 1.25em of 1,black] {payment}
		([yshift=-25pt]0.west) edge ([xshift=25pt]1.north) node [below left= 1em of 0,black] {service}
		
		([yshift=0pt]1.east) edge ([yshift=0pt]2.west) node [below = 3.5em of 0,black] {payment}
		([yshift=-25pt]2.west) edge ([yshift=-25pt]1.east) node [below = 5.5em of 0,black] {product};
		
	\end{tikzpicture}
	\caption{The overview of the modeling ingredients and interactions among players.}
	\label{fig_overview}
\end{figure}
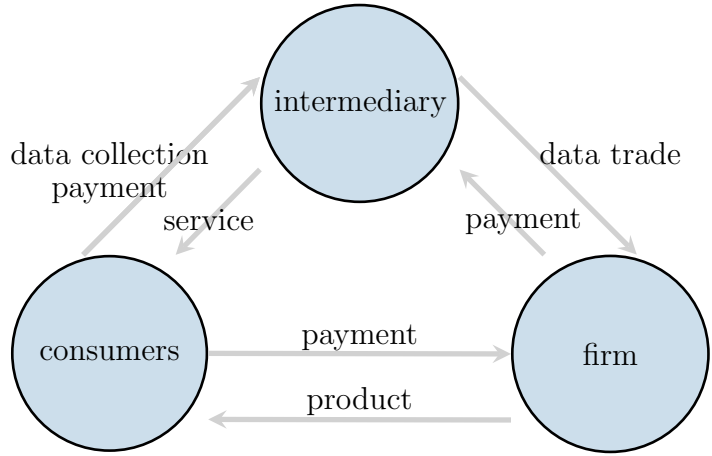

Our framework enables formal investigations on data sourcing and trade in a comprehensive way, and hence the equilibrium interactions between second degree price discrimination in the data-sourcing market and third degree discrimination endogenous to data trade. Many works have separately investigated some specific elements, like information collection by screening \citep{mussa1978monopoly}, information sales by certification \citep{lizzeri1999information}, or data brokers' information selling mechanism \citep{yang2022selling}, and less has offered a comprehensive investigation on data collection, data trade, and data usage.
The holistic study of the intermediary's collection, trade, and usage of data and their effects and regulatory implications cannot be easily discussed in divided models. 
This framework contributes to the advocacy of integrated models to comprehensively understand the information markets \citep{bergemann2019markets} with the seminal envision back to \citet{marschak1968economics}. %We discuss the literature more fully below.

The framework also offers to study the intermediary's bargaining power. The distribution of bargaining power is important when it comes to the transfer of data across markets via intermediation, because transferring the data from consumers to the intermediary does not necessarily transfer the bargaining power to the intermediary. Although a large literature has looked at the comparative statics across the information available to the market participants,\footnote{E.g., the information available to the seller \citep{bergemann2015limits} or to the buyer \citep{roesler2017buyer} or to both \citep{kartik2023lemonade}.} the information intermediary's bargaining power and its regulation implications are under-explored.\footnote{See discussions about the bargaining power of a durable good reseller in \citet{calzolari2006monopoly}. We discuss the literature more fully below.%\footnote{We thank Alessandro Pavan for encouraging the investigation.}
}

In the framework of~\Cref{fig_overview}, consumers who have private information about horizontal characteristics (tastes) benefit from sharing information with the intermediary by receiving a better-matched service but at the risk of the shared information being sold to a third party, the producer who can price against consumers using the bought data.
In detail, consumers' tastes are $H$ or $L$ (e.g., classical or rock music fans). Each consumer’s utility is from consuming the service and product. The service provided to consumers may generate a mismatch as the intermediary does not know consumers’ true tastes. Each consumer also has unit demand for the product. The producer, a monopolist in the product market, does not know consumers’ willingness to pay for the product, which is related to consumers’ tastes. Hence, the taste data collected by the intermediary is valuable for the producer to set product prices. The intermediary can sell consumer data to the producer and it is the unique data source. The intermediary specifies (and commits to) {\it a contract} that consists of service provision, service fee, privacy policy (data trade), and data fee to consumers and the producer {\it simultaneously}. Consumers and the producer decide whether to accept. %\footnote{If consumers and the producer accept, the service provision and data trade happen by the prescribed contract.}
Understanding that the data trade may discourage the service usage and data revelation of consumers, the intermediary’s contract would trade off between benefits of trading data and costs of hindering consumers’ data sharing and service usage.\footnote{The interactions in~\Cref{fig_overview} fit into many market scenarios such as Borrower-Fintech-Lender, in which consumers reveal their personal data while using the financial service provided by banks. Banks may transfer data via Application Programming Interface, i.e., open banking, to third-party fintech firms, which may offer personalized interest rates to consumers as borrowers in the lending market.
	In the example of Amazon Music, although how much a consumer listens to music could also be indicative of willingness to pay for headphones, the music service provider usually can verify how much a consumer listens to from the history, which is more in the spirit of the behavior-based price discrimination (see, e.g., \citet{fudenberg2006behavior}).
	%In the example of Amazon Music, consumers share their taste information to obtain better music recommendations from Amazon Music. It sells the data to a headphone seller, who sells headphones to consumers in the headphone market.
}

We characterize the intermediary constrained optimal mechanisms. We first consider, given an intermediary's contract, the decisions of the producer and consumers which impose participation and incentive constraints on the intermediary's design problem. First, the producer buys data for a binary pricing decision, the binary-signal data works as product price recommendations subject to the {\it obedience constraints} \citep{bergemann2016bayes}. That is, the producer follows price recommendations of the traded data. Thus, data trade affects the product price and hence consumers surplus. The intermediary’s data offer takes a leave-it-or-take-it form. The payment of data trade makes the producer willing to accept the intermediary’s data offer. % which imposes a participation constraint on the producer. 
Second, we turn to consumers' decisions. As the privacy policy affects consumers surplus in the product market, consumers’ incentive to reveal their types in the service market is then affected by the privacy policy. The choices of privacy policy and service make consumers willing to accept the offer and also to reveal their types.

One key feature of the intermediary optimal mechanisms is that the optimal informativeness of data trade depends on comparisons of the {\it horizontal} differentiation level in the data-sourcing market (captured by the variation in service tastes) and {\it vertical} differentiation level in the data-using market (captured by the variation in willingness to pay for the product). The optimal data trade sells {\it no} or {\it partial} information when the horizontal differentiation level is sufficiently low relative to the vertical one.\footnote{Although the intermediary protects privacy in \citet{calzolari2006optimality}, the setups and driving forces are different, as discussed in the literature below.}
In this case, as the service mismatch is small, consumers have less incentive to share their data to get the right service. The intermediary commits to selling less data to maintain consumers' incentives to reveal information. Otherwise, suppose that the intermediary sells full information to the producer. Although the intermediary obtains most consumers surplus in the product market, it has to incentivize consumers' information revelation by reducing service fees and hence loses the service profit. When the horizontal differentiation level is relatively low, the service profit losses outweigh the extra surplus gain in the product market. Hence, the intermediary would prefer to sell less data to preserve the service profits. In contrast, when the horizontal differentiation level is high enough, the intermediary can sell {\bf full} information. The service profit losses due to incentive distortions are small even if the intermediary sells full information. %Under the full information data trade, consumers are price-discriminated against in the product market.

%The intermediary's tradeoff holds in both the binary and general service space, but with different thresholds for the tradeoff. We investigate a setup of general service in \Cref{general_service}, as the service provision affects consumers' incentives.\footnote{\Cref{binary_service} assumes a binary service space to focus on the role of data trade, in which there is no service distortion to consumers in the intermediary optimal mechanisms.} However, even allowing more flexible service provisions, the intermediary does optimally not distort services at the top and bottom under some conditions. The service provided to the type-L consumers is downward distorted if these conditions are violated.\footnote{In detail, the conditions are a large proportion of type-H consumers and a high horizontal differentiation level in the data-sourcing market.} When there is downward distortion of service, price discrimination happens with strictly positive probabilities in the product market because the intermediary sells partial or full information optimally.

The second remarkable result is that introducing the intermediary's partial bargaining power in the data market can decrease the amount of information disclosed by the optimal mechanism, but does not affect the surpluses of the intermediary, consumers, and producer. First, decreasing informativeness has two effects: the service profit gain and the data profit loss. With the intermediary's full bargaining power, these two effects cancel each other out, leading to multiple solutions of disclosure policy; with partial bargaining power, the service profit gain outweighs the data profit loss, which drives the multiple solutions of data trade to the unique one, i.e., the minimum information among them.
Second, the partial bargaining power over data profit does not affect the intermediary's payoff because the equilibrium data profit is zero, while the consumer and producer get their outside option.

Using the framework, we study regulations on the data market. In particular, we focus on regulations of the intermediary's data sharing and market power. First, our results show that the data trade ban does {\it not improve} social welfare. This happens because without any data it is difficult for the producer to price discriminate among consumers. 
%Suppose the producer believes that the product market consists of more type-H consumers, then without extra data the producer would set a high price and only sell to the type-H. Under the optimal data trade, the producer can extract the type-L consumers surplus by price discrimination, who are otherwise excluded from the market.
Under the optimal data trade, the producer can extract by price discrimination the surpluses of all the consumers, some of whom are otherwise excluded from the market.
Second, regulating the intermediary's power (how much the intermediary can extract the data revenue) will protect consumer privacy without hurting social welfare, because reducing the intermediary's bargaining power doesn't decrease the surpluses nor total welfare. 
The findings suggest that the regulation of the intermediary's market power would be more efficient than the regulation of the intermediary's data sharing.

\paragraph{Related Literature.} The paper relates to several strands of literature: information market, the economics of privacy, price discrimination, information design, and mechanism design.

First, our paper adds to the literature on information markets (see recent surveys in \citet{bergemann2019markets} and \citet{bergemann2021information}). \citet{admati1990direct} focus on a monopolist that sells information about an asset, and \citet{lizzeri1999information} focuses on an informed seller who pays for the intermediary's certification to signal quality. \citet{bergemann2015selling} study a data broker who sells cookies for targeting markets; \citet{babaioff2012optimal} and \citet{bergemann2018design} study optimal mechanisms for selling information in different contracting environments. \citet{ichihashi2021competing} and \citet{bonatti2024selling} respectively investigate the competition among data intermediaries and data buyers. \cite{park2025selling} study the intermediary's incentives of information design to influence the consumer's willingness to pay for information and the seller's pricing decision, with a focus on the consumer's purchase of information. The majority of literature treats data sourcing as given; few of them investigate the intermediary's collection and sale of information.\footnote{ Consumer behaviors, especially as main data sources, can be distorted by the data trade. \citet{argenziano2023datam} analyze strategic behaviors of privacy-conscious consumers. Several works show that the consumers’ manipulation incentives reduce the amount of information transmission in equilibrium, e.g., \citet{frankel2022improving}, \citet{bonatti2020consumer}.} 

One notable exception is \citet{calzolari2006optimality}, that study disclosure in a sequential contracting setting and find that the upstream principal offers full privacy for the agent. Our simultaneous contracting setting admits selling information to be optimal. Their full privacy result requires the intermediary's preference to have the same sign of the single-crossing condition in terms of trade or not with the third party, but the preference structure has different signs in our paper. The insight in the simultaneous setting is driven by the interaction between third degree price discrimination in the data-using market and second degree price discrimination in the data-sourcing market. Moreover, \citet{calzolari2006optimality} do not consider the bargaining power of the intermediary, and \citet{calzolari2006monopoly} discuss the bargaining power but don't study its regulatory implications.
Our paper highlights both the data sourcing and the intermediary's bargaining power.

Second, the paper is related to the economics of privacy with seminal works of \citet{taylor2004consumer} and \citet{calzolari2006optimality}; see \citet{acquisti2016economics} and \citet{goldfarb2023economics} for recent surveys. The privacy concerns arise from sellers’ learning about consumers’ type \citep{villas2004price,conitzer2012hide} and buyers’ information revelation to sellers \citep{hidir2021privacy,ichihashi2020online}. \citet{argenziano2023datal} study the data linkages instead of the intermediary’s data selling. \citet{ali2023voluntary} combine the consumer information disclosure and personalized pricing. In dynamic environments, the privacy concerns on discrimination induce a ratchet effect \citep{laffont1988dynamics}. The ratchet effect can bite all the information value as the data collecting intermediary gets better off committing to offering full privacy \citep{calzolari2006optimality}. \citet{doval2025purchase} find in a limited commitment environment that the price discrimination induces a product line gap due to this effect. \citet{rhodes2021personalized} study consumer privacy choice in the context of personalized pricing. Our paper extends the literature by endogenizing the privacy policy as a part of the intermediary's design of information collection and sale.

Several other works share a similar feature of tradeoffs between data exploitation and user privacy. \citet{fainmesser2023digital} study the tradeoff between data collection and data protection but without a role of intermediary's data trade, and hence the privacy is not endogenous to the data trade.\footnote{They assume that the personal information (consumer activity such as clicks) is verifiable. The privacy level is the protection investment. \citet{argenziano2023datal} micro founds the privacy preferences. In our paper consumer privacy is based on the equilibrium data sale.} \citet{jullien2020privacy} consider the tradeoff between data exploitation and user activity. \citet{dosis2023ownership} focus on a tradeoff between data monetization and processing. In these works, consumer information is verifiable. Our paper adopts the mechanism design approach because consumers control the information and information is {\it unverifiable}, which is more about consumers' private characteristics than their verifiable activities, like browsing data. %In addition, in our paper the tradeoff comes from the cross-market effect of data, relying on comparisons between the differentiation levels in data-sourcing and data-using markets.
Related to the cross-market role of data, \citet{doval2025purchase} study the {\it inter-temporal} role of data. They focus on the tradeoff between product personalization and price discrimination in a dynamic model. % of upstream-downstream. 
Rather, our paper investigates in a static environment the intermediary's balance between the data sourcing and data sales and how the intermediary's bargaining power affects the balance.
%They find the no-price discrimination result under equal weight on profits of different periods in the commitment environment. No-discrimination can show up in the product market only if there is no distortion in service provisions to consumers in the data sourcing market.

Third, this paper also adds to the literature on price discrimination and market segmentation by emphasizing the endogenous interactions between the second degree and third degree price discrimination. 
\citet{bergemann2015limits} study how market segmentation can support an arbitrary surplus distribution between consumers and a monopolist.\footnote{\citet{haghpanah2022limits} identify markets for which the entire surplus triangle of \citet{bergemann2015limits} is achievable and \citet{haghpanah2023pareto} study Pareto improving market segmentation with multiple products.} The endogenous market segment also arises from consumers’ equilibrium disclosure about their characteristics in \citet{ichihashi2020online} and \citet{ali2023voluntary} and data brokers' creation and sale of market segments for profits in \citet{yang2022selling}.
So our framework contributes to discussing the relation between the third degree price discrimination by market segmentation and the second degree discrimination induced by the monopoly screening.
%The degree of price discrimination is endogenous to the interaction between consumers' incentive of sharing {\it unverifiable} information and intermediary’s data trade to a third party for profit, from the viewpoint of the intermediary.\footnote{\citet{ali2023voluntary} focus on the consumer disclosure of verifiable information without data sale. \citet{rhodes2024personalized} study the personalized pricing in a general oligopoly without consumer privacy choice.} 

Lastly, it is related to the broad literature of information design \citep{kamenica2011bayesian,bergemann2019information} and mechanism design.\footnote{See, e.g., \citet{mussa1978monopoly}, \citet{myerson1981optimal} for classic works and \citet{bergemann2006information} for a survey on information in mechanism design.} Several works study the comparative statics of information available to the seller \citep{bergemann2015limits} or to the buyer \citep{roesler2017buyer} or to both \citep{kartik2023lemonade}.
\citet{bergemann2018design} study the mechanism of selling information. \citet{yang2022selling} studies a setting where the data buyer has private information about costs. \citet{bonatti2024selling} and \citet{doval2025purchase} investigate mechanism design problems with multiple buyers and multiple periods, respectively. They don't consider the source of information.
%\citet{bonatti2024selling} investigate data selling to competing buyers and \citet{doval2025purchase} study personalization of product and prices also in a mechanism design framework. \footnote{\citet{yoder2022designing} explores the contract to incentivize agent's information acquisition and \citet{georgiadis2020optimal} study the designs of wage and performance evaluation policies as incentive tools.}
In our paper, consumers, as the data sources, have private unverifiable information. Hence, we rely on the mechanism design approach to incentivize consumers' information revelation, and the privacy policy or data trade is one of the intermediary's design elements.
The study of the interaction between monopoly screening and endogenous market segmentation complements the standard monopoly screening \citep{mussa1978monopoly,maskin1984monopoly}, which does not consider the information disclosure as part of allocations.
\citet{calzolari2006optimality} study screening plus persuasion and \citet{dworczak2020mechanism} confines to cut-off mechanism in sequential contracting. This paper differs in the simultaneous setup and complements the literature by looking at the critical role of the intermediary's bargaining power.

%We embed the information design approach to selling information into the intermediary's mechanism design problem 
%In our paper, one feature of optimal mechanisms is that the intermediary, as the designer, can extract all consumers surplus as the consumers surplus is endogenous.

%Lastly, although this paper abstracts from data externalities, the finding that data trade improves social welfare is related to the works studying externality in the data market. \citet{acemoglu2022too} find that banning the data market improves welfare as the over-sharing of data in the market induced by data externality disappears. The welfare loss comes from the firm’s over-collection of consumer data in \citet{bergemann2022economics} and \citet{choi2019privacy}.

%\end{document}

\paragraph{Outline.} The rest of this paper is organized as follows. \Cref{model} sets up the model. \Cref{preliminaries} takes some preliminary analyses of individual decisions. \Cref{optimal_design} investigates the intermediary optimal design: \Cref{binary_service} focuses on the role of data trade, \Cref{general_service} completely characterizes the intermediary optimal mechanism and discusses the regulation of data sharing. \Cref{bargaining_power} studies the intermediary's bargaining power in the data market and the regulation of the intermediary's power. \Cref{discussion} provides some discussions and \Cref{conclusion} concludes. Omitted proofs and additional results are relegated to Appendices.

\section{The Model }\label{model}
There are three types of players: one intermediary (it), one producer (he), and one unit mass of consumers (she). Consumers, the intermediary, and the producer interact in three markets: a service market, a product market, and a data market. In the service market, the intermediary collects data about consumers' type by providing consumers service. The data can be traded in the data market from the intermediary to the producer, who sells products to consumers in the product market.

Each consumer has a type $\theta \in \left\{L,H\right\}$ with $L, H\in\mathbb R$. The type is a horizontal taste parameter.\footnote{In this sense notations $H$ and $L$ have no essential meanings. One can relabel them as ``A"``B", e.t.c.} Consumers’ type is realized from a prior distribution $\mu_0=\text{Pr}\left(\theta = H\right)\in [0,1]$, which is common knowledge. Each consumer knows her type, but neither the producer nor the intermediary observes consumers' true types.

Consumers’ utility relies on the service and the product. A consumer of type $\theta$ who consumes both the service and product will get
\begin{equation*}
	V - \underbrace{ \left( x - \theta \right)^2 }_{\text{ mismatch in service} } - f
	+ \underbrace{ \max \left\{ v_\theta  - p,0 \right\} }_{\text{product market} } 
\end{equation*}
which consists of two parts. First, the recommended service by the intermediary generates a baseline utility level $V\in \mathbb{R}$, but there can be recommendation bias or mismatch between a recommendation $x$ and consumer true taste $\theta$, which is measured by the squared item.\footnote{The squared form specifies only the mismatch formula in the equilibrium.} Consumers need to pay the service fee $f$ to get service.\footnote{The fee $f$ can be negative in the setup, which means that the intermediary pays the consumers. As we will see later the optimal $f$ will be always positive if $V$ is sufficiently large.} 

Second, each consumer has unit demand for the product and decides whether to purchase the product or not. Consumers of $\theta$ purchase the product if and only if $v_\theta - p \geq 0$ where $p$ denotes the product price and $v_\theta\in\mathbb{R}$ is the willingness to pay for the product of type $\theta$ consumers. We assume $v_L\neq v_H$ so that the horizontal type is informative about the willingness to pay.\footnote{In contrast, if $v_L=v_H$ then data on consumers' type is not valuable to the producer.} It is without loss to assume $v_L<v_H$.

The producer is a monopolist in the product market. As the product demand relies on the producer's information about consumers' willingness to pay, the product revenue is
\begin{equation*}
	p\times \text{Pr}\left(v\geq p \vert \mathcal I \right)
\end{equation*} 
where $\text{Pr}\left(v\geq p \vert \mathcal I \right)$ is the total mass of consumers who purchase the product at a price $p$ conditional on the producer's information about consumer willingness to pay, $\mathcal I$. 

The intermediary plays in both service and data markets by offering a contract of data collection and data trade. In the service market the intermediary is a monopolist who provides consumers with services to collect their taste data. The intermediary sells data via a data offer and the producer buys data for product pricing decisions. The intermediary is the unique data source. Its profit comes from service provision and data sale.

By the revelation principle \citep{myerson1979incentive}, the intermediary designs a direct mechanism which specifies: (1) a service menu for consumers including the service provision $x:\{L,H\} \to \mathcal X$ where $\mathcal X$ is the set of service; and the service fee $f:\{L,H\} \to \mathbb{R}$; and (2) a data offer for the producer consisting of the disclosure policy captured by a pair of distributions over disclosed signals about consumers types $\pi:\left\{L,H\right\}\to\Delta \mathcal S$ where $\mathcal S$ is the set of possible signals; and the lump-sum data fee $T\in \mathbb{R}$. Consumers {\it observe all} the specifications. If a consumer reports $\theta'$, then a signal will realize according to $\pi\left(s\vert \theta'\right)$ in the product market. In sum, the intermediary offers a contract of data collection and data trade of which $\left(\mathcal X, x, f\right)$ and $\left(\left\{\mathcal S,\pi\right\},T\right)$ specify the part of service market and data market respectively.

\paragraph{Timing.} Consumers privately observe type $\theta$. The intermediary posts the entire contract $\{(\mathcal X, x, f), (\{\mathcal S,\pi\},T ) \}$ including the service menu in the service market and data offer in the data market. Then consumers and the producer make acceptance decisions. If consumers accept, they report type $\theta'$ and then get service $x_{\theta'}$ and pay $f_{\theta'}$. If consumers reject, they get the outside option in the service market. If the producer accepts, after data trade, in the product market a signal $s$ about consumer's type is realized according to $\pi$. The producer observes $s$ and then sets the product price. Consumers decide whether to purchase the product or not. If the producer rejects the data offer, then the product pricing is made under the prior information $\mu_0$. The timing of events is summarized in \Cref{timing}.

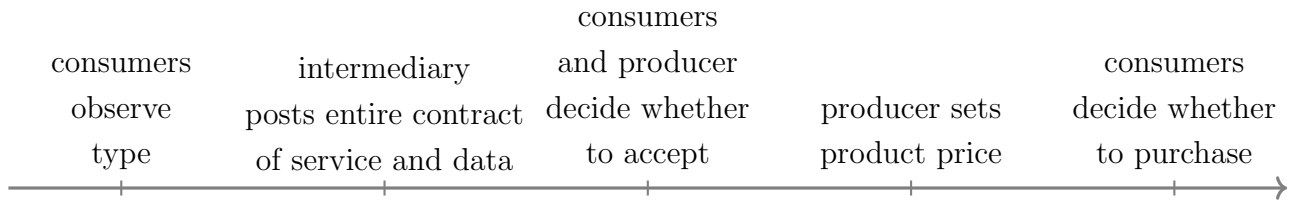
\begin{figure}[h]
	\centering
	\begin{tikzpicture}[xscale=1]
		\draw [very thick,gray] [->] (-1.5,0) -- (15.5,0);
		\draw [thick,gray] (0,-.1) -- (0,.1);
		\draw [thick,gray] (3.5,-.1) -- (3.5, .1);
		\draw [thick,gray] (7,-.1) -- (7, .1);
		\draw [thick,gray] (10.5,-.1) -- (10.5, .1);
		\draw [thick,gray] (14,-.1) -- (14, .1);
		
		\node[align=center, above]   at (0,0.1) {consumers \\ observe \\ type};
		\node[align=center, above]   at (3.5,0.1) {intermediary \\ posts entire contract \\ of service and data};
		\node[align=center, above]   at (7,0.1) {consumers \\ and producer \\ decide whether \\ to accept};
		\node[align=center, above] at (10.5,0.1) {producer sets \\ product price};
		\node[align=center, above]  at (14,0.1) {consumers \\ decide whether \\ to purchase};

	\end{tikzpicture}
	\caption{Timing of the events.}
	\label{timing}
\end{figure}

\paragraph{Discussion about Modeling Assumption.} A key modeling assumption is the perfect correlation between consumers’ taste and their willingness to pay. That is to say, $v_L\neq v_H$ in the setup. Otherwise, the data about consumers' tastes is not valuable to the producer. The relation between consumers’ taste parameter and their willingness to pay can be represented by $\rho_\theta = \text{Pr}\left(v_H\vert \theta\right)=1-\text{Pr}\left(v_L\vert \theta\right)$. The assumption reads $(\rho_H=1$, $\rho_L=0)$ or $(\rho_H=0$, $\rho_L=1)$. If we relax the assumption by $\rho_H$, $\rho_L\in\left(0,1\right)$ with $\rho_H\neq \rho_L$, the privacy results don't change.\footnote{We describe in the supplementary online appendix how our results changed/remained unchanged as we move toward the imperfect correlation to consider heterogeneity in valuations conditional on type.} Discussions on other assumptions are relegated to~\Cref{discussion}.

\section{Preliminaries} \label{preliminaries}

The intermediary's total payoff consists of profit in the service market $\mu_0 f_H + \left(1-\mu_0\right)f_L$ and profit in the data market $T$. That is,
\begin{equation}
	\mu_0 f_H + \left(1-\mu_0\right)f_L + T. \label{payoff_intermediary}
\end{equation}

Before digging into the intermediary optimal mechanism, we first consider the decisions of consumers and the producer {\it after} observing the intermediary's contract for both service and data markets and {\it before} observing a signal realization.

\paragraph{Interim Payoff of Producer.} In the product market, a consumer of $\theta$ will purchase the product if and only if $v_\theta\geq p$ at the product price $p$. After observing a signal $s\in\mathcal S$ from the traded data $\pi$, the producer's posterior belief is $\mu_s\equiv\text{Pr}(\theta =H \vert s)$ or $\text{Pr}(v =v_H \vert s)$
\begin{equation}
	\mu_s = \frac{\mu_0 \pi\left(s\vert H\right) }{\mu_0 \pi\left(s\vert H\right) +\left(1-\mu_0\right)\pi\left(s\vert L \right)} \label{posteriors}
\end{equation}
Hence, for each $s\in \mathcal S$ the producer's demand function reads
\begin{equation*}
	D\left(p;s \right) \equiv \text{Pr}\left(v\geq p\vert s \right)
	=\left\{
	\begin{array}{cc}
		0      &\text{if } p > v_H \\
		\mu_s  &\text{if } v_L< p \leq v_H \\
		1      &\text{if } p \leq v_L 
	\end{array} 
	\right. 
\end{equation*}
So the producer's product pricing decision after observing $s$ is
\begin{equation}
	p_s
	\in \left\{
	\begin{array}{cc}
		\{v_H\}      &\text{if } \mu_s > \frac{v_L}{v_H} \\
		\{v_L, v_H \}  &\text{if }  \mu_s = \frac{v_L}{v_H} \label{prod_price}\\
		\{ v_L\}      &\text{if } \mu_s < \frac{v_L}{v_H}
	\end{array} 
	\right. 
\end{equation}
and the corresponding product revenue is $\max\{ v_H\mu_s, v_L \}$.

At the time of data trade, the producer's expected product revenue from accepting the data trade offer $(\{\mathcal S,\pi \},T )$ is
\begin{equation}
	R( \pi ) = \mathbb{E}_{\pi}\left[ \max \left\{ v_H\mu_s, v_L \right\} \right] \label{prod_reve}
\end{equation}
The producer’s payoff is $R(\pi)-T$ if accepting the data offer.

\paragraph{Interim Payoff of Consumers.} Given the intermediary's contract, consumers of type $\theta$'s total interim payoff from accepting the offer and reporting $\theta'$ is
\begin{eqnarray}
	u ( \theta,\theta';\pi ) = V - \left( x_{\theta'} - \theta \right)^2 - f_{\theta'}
	+ \mathbb{E}_{\pi\left( s \vert \theta'\right) } \left[ \max \left\{ v_\theta  - p_s,0 \right\} \right] \label{payoff_consumer}
\end{eqnarray}
%\begin{eqnarray}
%	u \left( \theta,\theta' \right) = V - \underbrace{ \left( x_{\theta'} - \theta \right)^2 }_{\text{ mismatch in service} } - f_{\theta'}
%	+ \underbrace{\mathbb{E}_{\pi\left( s \vert \theta'\right) } \left[ \max \left\{ v_\theta  - p_s,0 \right\} \right] }_{\text{product market} } 
%\end{eqnarray}
where if a consumer of $\theta$ reports $\theta'$, the recommendation bias is measured by $\left( x_{\theta'} - \theta \right)^2$, and the continuation value of accepting is $\mathbb{E}_{ \pi\left( s \vert \theta'\right) } \left[ \max \left\{ v_\theta - p_s,0 \right\} \right]$.

The following investigates, given an intermediary's contract, the optimal decisions of the producer and consumers, which then impose constraints on the intermediary's design problem.

\subsection{Producer Optimality} 
\begin{comment}
In the product market, for a product price $p$, a consumer $\theta$ will purchase the product if and only if $v_\theta\geq p$. The producer who does not know consumers’ type and updates the prior into posterior belief $\mu_s$ after observing signal $s\in\mathcal S$ from the traded data $\pi$
\begin{equation}
	\mu_s = \frac{\mu_0 \pi\left(s\vert H\right) }{\mu_0 \pi\left(s\vert H\right) +\left(1-\mu_0\right)\pi\left(s\vert L \right)} \label{posteriors}
\end{equation}
So for each $s\in \mathcal S$ the producer's demand function is 
\begin{equation*}
	D\left(p;s \right) \equiv \text{Pr}\left(v\geq p\vert s \right)
	=\left\{
	\begin{array}{cc}
		0      &\text{if } p > v_H \\
		\mu_s  &\text{if } v_L< p \leq v_H \\
		1      &\text{if } p \leq v_L 
	\end{array} 
	\right. 
\end{equation*}

	and its profit function is
	\begin{equation*}
	p D\left(p;s \right)
	=\left\{
	\begin{array}{cc}
	0      &\text{if } p > v_H \\
	p\mu_s  &\text{if } v_L < p \leq v_H \\
	p      &\text{if } p \leq v_L 
	\end{array} 
	\right. 
	\end{equation*}

Hence, the producer's pricing decision after observing $s$ in the product market is
\begin{equation}
	p_s
	\in \left\{
	\begin{array}{cc}
		\{v_H\}      &\text{if } \mu_s > \frac{v_L}{v_H} \\
		\{v_L, v_H \}  &\text{if }  \mu_s = \frac{v_L}{v_H} \label{prod_price}\\
		\{ v_L\}      &\text{if } \mu_s < \frac{v_L}{v_H}
	\end{array} 
	\right. 
\end{equation}
and the corresponding product revenue is $\max\{ v_H\mu_s, v_L \}$.
\end{comment}
The producer buys data to make the product pricing decision. As there are two product prices $\left\{v_L, v_H \right\}$, it is without loss of generality that the data consists of two signals $\mathcal S = \left\{l,h\right\}$. The data trade takes the form of price recommendation for the producer \citep{bergemann2016bayes}. Let the producer charge $v_L$ after observing signal $l$ and $v_H$ after observing signal $h$. That is,
\begin{equation*}
	p_l = v_L\text{,  } p_h = v_H 
\end{equation*}	
Given a data trade $\pi$, the producer charges a consumer of type $\theta$ a product price $v_L$ with probability $\pi\left( l \vert \theta\right)$ and $v_H$ with probability $\pi\left(h\vert \theta\right)$. We have the following result without loss of generality by \citet{bergemann2016bayes}.
\begin{lem}[Direct Data Trade]\label{lem_binary}
	The trade data consists of two signals $\mathcal S =\left\{l,h\right\}$ with
	\begin{eqnarray}
		\mu_h &\geq& \frac{v_L}{v_H} \label{obh_0} \\
		\mu_l &\leq& \frac{v_L}{v_H} \label{obl_0}.
	\end{eqnarray}
\end{lem}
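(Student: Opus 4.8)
The plan is to invoke the Bayesian persuasion / "Bayes correlated equilibrium" machinery of \citet{bergemann2016bayes} and reduce an arbitrary data offer $(\{\mathcal S,\pi\},T)$ to a canonical one supported on two signals that act as obedient price recommendations. First I would observe that in the product market the producer has exactly two undominated actions, namely charging $p=v_L$ or $p=v_H$ (any other price is weakly dominated: prices above $v_H$ sell to nobody, prices in $(v_L,v_H)$ are dominated by $v_H$, and prices below $v_L$ by $v_L$), as already recorded in the display for $p_s$ in \eqref{prod_price}. Hence for any signal structure $\pi$ the producer's optimal behavior is described by a (possibly randomized) map from posteriors $\mu_s$ to $\{v_L,v_H\}$, with the tie-break at $\mu_s=v_L/v_H$ irrelevant for revenue.

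Next I would apply the revelation-principle-style argument for information: given any $(\{\mathcal S,\pi\})$ together with the producer's optimal pricing rule $p_s$, define a new signal space $\mathcal S'=\{l,h\}$ and a garbling that sends the realized $s$ to the label $h$ whenever the producer would charge $v_H$ (i.e. whenever $\mu_s\ge v_L/v_H$) and to $l$ whenever it would charge $v_L$. Concretely, set $\pi'(h\mid\theta)=\sum_{s:\,p_s=v_H}\pi(s\mid\theta)$ and $\pi'(l\mid\theta)=\sum_{s:\,p_s=v_L}\pi(s\mid\theta)$, splitting the tie-set arbitrarily. Because this is a deterministic function of $s$, the resulting experiment is Blackwell-less-informative, and by the law of iterated expectations the induced posteriors $\mu_h$ and $\mu_l$ are averages of the original $\mu_s$'s over the respective groups; consequently $\mu_h\ge v_L/v_H$ and $\mu_l\le v_L/v_H$, which are exactly \eqref{obh_0} and \eqref{obl_0}, i.e. the obedience constraints that make "charge $v_H$ after $h$, charge $v_L$ after $l$" a best response. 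I would then check that nothing of interest is lost: the producer's expected revenue $R(\pi)=\mathbb E_\pi[\max\{v_H\mu_s,v_L\}]$ is unchanged, since on each group the conditional expected revenue $\mathbb E[\max\{v_H\mu_s,v_L\}\mid \text{group}]$ equals $v_H\mu_h$ on the $h$-group and $v_L$ on the $l$-group by construction of the grouping; likewise each consumer type $\theta$ faces the same distribution over realized prices $p_s\in\{v_L,v_H\}$, so the continuation value $\mathbb E_{\pi(\cdot\mid\theta)}[\max\{v_\theta-p_s,0\}]$ in \eqref{payoff_consumer} is preserved, and hence all of the consumers' participation and truth-telling constraints, as well as the intermediary's objective \eqref{payoff_intermediary}, are untouched. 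Therefore replacing $\pi$ by $\pi'$ is without loss for the intermediary's problem.

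The one subtlety I expect to be the main obstacle is the treatment of the tie posterior $\mu_s=v_L/v_H$ and, more generally, making sure the reduction is valid not merely for a fixed mechanism but uniformly, so that the restriction to two obedient signals does not shrink the feasible set of the intermediary's optimization. I would handle the tie by noting that at $\mu_s=v_L/v_H$ both prices yield identical revenue $v_L=v_H\mu_s$ and the consumers' continuation values depend only on the realized price, so assigning such $s$ to either group (or splitting it) changes nothing; any convenient convention works. For the uniformity point, the argument above is pointwise in $\pi$ but that is all that is needed: given any incentive-feasible mechanism using some rich $\pi$, we exhibit an incentive-feasible mechanism using a two-signal obedient $\pi'$ with the same intermediary payoff, so the optimum over the restricted class equals the optimum over the general class. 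I would close by remarking that conditions \eqref{obh_0}--\eqref{obl_0} are also clearly sufficient — if they hold, then by \eqref{prod_price} charging $v_H$ after $h$ and $v_L$ after $l$ is indeed optimal for the producer — so the two-signal obedient representation is both without loss and exactly the right normalization to carry into the rest of the analysis.
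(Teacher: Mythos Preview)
Your proposal is correct and follows essentially the same approach as the paper: the paper's proof simply merges signals according to the side of the threshold $v_L/v_H$ on which their posteriors fall, citing \citet{bergemann2016bayes}, which is exactly your construction $\pi'(h\mid\theta)=\sum_{s:\,p_s=v_H}\pi(s\mid\theta)$, $\pi'(l\mid\theta)=\sum_{s:\,p_s=v_L}\pi(s\mid\theta)$. Your version is more thorough than the paper's one-line sketch---in particular, your verification that consumers' continuation values (and hence all of the intermediary's constraints) are preserved, and your treatment of the tie posterior, fill in details the paper leaves implicit.
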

We call inequalities~\eqref{obh_0} and~\eqref{obl_0} the producer's \textit{obedience} constraints. Under the constraints, the product pricing is the producer's best response to the direct data trade in which each signal is a price recommendation to the producer: pricing $v_L$ after signal $l$ and $v_H$ after signal $h$.

Plugging~\eqref{posteriors} into~\eqref{obh_0}~\eqref{obl_0}, the obedience constraints for the producer to follow pricing recommendations are summarized into
\begin{equation}
	 \frac{v_L}{v_H}\left( 1- \mu_0 \right) \pi\left(l\vert L\right) - \left( 1- \frac{v_L}{v_H} \right)\mu_0 \pi\left(l\vert H\right)  
\geq \max \left\{ \frac{v_L}{v_H} - \mu_0, 0\right\} \label{obs}
\end{equation}
%\begin{eqnarray}
%	\frac{v_L}{v_H}\left( 1- \mu_0 \right) \pi\left(l\vert L\right) - ( 1- \frac{v_L}{v_H} )\mu_0 \pi\left(l\vert H\right)  &\geq& \frac{v_L}{v_H} - \mu_0 \label{obh}\\
%\frac{v_L}{v_H}\left( 1- \mu_0 \right) \pi\left(l\vert L\right) - ( 1- \frac{v_L}{v_H } ) \mu_0 \pi\left(l\vert H\right) &\geq& 0 \label{obl}
%\end{eqnarray}

Now we turn to the producer's acceptance decision over the intermediary's data trade offer $\left(\left\{\mathcal S, \pi\right\},T\right)$. When the producer does not accept the intermediary's data offer, he has no more information than the prior $\mu_0$ and then the revenue is $R_0 = \max \{ v_H\mu_0,v_L \}$. The producer accepts the data offer if and only if the product revenue from accepting is weakly larger than the one from rejecting
\begin{equation}
	R\left( \pi \right) - T \geq R_0 \label{irp}
\end{equation}
where $R(\pi)$ is from~\eqref{prod_reve}. 

In the following, we discuss the value of data to the producer under the direct data trade. 
%\paragraph{Value of Data.} The revenue in the product market after observing $s$ is $\max \left\{ v_H\mu_s, v_L \right\}$. At the time of data trade, the producer's expected revenue from accepting the data offer $\pi$ is \begin{equation*} R\left( \pi \right) = \mathbb{E}_{\pi}\left[ \max \left\{ v_H\mu_s, v_L \right\} \right] \end{equation*}	
Given bought data $\pi$, the producer charges $v_L$ (and all the consumers purchase the product)  with a probability $\mu_0 \pi(l\vert H) + ( 1-\mu_0 ) \pi(l\vert L)$, charges $v_H$ (and only the type-H consumers will purchase the product) with a probability $\mu_0 \pi(h\vert H) + ( 1-\mu_0 ) \pi(h\vert L)$. Hence, we can rewrite the revenue~\eqref{prod_reve} as
\begin{eqnarray*}
	R\left(\pi\right) 
	&=& \left[ \mu_0 \pi\left(l\vert H\right) + \left( 1-\mu_0 \right) \pi\left(l\vert L\right) \right] v_L + \left[ \mu_0 \pi\left(h\vert H\right) + \left( 1-\mu_0 \right) \pi\left(h\vert L\right) \right] \mu_h v_H  \\
	&=& \left[ \mu_0 \pi\left(l\vert H\right) + \left( 1-\mu_0 \right) \pi\left(l\vert L\right) \right] v_L +  \mu_0 \pi\left(h\vert H\right) v_H 
\end{eqnarray*}
where the second equality follows~\eqref{posteriors}.

%Instead, if the producer rejects the intermediary's data offer, he has no more information than the prior $\mu_0$. The optimal product price is 
%  \begin{equation} 
%	p_0 
%	\in \left\{
%	\begin{array}{cc}
%		\{v_H\}      &\text{if } \mu_0 > \frac{v_L}{v_H}  \\
%		\{v_L, v_H \}  &\text{if }  \mu_0 = \frac{v_L}{v_H} \label{q0}\\
%		\{ v_L\}      &\text{if } \mu_0 < \frac{v_L}{v_H}
%	\end{array} 
%	\right. \end{equation}
%and the revenue is then $R_0 = \max \{ v_H\mu_0,v_L \}$. 

The net gain from the data trade to the producer is	then	
\begin{eqnarray}
	   R\left(\pi\right) - R_0 
	&=& \left[ \mu_0 \pi\left(l\vert H\right) + \left( 1-\mu_0 \right) \pi\left(l\vert L\right) \right] v_L +  \mu_0 \left(1-\pi\left(l\vert H\right) \right) v_H  - \max\{ v_H\mu_0, v_L \} \notag \\
	&=&-\mu_0 \left(v_H - v_L \right) \underbrace{ \left[ \pi\left(l\vert H\right) - \mathbbm{1}_{\mu_0<\frac{v_L}{v_H}} \right] }_{\text{ surplus extraction } }
	+\left(1-\mu_0\right) v_L \underbrace{\left[ \pi\left(l\vert L\right) - \mathbbm{1}_{\mu_0<\frac{v_L}{v_H}} \right]}_{\text{market inclusion} } \label{netgain}
\end{eqnarray}
where $\mathbbm{1}$ denotes the indicator function. %\footnote{The special case of $\mu_0=v_L/v_H $ is equivalent to either the case $\mu_0>v_L/v_H$ or $\mu_0<v_L/v_H$.} 
The data trade has two effects on the producer's revenue. The first is the change in the probability of serving the type-L consumers. With data trade the producer sells to the type-L consumers (of mass $(1-\mu_0)$) with probability $\pi(l\vert L)$. Instead, without data trade the type-L consumers are served with probability $\mathbbm{1}_{\mu_0<\frac{v_L}{v_H}} $. This is the extensive market \textit{inclusion} effect of data trade. 

The second effect is the change in the probability of yielding surplus to the type-H consumers. With data trade the producer charges the type-H consumers (of mass $\mu_0$) a low price $v_L$ with probability $\pi(l\vert H)$. Without data trade the type-H consumers are charged the low price $v_L$ with probability $\mathbbm{1}_{\mu_0<\frac{v_L}{v_H}} $. This is the intensive surplus \textit{extraction} effect of data trade.\footnote{It is direct that when $\mu_0<\frac{v_L}{v_H}$, the market inclusion effect is negative: The market serving the type-L consumers shrinks. The surplus extraction effect is positive: The producer extracts the type-H consumer surplus with a larger probability. Instead, when $\mu_0>\frac{v_L}{v_H}$, the market inclusion effect is positive: The market serving the type-L consumers expands. the surplus extraction effect is negative: the producer extracts the type-H consumers surplus with a smaller probability. }

The triangle gray set in~\Cref{fig_tradegain} illustrates the feasible data trade gain, which depends on the prior. Take $\mu_0>v_L/v_H$ as an example. Without data trade, the revenue is $\mu_0v_H$ which is at point $A$. It can go up to point $B$ under a data trade which induces posteriors $\mu_l$ and $\mu_h$. If the traded data is of full information, the intermediary extracts the maximum data trade surplus given $\mu_0$, which is $\mu_0 v_H + \left( 1- \mu_0 \right) v_L$.

\begin{figure}[h]
	\begin{center}
		\begin{tikzpicture}%[scale=1]	
			\draw [very thick,gray,<->] (0,5.5) node [above,black] {$R(\mu)$}--(0,5) node [left,black] {$v_H$} --(0,2) node [left,black] {$v_L$}--(0,0) node [below left,black] {0}--(1,0) node [below,black] {$\mu_l$}--(2,0) node [below,black] {$\frac{v_L}{v_H}$}--(3.5,0) node [below,black] {$\mu_0$}--(4.5,0) node [below,black] {$\mu_h$}--(5,0) node [below,black] {1}--(5.5,0) node [below right,black] {$\mu$};
			
			% draw vertical lines
			\foreach \x in {1,2,3.5,4.5,5}
			\draw (\x cm,4pt) -- (\x cm,-0pt);		
			\draw [very thick] (0,2)--(2,2)--(5,5);
			%\draw [dotted] (0,2)--(5,5);	
			\draw [fill=gray!20!white] (0,2) -- (2,2) -- (5,5);	
			
			\draw [dashed] (1,0)--(1,2) (4.5,0)--(4.5,4.5);
			\draw [dashed] (3.5,0)--(3.5,2+2.5*2.5/3.5);
			\draw [red, fill=red] (3.5,2+2.5*2.5/3.5) circle (1.5pt);
			\draw [black, fill=black] (3.5,3.5) circle (1.5pt);
			
			\draw[->,dotted] (3,4.05) node [above] {\color{red}$B$} -- (3.45,3.8);
			\draw[->,dotted] (4,3.2) node [below] {$A$} -- (3.55,3.45);
			
			\draw [->, thick, dotted, blue] (3.5,0) to [out=150,in=30] (1,0);
			\draw [->, thick, dotted, blue] (3.5,0) to [out=60,in=120] (4.5,0);	
			
			\draw [red,thick] (1,2)--(4.5,4.5);
		\end{tikzpicture}	
	\caption{The triangle gray part is the feasible data trade gain from a data trade. Given $\mu_0 > v_L/v_H$: the product revenue is $\mu_0v_H$ at point $A$  without data trade. It goes up to point $B$ under a partial information data which induces posteriors $\mu_l$ and $\mu_h$. If the data is of full information, the product revenue is $\mu_0 v_H + \left( 1- \mu_0 \right) v_L$.}
	\label{fig_tradegain}	
	\end{center}
\end{figure}

\subsection{Consumers Optimality} 

\paragraph{Consumers of Type $H$.} If the type-H consumers accept the service offer $\left(\mathcal X,x,f\right)$ and report $\theta'$ to the intermediary, the payoff in~\eqref{payoff_consumer} is
\begin{equation*}
	V -  \left( x_{\theta'} - H \right)^2 - f_{\theta'}
	+ \left( v_H - v_L \right) \pi\left( l \vert \theta'\right)
\end{equation*}
where $\left( v_H - v_L \right) \pi\left( l \vert \theta'\right)$ is the expected surplus for the type-H consumers in the product market. By reporting $\theta'$, a type-H consumer will be charged $v_L$ and obtain surplus $v_H-v_L$ with probability $\pi\left(l\vert \theta'\right)$. The data trade affects the probability of getting the surplus. The type-H consumers’ reporting affects both the service and its fee in the service market and the potential surplus in the product market. Hence, the type-$H$ consumers report the truth if and only if
\begin{equation}
	V - \left( x_H - H \right)^2 - f_H +  \left( v_H - v_L \right) \pi\left( l \vert H \right) 
\geq
	V - \left( x_L - H \right)^2 - f_L +  \left( v_H - v_L \right) \pi\left( l \vert L \right) \label{ich} %\tag{$IC_H$}
\end{equation}

If the type-H consumers reject the service offer, they will get nothing in the service market. In this case, the producer will treat consumers as coming from the population and make product pricing decisions based on the prior $\mu_0$.\footnote{The equilibrium selection criterion restricts off-path beliefs to the prior, which is in the spirit of ``no signaling what you don't know" \citep{fudenberg1991perfect}: the intermediary cannot sell data to the producer if the offer is rejected by consumers, and the producer doesn't update beliefs about consumers from a non-participation decision. See more discussions in \Cref{discussion}.} That is, the product price $p_0$ without extra data is determined by~\eqref{prod_price} with the belief $\mu_0$. So the type-H consumers’ continuation value of rejecting the intermediary's contract is
\begin{equation*}
	\max \left\{ v_H - p_0, 0 \right\}
	\in \left\{
	\begin{array}{cc}
		\left\{ 0 \right\}  &\text{if } \mu_0 > \frac{v_L}{v_H} \\
		\{ v_H - v_L,0 \}  &\text{if }  \mu_0 = \frac{v_L}{v_H} \\
		\left\{  v_H - v_L \right\}  &\text{if } \mu_0 < \frac{v_L}{v_H}
	\end{array} 
	\right. 
\end{equation*}
Hence, the type-H consumers accept the contract if and only if	
\begin{equation}
	V - \left( x_H - H \right)^2  - f_H
	  +  \left( v_H - v_L \right)  \pi\left( l \vert H \right)
	\geq \max \left\{ v_H - p_0,0 \right\} \label{irh} %\tag{$IR_H$}
\end{equation}

\paragraph{Consumers of Type $L$.} The type-$L$ consumers get nothing in the service market if rejecting the intermediary's service offer $\left(\mathcal X,x,f\right)$. In the product market the producer will charge $p_0$ and the type-L consumers' payoff is
\begin{equation*}
	\max \left\{ v_L - p_0, 0 \right\} = 0.
\end{equation*}
If the type-L consumers accept the offer and reporting $\theta'$, the type-L consumers receive $ V - ( x_{\theta'} - L )^2 - f_{\theta'}$ in the service market and they always get zero in the product market as $\mathbb{E}_{\pi\left(s \vert\theta' \right)} [ \max \{ v_L - p_s,0 \} ] = 0$. Thus, the type-L consumers' payoff from accepting the offer and reporting $\theta'$ in~\eqref{payoff_consumer} is 
\begin{equation*}
 V - \left( x_{\theta'} - L \right)^2 - f_{\theta'}
\end{equation*}
Hence, the type-L consumers report the truth if and only if
\begin{equation}\label{icl}
	V - \left( x_L - L \right)^2  - f_{L} 
	\geq V - \left( x_H - L \right)^2  - f_{H} %\tag{$IC_L$}
\end{equation}
and accept the intermediary's contract if and only if
\begin{equation}\label{irl}
	   V - \left( x_L - L \right)^2 - f_{L} \geq 0 %\tag{$IR_L$}
\end{equation}

\subsection{Intermediary Optimality} \label{intermediary_optimality}
The intermediary maximizes the total profit~\eqref{payoff_intermediary} from both the service and data markets by specifying a contract subject to the producer's obedience constraints~\eqref{obs} %~\eqref{obh}~\eqref{obl} 
and participation constraint~\eqref{irp}, the consumers' incentive constraints~\eqref{ich}~\eqref{icl} and participation constraints~\eqref{irh}~\eqref{irl}, and the following two feasibility conditions of data trade
\begin{equation}
	\pi\left(l\vert L\right)\text{,}\text{ }\pi\left(l\vert H\right)\in\left[0,1\right]  \label{fesi} 
\end{equation}

As the intermediary is a monopolist in the data market, it extracts all the trade surplus by setting $T=R\left(\pi\right) - R_0$. The design setting is non-standard because of the presence of data trade, which endogenizes the consumer surplus. 

%The type-H consumers’ outside option relies on whether they are charged a low or high product price when there is no data trade. Depending on $\mu_0>\frac{v_L}{v_H}$ or $\mu_0<\frac{v_L}{v_H}$, the participation constraint of the type-H consumers~\eqref{irh} is
%\begin{eqnarray}
%	V - \left( x_H - H \right)^2 - f_H
%	+ \left( v_H - v_L \right) \pi\left( l \vert H \right)
%	\geq 0 &&\text{ if $\mu_0 > \frac{v_L}{v_H}$} \label{irh_largemu}\\
%	V - \left( x_H - H \right)^2 - f_H
%	+ \left( v_H - v_L \right) \pi\left( l \vert H \right)
%	\geq \Delta v &&\text{ if $\mu_0 < \frac{v_L}{v_H}$} \label{irh_smallmu}
%\end{eqnarray}

%The special parameter case of $\mu_0=\frac{v_L}{v_H}$ is captured by either~\eqref{irh_largemu} or~\eqref{irh_smallmu}. As the intermediary is a monopolist in the data market, it has full bargaining power over the data trade and extracts all the trade surplus $T=R\left(\pi\right) - R_0$.

%\pagebreak
\section{Optimal Design}\label{optimal_design}

It is ready to characterize the intermediary optimal mechanism. \Cref{binary_service} assumes the binary service to focus on the role of data trade; \Cref{general_service} returns to the general service to complete the characterization of intermediary optimal design of data collection and trade. 

\subsection{Special Case of Binary Service}\label{binary_service}

To focus on the role of data trade, this subsection assumes a binary service $\mathcal X = \left\{L,H\right\}$,\footnote{One can treat the assumption as a delegated provision of service by the intermediary to its department of marketing, while the data trade is directly controlled by the intermediary. The marketing department’s choice of service level does not coordinate with the data trade policy. This is one of the practices: the intermediary is usually not the producer of service, but recommends service.} under which there is no distortion in the service provided to consumers in the intermediary optimal mechanism. %\footnote{As we will see in~\Cref{general_service}, $x_H=H$ holds even in a general service space. Moreover, in a general service $x_L=L$ is true if $\delta^2\geq\Delta v$; see a formal discussion in~\Cref{general_service}.}   

\begin{lem}\label{lem_2service}
	If $\mathcal X = \left\{L,H\right\}$, then $x_H = H$ and $x_L = L$.
\end{lem}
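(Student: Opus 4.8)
The plan is to show that for any fixed data-trade component $(\{\mathcal S,\pi\},T)$ of the intermediary's contract, replacing the service rule by the ``matched'' rule $(x_L,x_H)=(L,H)$ while leaving the fees $(f_L,f_H)$ unchanged keeps the contract incentive feasible and does not lower the intermediary's profit~\eqref{payoff_intermediary}. Since the producer's obedience constraints~\eqref{obh}--\eqref{obl}, the producer's participation constraint~\eqref{irp}, and the data feasibility conditions~\eqref{fesi} do not involve $x$ or $f$, and since $T=R(\pi)-R_0$ is pinned down by $\pi$ alone (the intermediary extracting all trade surplus), the only constraints that could be affected by changing $x$ are the four consumer conditions~\eqref{ich},~\eqref{icl},~\eqref{irh},~\eqref{irl}. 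So the whole argument reduces to a monotone-relaxation / sign check on how the mismatch terms enter these four inequalities.

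Concretely, I would first record that for $\mathcal X=\{L,H\}$ one has $(x_H-H)^2\ge 0$ and $(x_L-L)^2\ge 0$, with equality iff $x_H=H$ (resp. $x_L=L$), and $(x_H-L)^2\le\delta^2$, $(x_L-H)^2\le\delta^2$. Now hold $\pi$ and $(f_L,f_H)$ fixed and move $x_H$ to $H$ and $x_L$ to $L$. In~\eqref{ich} the left side carries $-(x_H-H)^2$, which weakly increases, and the right side carries $-(x_L-H)^2$, which weakly decreases, so~\eqref{ich} is relaxed; the same move makes the left side of~\eqref{irh} weakly larger, relaxing it. Symmetrically, the left side of~\eqref{irl} carries $-(x_L-L)^2$, which weakly increases, and in~\eqref{icl} the left side gains while the right side (carrying $-(x_H-L)^2$) weakly falls, so both~\eqref{irl} and~\eqref{icl} are relaxed. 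Hence the modified contract is incentive feasible with the same fees and the same data trade, so its value in~\eqref{payoff_intermediary} is unchanged. This already yields that it is without loss of optimality to take $x_H=H$, $x_L=L$.

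To obtain the stated equality I would then rule out distorted service as strictly suboptimal. If $x_H=L$ in a candidate optimum, then after the replacement~\eqref{ich} and~\eqref{irh} become slack by the strictly positive amount $\delta^2>0$ (using $H\neq L$); since $f_H$ enters the consumer constraints only through $-f_H$ in~\eqref{ich} and~\eqref{irh} (and increasing it only helps~\eqref{icl}, while~\eqref{irl} is free of $f_H$), $f_H$ is bounded above precisely by those two constraints, so one can raise $f_H$ and strictly increase~\eqref{payoff_intermediary}. The case $x_L=H$ is handled identically, using the slack created in~\eqref{irl} and~\eqref{icl} and raising $f_L$. Hence any mechanism with $x\neq(L,H)$ is dominated, and the optimal mechanism has $x_H=H$, $x_L=L$.

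The step that needs genuine care, as opposed to bookkeeping, is the strictness argument: one must confirm that in an optimum the relevant fee is actually pinned down by a constraint that has just been slackened, i.e. that $f_H$ (resp. $f_L$) was not already sitting at some other ceiling. I expect this to be routine here, precisely because the sign check in the second step shows each fee enters the consumer constraints monotonically and is bounded above only by the two constraints associated with its own type; but it is the one place where I would actually write the details rather than assert them.
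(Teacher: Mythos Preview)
Your argument is correct. The paper's own proof goes through a slightly more general auxiliary lemma (stated for arbitrary $\mathcal X$): it shows $x_H\geq H$ and $x_L\leq L$ by a simultaneous perturbation---raising $f_H$ while lowering $(x_H-H)^2$ by the same amount, and checking each constraint is preserved (with a case split on $x_H<L$ versus $L\leq x_H<H$ needed for $IC_L$)---and then specializes to $\mathcal X=\{L,H\}$. Your two-step decomposition (first switch to the matched service rule keeping fees fixed, then raise the fee whose ceiling has been slackened) is the same idea packaged differently, and is in fact a bit cleaner in the binary case since it avoids that case split entirely. Your closing hesitation is unnecessary: you already verified that $f_H$ appears only in \eqref{ich}, \eqref{irh}, \eqref{icl}, tightening the first two and relaxing the third, so the only upper bounds on $f_H$ are precisely the constraints you just slackened by $\delta^2>0$; the strict improvement therefore goes through without further work, and symmetrically for $f_L$.
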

We make the following short notations for exposures of results.
\begin{nota}
	$\delta \equiv H- L$, $\Delta v \equiv v_H - v_L$, and $C\equiv (1-\mu_0) v_L/\left(\mu_0 \Delta v \right)$. \label{notation}
\end{nota}
The parameter $\delta^2$ captures the variation in consumers' tastes, reflecting the {\it horizontal} differentiation level in the service market, while $\Delta v$ captures the variation in consumers' willingness to pay, measuring the {\it vertical} differentiation level in the product market.
When the service is binary, the service distortion if any is fixed at $\delta^2$. Distorting a service provision requires compensation to consumers. Instead, without any distortion of service provisions, the intermediary can extract the most consumer surplus by charging a higher service fee. Hence, the intermediary will not distort the service provided to consumers.

The intermediary's optimal mechanism for data collecting and selling is summarized in the following proposition.
\begin{prop}[Optimal data\textemdash binary service]\label{prop_binary}
	Given a binary service space. The optimal data trade depends on the \textit{cross-market} comparison between the horizontal differentiation level in the service $\delta^2$ and the vertical differentiation in the product $\Delta v$:
	%the consumer surplus $\Delta v$ in the product.
	
	\begin{enumerate}[label=\arabic*.]
		
		\item Suppose $\mu_0>v_L/v_H$. The intermediary's data trade is $\pi\left(l\vert L\right)=\min\{ \frac{\delta^2}{\Delta v}, 1\}$, $\pi\left(l\vert H\right)$ can be anywhere between $0$ and $C\min\{ \frac{\delta^2}{\Delta v}, 1\}$.
		
		       %\begin{enumerate}
		       %	\item If $\delta^2 \geq \Delta v$, the intermediary's data trade is
		       %	$\pi\left(l\vert L\right)=1$, $\pi\left(l\vert H\right)$ can be anywhere between $0$ and $C$.
		       	%The data trade fee is $T = -\mu_0\pi\left(l\vert H\right)\Delta v + \left(1-\mu_0\right)v_L$.
		       	
		       %	\item If $\delta^2 < \Delta v$, the intermediary's data trade is $\pi\left(l\vert L\right)=\frac{\delta^2}{\Delta v}$, $\pi\left(l\vert H\right)$ can be anywhere between $0$ and $\frac{C\delta^2}{\Delta v}$.
		       	%The data trade fee is $T = -\mu_0\pi\left(l\vert H\right)\Delta v + \left(1-\mu_0\right) v_L \frac{\delta^2}{\Delta v}$.
		       %\end{enumerate}
		
		\item Suppose $\mu_0<v_L/v_H$. The intermediary's optimal data trade is 
			$\pi\left(l\vert L\right) = 1$, $\pi\left(l\vert H\right)$ can be anywhere between $\max\left\{1-\frac{\delta^2}{\Delta v}, 0\right\}$ and $1$. %\pi\left(l\vert H\right) \in\left[\max\left\{1-\frac{\delta^2}{\Delta v}, 0\right\},1\right].
		
	\end{enumerate}
	
\end{prop}

The obedience constraints of the producer in~\eqref{obs}, the producer's revenue without any data in~\eqref{irp}, and the type-H consumers' participation constraint in~\eqref{irh} depend on the market prior $\mu_0$. The following illustrates the results, taking $\mu_0 > v_L/v_H$ as an example.\footnote{The detailed results of $\mu_0 < v_L/v_H$ are relegated to the appendix. The special parameter of $\mu_0=v_L/v_H$ is captured by either case, but it also raises more potential solutions. To see this, note that at $\mu_0=v_L/v_H$, a) the constraint~\eqref{obs} falls into the case of $\mu_0 > v_L/v_H$; b) $p_0$ equals to $v_L$ or $v_H$ and $R_0=v_L$ or $v_H\mu_0$, but~\eqref{irp} doesn't affect analysis because $T=R(\pi)-R_0$ and $R_0$ enters the objective as a constant; c)~\eqref{irh} becomes either~\eqref{irh_largemu} of the case of $\mu_0 > v_L/v_H$, or~\eqref{irh_smallmu} of the case of $\mu_0 < v_L/v_H$. Thus, the special case of $\mu_0=v_L/v_H$ could raise more solutions as there is an extra combination of constraints including~\eqref{obl} and~\eqref{irh_smallmu} together. This combination of constraints affects the data fee $f_H$, but does not affect the results about data trade.} 
When $\mu_0> v_L/v_H$, the obedience constraint for the producer to charge a high price~\eqref{obh_0} is redundant and the obedience constraint for the producer to charge a low price reads
\begin{equation}
	 \pi\left(l\vert H\right) \leq C \pi\left(l\vert L\right) \label{obl}
\end{equation} 
where $C \in \left(0,1\right)$ is in~\Cref{notation}.
The type-H consumers will be charged a product price $p_0=v_H$ when they reject the service offer, for $\mu_0> v_L/v_H$. Hence, the type-H consumers’ outside option $\max\left\{v_H-p_0,0\right\}$ is zero. The participation constraint of the type-H consumers~\eqref{irh} is
\begin{equation}
	V - \left( x_H - H \right)^2 - f_H
	+ \left( v_H - v_L \right) \pi\left( l \vert H \right)
	\geq 0 \label{irh_largemu}
\end{equation}

The producer charges $v_H$ without buying any data. The type-H consumers purchase the product and the type-L consumers are excluded from the market. The producer’s revenue is $R_0 = \mu_0 v_H$. The producer's net gain from data trade in~\eqref{netgain} becomes
\begin{equation*}
	R\left(\pi\right) - R_0 
	=\underbrace{ - \mu_0\left(v_H - v_L \right) \pi\left(l\vert H\right) }_{\text{surplus extraction } }
	+\underbrace{\left(1-\mu_0\right) v_L \pi\left(l\vert L\right) }_{\text{market inclusion} }.
\end{equation*}
The market inclusion effect of data trade is positive $(1-\mu_0)v_L>0$: The data trade extends the product market by allowing the producer to serve the type-L consumers (of mass $1-\mu_0$) with a higher probability $\pi (l\vert L )>0$. The surplus extraction effect of data trade is negative $- \mu_0 \Delta v <0$. The producer extracts less type-H consumer surplus with data trade. With data trade, the producer charges the type-H consumers a low price $v_L$ with probability $\pi\left(l\vert H\right)$. Instead, without any data trade, all the type-H consumers are charged the high price $v_H$ and the intermediary does not yield any surplus to consumers.

\paragraph{Irrelevance of $\pi\left(l\vert H\right)$ in Intermediary's Objective.} The intermediary extracts all the surplus from both types of consumers in the service and product markets, the participation constraints of both types of consumers are binding.\footnote{This is different from the standard mechanism design: %, where the type-H consumers have some surplus as information rent
all consumer surpluses are endogenous and extracted in the optimal mechanism with data trade.} That is,
\begin{equation*}
	f_H = V + \pi\left(l\vert H\right) \Delta v,\text{ }
	f_L = V .
\end{equation*}
%\begin{eqnarray*}
%	f_H &=& V + \pi\left(l\vert H\right) \Delta v  \\
%	f_L &=& V 
%\end{eqnarray*}
In the optimal mechanism, each type-H consumer has surplus $V$ from the service and gets surplus $\Delta v$ with probability $\pi\left(l\vert H\right)$ from the product. Each type-L consumer has a surplus $V$ from the service and nothing from the product. 

However, in the data trade market, the intermediary needs to shift surplus $\Delta v$ with probability $\pi\left(l\vert H\right)$ to each type-H consumer. The intermediary in the data market works as the producer in the product market because it extracts all the data trade surplus. Due to the surplus-shifting, the probability of charging type-H consumers a low price $\pi\left(l\vert H\right)$ will not affect the intermediary's payoff. To see this, plugging the binding $IR_L$~\eqref{irl} and $IR_H$~\eqref{irh_largemu}: $f_L =V$, $f_H = V + \pi\left(l\vert H\right) \Delta v$ into the intermediary's objective function
\begin{equation*}
	V + \mu_0 \underbrace{  \pi\left(l\vert H\right) \Delta v }_{\text{surplus-extracting}} 
  - \mu_0 \underbrace{ \pi\left(l\vert H\right)\Delta v}_{\text{surplus-shifting }} + \left(1-\mu_0\right)\pi\left(l\vert L\right) v_L 
\end{equation*}
where the profit from the service market is $V+\mu_0\pi\left(l\vert H\right)\Delta v$ and from the data market is $-\mu_0\pi\left(l\vert H\right)\Delta v + \left(1-\mu_0\right)\pi\left(l\vert L\right) v_L$. The surplus-shifting and the surplus-extracting offset each other.\footnote{It is because the intermediary has full bargaining power over the data trade surplus in the data market. \Cref{bargaining_power} instead looks at the partial bargaining power of the intermediary.} 

\paragraph{Intermediary's Tradeoff.} With binding $IR_L$~\eqref{irl} and $IR_H$~\eqref{irh_largemu}, then $IC_L$~\eqref{icl} will be redundant, the intermediary only needs to consider $OB_L$~\eqref{obl} and $IC_H$~\eqref{ich}. The intermediary's reduced problem is to solve 
\begin{eqnarray*}
	&\max_{\pi}&V + \left(1-\mu_0\right)\pi\left(l\vert L\right) v_L \\
	&\text{subject to} &\pi\left(l\vert L\right) \Delta v  - \delta^2 \leq 0 \\
	&&\pi\left(l\vert H\right) \leq C \pi\left(l\vert L\right)
\end{eqnarray*}
In the first constraint of $\pi\left(l\vert L\right) \Delta v - \delta^2 \leq 0$, if a type-H consumer reports the truth, she will get nothing. Instead, if she misreports, she will get the surplus $\Delta v$ with probability $\pi\left(l\vert L\right)$ but incur mismatch loss in the service $\delta^2$. To incentivize the type-Hs to tell the truth, the intermediary’s choice of $\pi\left(l\vert L\right) $ has to satisfy $\pi\left(l\vert L\right) \Delta v \leq \delta^2$.

The above reduced problem clarifies the tradeoff faced by the intermediary. The type-H consumers’ expected surplus in the product market by misreporting, $\pi\left(l\vert L\right) \Delta v$, is bounded by the mismatch in the service $\delta^2$. However, the intermediary prefers $\pi\left(l\vert L\right)$ as high as possible because it can extract the type-L consumers’ surplus with a higher probability in the product market.

Intuitively, there are two instruments for the intermediary to incentivize the type-H consumers to tell the truth: lowering the probability of charging $v_L$ for those who report $L$, and the mismatch (wrong service) itself as a penalty. The mismatch penalty is preferred by the intermediary because it does not require to lower $\pi\left(l\vert L\right)$. 

When the horizontal differentiation level is weakly above the vertical one, $\delta^2\geq \Delta v$, the mismatch cost in service is high enough as a penalty. In this case, the type-H consumers will not misreport and the intermediary would like to set $\pi\left(l\vert L\right) $ as high as possible, that is $\pi\left(l\vert L\right) =1$. The type-L consumers are all served in the product market. 
%In this case, the total profit of the intermediary is $V+\left(1-\mu_0 \right) v_L$. Social welfare is the sum of consumers surplus $0$, producer surplus $\mu_0v_H$ and intermediary surplus $V+\left(1-\mu_0\right)v_L$, which is $V + \left(1-\mu_0\right)v_L + \mu_0 v_H$.

When the horizontal differentiation level is below the vertical one, $\delta^2< \Delta v$, the mismatch cost in service is low as a penalty for the type-H consumers' misreport. The mismatch itself cannot provide sufficient incentive for the type-H consumers to reveal their information. Instead, the intermediary would like to lower $\pi\left(l\vert L\right)$ to incentivize truth-telling to collect information. In this case, some type-L consumers are excluded from the product market. 
%The intermediary’s optimal profit from the service market is $V+\mu_0\pi\left(l\vert H\right)\Delta v$ and profit from the data trade is $-\mu_0\pi\left(l\vert H\right)\Delta v+ \left(1-\mu_0\right)v_L \frac{\delta^2}{\Delta v}$. The equilibrium total payoff to the intermediary is $V+\left(1-\mu_0\right)v_L \frac{\delta^2}{\Delta v}$ and social surplus is $V+\left(1-\mu_0\right)v_L \frac{\delta^2}{\Delta v} + \mu_0 v_H$.

%\Cref{cor_L} works as a rationale for the so-called digital ``privacy paradox": the general discrepancies between consumers’ active data-sharing behaviors and their self-stated privacy concerns. When $\mu_0>\frac{v_L}{v_H}$, trading some (even full) data is optimal. In other words, the highest privacy protection is not preferred. The results rationalize the paradox as a market outcome. Consumers would prefer to reveal some (even full) information about their private preferences even though they are informed that their revealed information might be traded and used to price against them in other markets. This is possible when the benefits of better service outweigh the cost of losing surpluses in the product. So the intermediary does not need to reduce probability $\pi\left(l\vert L\right)$. The choice of $\pi\left(l\vert H\right)$ also makes the producer obedient.

\subsection{Intermediary Optimal Mechanisms} \label{general_service}

We return to the complete characterization of optimal data collection and data trade under the service space $\mathcal X = \mathbb R$. 
The analysis shows that the fundamental tradeoff of data trade works under the general service, while admitting flexible service distortions and their effects on consumers' incentives of sharing data.

%The results rely on the market prior, similar to the binary service. 
When $\mu_0<v_L/v_H$, even allowing general service provisions, there is {\it no distortion} of service in the intermediary optimal mechanism. The results on service fee, data trade, and data fee are the same as those in \Cref{prop_binary}. %: selling full information can be optimal, 
We state them as follows.
%We relegate them and their proofs to~\Cref{general_service_smallmu0} of~\Cref{prop_general_service_smallmu0}. 

\begin{prop}[Optimal design\textemdash small $\mu_0$]\label{prop_general_service_smallmu0}
	Suppose $\mu_0<v_L/v_H$. In the optimal mechanism, there is no distortion in service provisions $x_L = L$, $x_H = H$, the service fees are $f_L=V$, $f_H = V + [ \pi( l \vert H)-1 ] \Delta v $, and the intermediary's data trade is
	\begin{equation*}
		\pi\left(l\vert L\right)=1,\text{ } \pi\left(l\vert H\right)\in\Big[\max \Big\{ 0,1-\frac{\delta^2}{\Delta v} \Big\},1\Big],
	\end{equation*}
	and the data trade fee is $T = \mu_0 [ 1-\pi\left(l\vert H\right) ]\Delta v$.
	In this case, social welfare is $V + \left(1-\mu_0\right)v_L + \mu_0 v_H.$
	%\begin{equation*}
	%	V + \left(1-\mu_0\right)v_L + \mu_0 v_H.
	%\end{equation*} 
	
\end{prop}

Now we consider the case of $\mu_0>v_L/v_H$. Although there is no distortion in service provisions (and the service fees, data trade are as in part 1 of~\Cref{prop_binary}) for $\delta^2\geq \Delta v$, the service provided to the type-L is downward distorted for $\delta^2 < \Delta v$. We summarize the results for $\mu_0>v_L/v_H$ in the following proposition.

\begin{prop}[Optimal design\textemdash large $\mu_0$]\label{prop_general_service_largemu0}
	Suppose $\mu_0>v_L/v_H$, in the optimal mechanisms: %, all the consumers surpluses are extracted. 
	%For $\delta^2\geq \Delta v$, there is no distortion in service provisions $x_L = L$, $x_H = H$, and the service fees, data trade are as in part 1 of~\Cref{prop_binary}.
	
	\begin{enumerate}[label=\arabic*.]
		
		\item For $\frac{\delta^2}{\Delta v} \geq 1$, there are no distortion in service provisions $x_L = L$, $x_H = H$, the service fees are $f_L=V$, $f_H = V + \pi( l \vert H)\Delta v $, and the intermediary's data trade is
		\begin{equation*}
			\pi\left(l\vert L\right)=1,\text{ } \pi\left(l\vert H\right)\in\left[0,C\right]
		\end{equation*}
		and the data trade fee is $	T = -\mu_0\pi\left(l\vert H\right)\Delta v + \left(1-\mu_0\right)v_L$,
		and the intermediary's total revenue is $V+(1-\mu_0)v_L$.
		%\begin{equation*}
		%	V+(1-\mu_0)v_L.
		%\end{equation*}
		In this case, social welfare is $V + \left(1-\mu_0\right)v_L + \mu_0v_H$.
		%\begin{equation*}
		 %    V + \left(1-\mu_0\right)v_L + \mu_0v_H.
		%\end{equation*}
		
		\item 
		For $\delta^2<\Delta v$, there is a downward distortion in service provided to type-$L$ consumers, but there is no service distortion for type-$H$ consumers $x_H=H$. The following holds:
		\begin{enumerate}%[label=\arabic*.]%[label=(\arabic*)]
			
			\item If $ \frac{\Delta v}{v_L+v_H}<\frac{\delta^2}{\Delta v}<1$, then the service for type-$L$ is $x_L = L - \frac{\Delta v -\delta^2}{2\delta}$, the service fees are $f_L=V - (\frac{\Delta v -\delta^2}{2\delta})^2$, $f_H = V + \pi( l \vert H)\Delta v $; and the data trade is 
			\begin{equation*}
				\pi(l\vert L) = 1,\text{ } \pi(l\vert H) \in [ 0,C]
			\end{equation*}
			the data trade fee is $T = - \mu_0 \pi(l\vert H) \Delta v  + (1-\mu_0) v_L $
			and the intermediary's total revenue is $V - (1-\mu_0) \left(\frac{\Delta v -\delta^2}{2\delta}\right)^2
			+ (1-\mu_0)v_L$.
			%\begin{equation*}
			%	V - (1-\mu_0) \left(\frac{\Delta v -\delta^2}{2\delta}\right)^2
			%	+ (1-\mu_0)v_L.
			%\end{equation*}
			In this case, social welfare is $V - (1-\mu_0)\left( \frac{\Delta v - \delta^2}{2\delta}\right)^2 + (1-\mu_0)v_L + \mu_0 v_H$.
			%\begin{equation*}
			%	V - (1-\mu_0)\left( \frac{\Delta v - \delta^2}{2\delta}\right)^2 + (1-\mu_0)v_L + \mu_0 v_H
			%\end{equation*}
			
			\item If $\frac{\delta^2}{\Delta v} \leq \frac{\Delta v}{v_L+v_H}$, then the service for type-$L$ is $x_L= L - \frac{\delta}{\Delta v} v_L$, the service fees are $f_L = V - ( \frac{\delta}{\Delta v} v_L )^2$, $f_H = V + \pi( l \vert H)\Delta v $; and the data trade is
			\begin{equation*}
				\pi(l\vert L) = \frac{\delta^2}{\Delta v} \frac{v_H+v_L}{\Delta v}, \text{ }  \pi(l\vert H) \in \left[ 0, C \frac{\delta^2}{\Delta v} \frac{v_H+v_L}{\Delta v} \right]
			\end{equation*}
			the data trade fee is $T = -\mu_0  \pi(l\vert H) \Delta v + (1-\mu_0)v_L \frac{\delta^2}{(\Delta v)^2}(v_L+v_H)$,
			and the intermediary's total revenue is $V + (1-\mu_0) \left(\frac{\delta}{\Delta v}\right)^2 v_L v_H$.
			%\begin{eqnarray*}
			%	V + (1-\mu_0) \left(\frac{\delta}{\Delta v}\right)^2 v_L v_H .
			%\end{eqnarray*} 
			In this case, social welfare is $V + (1-\mu_0) \left(\frac{\delta}{\Delta v}\right)^2 v_L v_H + \mu_0 v_H$.
			%\begin{equation*}
			%	V + (1-\mu_0) \left(\frac{\delta}{\Delta v}\right)^2 v_L v_H + \mu_0 v_H .
			%\end{equation*}
			
		\end{enumerate}
		
	\end{enumerate}
	
\end{prop}
%\begin{proof}
%	See~\Cref{pf_general_service_largemu0}.
%\end{proof}
When $\mu_0>v_L/v_H$ and the horizontal differentiation level is high relative to the vertical one (i.e., $\delta^2 \geq \Delta v$), there is no distortion of service provisions to consumers. This is because the mismatch cost in service is so high that consumers are willing to report the private information without any further distortion in service level. The intermediary sets the highest $\pi\left(l\vert L\right)=1$ and does not need to distort any service level to maintain the type-H's incentive. 
%The results on service fees and data trade are as in part 1 of~\Cref{prop_binary}.

However, when $\mu_0>v_L/v_H$ and the horizontal differentiation level is below the vertical one, $\delta^2<\Delta v$, the intermediary needs to downward distort the type-L's service. In this case, the penalty of the wrong service cannot provide enough incentive. The intermediary will manipulate the type-H consumers’ incentive via the other two instruments: the probability of surplus in the product and service provision. Although distorting the service will reduce the surplus extracted from the type-Ls in the service market, the intermediary can extract more surplus from the type-H consumers in the product market. When the benefit of more surplus from type-H consumers outweighs the cost of reduced surplus from type-L consumers, the intermediary would prefer to distort the service to the type-L consumers. This happens when the population of type-H consumers is large enough $\mu_0>v_L/v_H$. 

\begin{table}[h]
	\centering
	\caption{Main Results of Optimal Mechanism}
	\label{tab_results_general}
	
	\begin{adjustbox}{width=1\textwidth}
		\begin{tabular}{ccccc}
			Prior & Condition & Service Provision & Data Trade  \\	
			\cmidrule{1-4}   
			\multirow{3}[1]{*}{$\mu_0>\frac{v_L}{v_H}$} 
			&\textcolor{black}{$\frac{\delta^2}{\Delta v} \geq 1$} & no distortion &\textcolor{black}{$\pi(l|L)=1,\pi(l|H)\in [0,C]$}  \\
			&\textcolor{black}{$\frac{\Delta v}{v_H+v_L}\leq \frac{\delta^2}{\Delta v}<1$} & downward distortion at bottom &\textcolor{black}{$\pi(l|L)=1,\pi(l|H)\in[0,C]$} \\
			&\textcolor{black}{$\frac{\delta^2}{\Delta v} < \frac{\Delta v}{v_H+v_L}$} & downward distortion at bottom &\textcolor{black}{$\pi(l|L)=\frac{\delta^2}{\Delta v}\frac{v_H+v_L}{\Delta v},\pi(l|H)\in[0,C\frac{\delta^2}{\Delta v}\frac{v_H+v_L}{\Delta v}]$} \\
			
			\cmidrule{1-4}
			\multirow{2}[1]{*}{$\mu_0<\frac{v_L}{v_H}$}
			&\textcolor{black}{$\frac{\delta^2}{\Delta v}\geq 1$} &\textcolor{black}{no distortion} 
			& \textcolor{black}{$\pi(l|L)=1,\pi(l|H)\in [0,1]$} \\
			&\textcolor{black}{$\frac{\delta^2}{\Delta v}<1$} &\textcolor{black}{no distortion} 
			& \textcolor{black}{$\pi(l|L)=1,\pi(l|H)\in [1-\frac{\delta^2}{\Delta v},1]$} \\
			
			\cmidrule{1-4} 
		\end{tabular}
	\end{adjustbox}
	
\end{table}

Following~\Cref{prop_general_service_smallmu0} and~\ref{prop_general_service_largemu0} (summarized in~\Cref{tab_results_general}) directly, we can discuss some special data trade policies, such as full privacy and full disclosure.
\begin{cor}[Full privacy]\label{cor_L} If $\mu_0>v_L/v_H$, the null information data trade cannot be optimal. If $\mu_0<v_L/v_H$, the null information data trade is always optimal.
\end{cor} 
It says that full privacy cannot be optimal for $\mu_0>v_L/v_H$.\footnote{\cite{calzolari2006optimality} offer conditions for full privacy to be optimal. Their condition about the same sign of single-crossing conditions is violated in our setup.} To see it by contradiction. Suppose in the optimal mechanism, $\pi\left(l\vert H\right) = \pi\left(l\vert L\right)$. Then the obedience constraint~\eqref{obl} requires $(1-\frac{v_L}{v_H})\mu_0 \leq \frac{v_L}{v_H}\left(1-\mu_0\right)$ which contradicts with $\mu_0>\frac{v_L}{v_H}$. Intuitively, when $\mu_0>\frac{v_L}{v_H}$ any extra information will make it easier for the producer to price discriminate the consumers. This is because in the case of $\mu_0>\frac{v_L}{v_H}$ the market inclusion effect is positive: without any data trade all the type-L consumers are excluded in the product market. 

But the optimal data trade can feature full privacy for $\mu_0 < v_L/v_H$. As the product market has included the type-L consumers when $\mu_0 < v_L/v_H$, there is no space for the producer to extract type-L consumer surplus from the extra information of data trade. 
%Hence, there is no price discrimination in the product market.

%\begin{cor}[Full disclosure] \label{cor_notfull} \
%	\begin{itemize}
%		\item For $\mu_0>\frac{v_L}{v_H}$, selling full information is optimal to the intermediary if $\frac{\delta^2}{\Delta v} \geq \frac{\Delta v}{v_L+v_H}$;
%		selling partial or null information is optimal, otherwise.
		
%		\item For $\mu_0<\frac{v_L}{v_H}$, selling full information is optimal to the intermediary if $\frac{\delta^2}{\Delta v} \geq 1$; selling partial or null information is optimal, otherwise.
%	\end{itemize}
%\end{cor}

\begin{cor}[Full disclosure] \label{cor_notfull} 
	For $\mu_0>v_L/v_H$, selling full information is optimal to the intermediary if and only if $\frac{\delta^2}{\Delta v} \geq \frac{\Delta v}{v_L+v_H}$. For $\mu_0<v_L/v_H$, selling full information is optimal to the intermediary if and only if $\frac{\delta^2}{\Delta v} \geq 1$.
\end{cor}

For instance, in the case of $\mu_0>v_L/v_H$, the intermediary will not sell full information in the optimal mechanism when $\delta^2<(\Delta v)^2/(v_L+v_H)$.
To see this by contradiction, in the optimal mechanism, the service profit is $V+\mu_0\pi(l|H)\Delta v-(1-\mu_0)(\frac{\delta}{\Delta v}v_L)^2$ and the data profit is $-\mu_0 \pi\left(l\vert H\right) \Delta v + (1-\mu_0) v_L \frac{\delta^2}{\Delta v}\frac{v_L+v_H}{\Delta v}$. 
Suppose the data policy sells full information, the service profit decreases to $V-(1-\mu_0)(\frac{\Delta v - \delta^2}{2\delta})^2$, while the data profit increases to $(1-\mu_0) v_L$.
However, the amount of loss in service profit is larger than that of increment in the data profit when $\delta^2<(\Delta v)^2/(v_L+v_H)$.\footnote{It is because given $\delta^2<(\Delta v)^2/(v_L+v_H)$, {\footnotesize\begin{eqnarray*}
		&&V+\mu_0\pi(l|H)\Delta v-(1-\mu_0)(\frac{\delta}{\Delta v}v_L)^2
		-  \left( V+(1-\mu_0)(\frac{\Delta v - \delta^2}{2\delta})^2\right) \\
		&&\hspace{1in} 
		> (1-\mu_0) v_L
		-\left( -\mu_0 \pi\left(l\vert H\right) \Delta v + (1-\mu_0) v_L \frac{\delta^2}{\Delta v}\frac{v_L+v_H}{\Delta v} \right).
\end{eqnarray*}}} 
To preserve the service profit, the intermediary would prefer to sell partial information.
Instead, when $\delta^2\geq (\Delta v)^2/(v_L+v_H)$, selling full information can be optimal for the intermediary, even considering the cross-market effect of data trade. %Full disclosure is not optimal in \citet{lizzeri1999information}.

\begin{comment}
	Because both the service distortion in the data-sourcing market and price discrimination in the data-using product affect consumers' incentives, the intermediary's tradeoff balances out the their equilibrium interactions.
	Following~\Cref{prop_general_service_smallmu0}~\ref{prop_general_service_largemu0}, the service distortion and the uniform price are exclusive.
	\begin{cor}[Service distortion and price discrimination]
		In the intermediary optimal mechanisms, the service distortion and uniform pricing are exclusive.
	\end{cor}
	First, given a service distortion, it must be the part 2 of~\Cref{prop_general_service_largemu0}, which implies the impossibility of $\pi(l\vert L)=\pi(l\vert H)$. Thus, the uniform price is not possible.
	Second, a uniform price implies that $\pi(l\vert H)=\pi(l\vert L)$ is optimal. Then it must be $\mu_0<v_L/v_H$ and there is no service distortion by Proposition~\ref{prop_general_service_smallmu0}. 
\end{comment}

Another interesting observation about the service fees by contrasting the results in~\Cref{prop_general_service_smallmu0} with~\Cref{prop_general_service_largemu0}  is that the high type consumers end up paying more service fee in the data-sourcing market for a larger $\mu_0$; but the low type consumers pay less (equal) service fee if $\delta^2<\Delta v$ ($\delta^2\geq\Delta v$ ) for a larger $\mu_0$.

\subsection{Regulation of Data Sharing} \label{policy}

The above framework with a holistic viewpoint, considering data sourcing, selling, and using, provides an opportunity to investigate the data market regulations in an integrated way. The regulatory policies have been established across different countries to address concerns about consumer privacy, led by data transfers across markets. However, a comprehensive understanding of the potential effects of regulations on the data market is still missing. The subsection studies the privacy regulation (or the regulation of data sharing).\footnote{\Cref{bargaining_power} discusses the intermediary's partial bargaining power and its regulation.}  

\paragraph{Efficiency.} In the socially optimal allocation, each consumer is provided a correct service and offered a product, and hence the socially optimal welfare is $V+\mu_0v_H+(1-\mu_0)v_L$. Following the propositions~\ref{prop_general_service_smallmu0} and~\ref{prop_general_service_largemu0} (summarized in~\Cref{tab_results_general}), it is trivial that the intermediary optimal allocation is efficient for $\mu_0<v_L/v_H$. However, for $\mu_0>v_L/v_H$, the optimal mechanism of the intermediary is efficient if and only if $\delta^2\geq \Delta v$.

\begin{lem}[Efficient mechanism]
	Suppose $\mu_0>v_L/v_H$.  The intermediary's optimal mechanisms are efficient if and only if $\delta^2\geq \Delta v$.
\end{lem}
When $\delta^2\geq \Delta v$, there is no distortion of service provisions, and the optimal data trade which fully discloses information about consumers' willingness to pay induces perfect price discrimination in the product market. Hence, the intermediary optimal mechanism is efficient under certain market differentiation conditions. 
The efficiency result implies that the data market regulations that intervene in the amount of data may lead to inefficiency. 

%The privacy protection laws usually work as the intermediary's commitment on how much data to trade. The result suggests a severe commitment problem of the intermediary. The intermediary commits to not trade any consumers’ data in the optimal mechanism, but the intermediary will have an incentive to trade consumers’ data for profits after consumers reveal their types to the intermediary.

\paragraph{Surplus and Welfare.} To investigate the effects of banning data trade, we compare the surpluses and welfare with and without the data market. When the laws ban the data trade to pursue the highest level of consumer privacy protection, the service market and product market work separately because they are not linked by the intermediary’s data trade. The data shared with the intermediary will no longer be used in another market against the consumers’ interests. In this environment, the type-L consumers’ total surplus is zero. The type-H consumers get some surplus when the product price is $v_L$, which corresponds to the case of $\mu_0<\frac{v_L}{v_H}$. In the product market, the producer extracts all the surplus under the prior $\mu_0$. The producer’s surplus is $R_0 = \max\left\{v_L,v_H\mu_0\right\}$.

\begin{lem}[Results without data market]\label{eq_ban}
	Suppose there exists no data trade. Social welfare is $V+\mu_0 v_H$ for $\mu_0> v_L/v_H$, and is $V+v_L+\mu_0\Delta v$ for $\mu_0<v_L/v_H$. 
	\begin{enumerate}[label=\arabic*.]
		
		\item In the service market, the intermediary sets $x_L=L$ and $x_H=H$ and the service fees are $f_L = f_H = V$. Both types of consumers surplus in service are zero and the intermediary surplus is $V$.
		
		\item In the product market, the producer sets price $p_0$. %in~\eqref{q0}
		The type-L consumers surplus is $0$. The type-H consumers surplus is $\Delta v$ when $\mu_0<v_L/v_H$ and is $0$ when $\mu_0>v_L/v_H$. The producer gets $\max\left\{v_L,\mu_0v_H\right\}$.
		
	\end{enumerate}

\end{lem}
%\begin{proof}
%	See Appendix~\ref{pf_eq_ban}.
%\end{proof}

\Cref{tab_surplus} summarizes the consumers, the producer, and the intermediary surplus with and without data trade ban. The data trade ban does not affect the producer surplus and consumers surplus. The producer surplus is always the producer’s revenue in the product market without buying data, $R_0$. This is because the intermediary has the full bargaining power over the data trade surplus and the producer does not get any share of the data trade surplus even when there is a data market. The type-L consumers surplus is always zero: The intermediary extracts their surplus in the service market and they don’t have any surplus in the product market. The type-H consumers surplus is $\mu_0\Delta v$ in the product market.

Banning data trade affects the intermediary's payoff only when $\mu_0>v_L/v_H$. Note that selling {\it no} information is optimal to the intermediary for $\mu_0<v_L/v_H$ by~\Cref{prop_general_service_smallmu0}, hence, banning data trade won't hurt the intermediary for $\mu_0<v_L/v_H$. However, it will hurt the intermediary for $\mu_0>v_L/v_H$, because selling no information is not optimal from~\Cref{prop_general_service_largemu0}.
The intermediary can only extract the service surplus $V$ without the data market, but gets better off by serving an extra proportion of type-L consumers with the data market.

% Table generated by Excel2LaTeX from sheet 'Sheet1'
\begin{table}[h]
	\centering
	\caption{Surplus Distribution}
	\begin{adjustbox}{width=\textwidth}
		\begin{tabular}{ccccccccc}
			%\toprule
			\toprule
			\multicolumn{2}{c}{ } & \multicolumn{3}{c}{With Data Market} &  & \multicolumn{3}{c}{Without Data Market} \\
			
			\cmidrule{3-5} \cmidrule{7-9}
			Primitive & Condition & Intermediary & Producer & Consumers &  & Intermediary & Producer & Consumers \\
			
			\cmidrule{1-5}\cmidrule{7-9}    
			\multirow{2}[1]{*}{$\mu_0>\frac{v_L}{v_H}$} 
			&$\frac{\delta^2}{\Delta v} \geq 1$& $V+\left(1-\mu_0\right)v_L$        &$\mu_0v_H$ & 0  & & $V$ & $\mu_0v_H$& 0 \\
			&$\frac{\Delta v}{v_H+v_L}\leq \frac{\delta^2}{\Delta v}<1$& $V+(1-\mu_0)v_L-(1-\mu_0)(\frac{\Delta v-\delta^2}{2\delta})^2$        &$\mu_0v_H$ & 0  & & $V$ & $\mu_0v_H$& 0 \\
			&$\frac{\delta^2}{\Delta v}<\frac{\Delta v}{v_H+v_L}$&$V+(1-\mu_0)v_Lv_H (\frac{\delta}{\Delta v})^2$ & $\mu_0v_H$& 0 & & $V$ & $\mu_0v_H$& 0\\
			
			\cmidrule{1-5}\cmidrule{7-9}
			\multirow{2}[1]{*}{$\mu_0<\frac{v_L}{v_H}$} 
			&$\frac{\delta^2}{\Delta v} \geq 1$ &$V$ &$v_L$ &$\mu_0\Delta v$  & &$V$ & $v_L$& $\mu_0\Delta v$ \\
			&$\frac{\delta^2}{\Delta v} < 1$ &$V$ &$v_L$ &$\mu_0\Delta v$ & &$V$ & $v_L$& $\mu_0\Delta v$   \\
			
			%\cmidrule{1-6}\cmidrule{8-11}   
			
			\bottomrule
			%\bottomrule
		\end{tabular}%
	\end{adjustbox}
	\label{tab_surplus}
\end{table}

\begin{comment}
	\begin{table}[h]
		\centering
		\caption{Surplus Distribution}
		\begin{adjustbox}{width=\textwidth}
			\begin{tabular}{ccccccccccc}
				%\toprule
				\toprule
				\multicolumn{2}{c}{ } & \multicolumn{4}{c}{With Data Market} &  & \multicolumn{4}{c}{Without Data Market} \\
				
				\cmidrule{3-6}\cmidrule{8-11}
				Primitive & Condition & Intermediary & Producer & Consumer $H$ & Consumer $L$ &  & Intermediary & Producer & Consumer $H$ & Consumer $L$ \\
				
				\cmidrule{1-6}\cmidrule{8-11}    
				\multirow{2}[1]{*}{$\mu_0>\frac{v_L}{v_H}$} 
				&$\frac{\delta^2}{\Delta v} \geq 1$& $V+\left(1-\mu_0\right)v_L$        &$\mu_0v_H$ & 0 & 0 & & $V$ & $\mu_0v_H$& 0& 0 \\
				&$\frac{\Delta v}{v_H+v_L}\leq \frac{\delta^2}{\Delta v}<1$& $V+(1-\mu_0)v_L-(1-\mu_0)(\frac{\Delta v-\delta^2}{2\delta})^2$        &$\mu_0v_H$ & 0 & 0 & & $V$ & $\mu_0v_H$& 0& 0 \\
				&$\frac{\delta^2}{\Delta v}<\frac{\Delta v}{v_H+v_L}$&$V+(1-\mu_0)v_Lv_H (\frac{\delta}{\Delta v})^2$ & $\mu_0v_H$& 0 & 0 & & $V$ & $\mu_0v_H$& 0& 0 \\
				
				\cmidrule{1-6}\cmidrule{8-11}
				\multirow{2}[1]{*}{$\mu_0<\frac{v_L}{v_H}$} 
				&$\frac{\delta^2}{\Delta v} \geq 1$ &$V$ &$v_L$ &$\mu_0\Delta v$  &$0$& &$V$ & $v_L$& $\mu_0\Delta v$  & $0$ \\
				&$\frac{\delta^2}{\Delta v} < 1$ &$V$ &$v_L$ &$\mu_0\Delta v$  &$0$& &$V$ & $v_L$& $\mu_0\Delta v$  & $0$ \\
				
				%\cmidrule{1-6}\cmidrule{8-11}   
				
				\bottomrule
				%\bottomrule
			\end{tabular}%
		\end{adjustbox}
		\label{tab_surplus}
	\end{table}
\end{comment}

Social welfare is the sum of the intermediary's profit, producer's profit, and consumer surplus in~\Cref{tab_surplus}. Hence, banning the data market decreases social welfare when $\mu_0>v_L/v_H$.\footnote{\citet{acemoglu2022too} find that banning the data market improves welfare as the over-sharing of data in the market induced by data externality disappears. The welfare loss comes from the firm’s over-collection of consumer data in \citet{bergemann2022economics} and \citet{choi2019privacy}.} Intuitively, without data trade, it becomes harder for the producer to price discriminate consumers and extract the most surplus. Specifically, the data trade brings two effects: surplus extraction and market inclusion. The surplus extraction effect does not matter for social welfare as it is a redistribution of surplus between the producer and consumers.
But the positive market inclusion effect will disappear if the data trade is banned.

\begin{prop}[Regulation of data sharing]\label{general_service_welfare}
	Banning data trade will decrease social welfare for $\mu_0>v_L/v_H$, and will not affect social welfare for $\mu_0<v_L/v_H$.
\end{prop}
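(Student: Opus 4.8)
The plan is to prove Proposition~\ref{general_service_welfare} by directly comparing the social welfare with and without data trade in each of the two prior regimes, leveraging the welfare formulas already established. For $\mu_0 > v_L/v_H$, the with-data welfare comes from Proposition~\ref{prop_general_service_largemu0} (three subcases depending on how $\delta^2/\Delta v$ compares to $1$ and to $\Delta v/(v_L+v_H)$), while the without-data welfare is $V + \mu_0 v_H$ by Proposition~\ref{eq_ban}. For $\mu_0 < v_L/v_H$, the with-data welfare is $V + v_L + \mu_0 \Delta v$ by Proposition~\ref{prop_general_service_smallmu0}, and the without-data welfare is the same, $V + v_L + \mu_0\Delta v$, by Proposition~\ref{eq_ban}, so the claim of "no effect" is immediate.

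The substantive work is the case $\mu_0 > v_L/v_H$, where I would show the with-data welfare strictly exceeds $V + \mu_0 v_H$ in each subcase. When $\delta^2 \geq \Delta v$ (no service distortion), the with-data welfare is $V + (1-\mu_0) v_L + \mu_0 v_H$, which exceeds $V + \mu_0 v_H$ by $(1-\mu_0) v_L > 0$. When $\Delta v/(v_L+v_H) < \delta^2/\Delta v < 1$, the with-data welfare is $V - (1-\mu_0)\big(\frac{\Delta v - \delta^2}{2\delta}\big)^2 + (1-\mu_0) v_L + \mu_0 v_H$; the gap over $V + \mu_0 v_H$ is $(1-\mu_0)\big[v_L - \big(\frac{\Delta v - \delta^2}{2\delta}\big)^2\big]$, so I must verify $v_L > \big(\frac{\Delta v - \delta^2}{2\delta}\big)^2$ on this parameter range. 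When $\delta^2/\Delta v < \Delta v/(v_L+v_H)$, the with-data welfare is $V + (1-\mu_0)(\delta/\Delta v)^2 v_L v_H + \mu_0 v_H$, and the gap over $V + \mu_0 v_H$ is $(1-\mu_0)(\delta/\Delta v)^2 v_L v_H > 0$, which is immediate. So only the middle subcase requires an inequality check.

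For that middle subcase, the key step is the algebraic inequality $\big(\frac{\Delta v - \delta^2}{2\delta}\big)^2 < v_L$ whenever $\Delta v/(v_L+v_H) < \delta^2/\Delta v < 1$. I would argue as follows: the optimal mechanism in this subcase was derived by the intermediary choosing the service distortion optimally, and by revealed preference the intermediary's payoff (hence, since the producer and consumer surpluses are pinned down, the social welfare) under this subcase's mechanism must weakly exceed what would be obtained by instead using the boundary mechanism of subcase~2 or a no-distortion fallback; more concretely, one can note that at $\delta^2 = \Delta v$ the distortion term $\frac{\Delta v - \delta^2}{2\delta}$ vanishes and the welfare equals $V + (1-\mu_0)v_L + \mu_0 v_H > V+\mu_0 v_H$, and as $\delta^2$ decreases the distortion term grows but is capped precisely at the point where the mechanism switches to subcase~2, at which the welfare expression $V + (1-\mu_0)(\delta/\Delta v)^2 v_L v_H + \mu_0 v_H$ is still strictly above $V + \mu_0 v_H$; continuity of welfare in $\delta^2$ at the threshold $\delta^2/\Delta v = \Delta v/(v_L+v_H)$ then forces $\big(\frac{\Delta v - \delta^2}{2\delta}\big)^2 < v_L$ throughout the open interval. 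Alternatively, a direct computation: substituting $\delta^2 = \Delta v/(v_L+v_H) \cdot \Delta v$ at the lower endpoint and checking monotonicity of $\big(\frac{\Delta v - \delta^2}{2\delta}\big)^2$ in $\delta$ on the relevant range gives the bound.

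The main obstacle I anticipate is handling the middle subcase cleanly — specifically, either verifying the inequality $\big(\frac{\Delta v - \delta^2}{2\delta}\big)^2 < v_L$ by brute force (which requires care because $\delta$ appears both in the numerator difference and the denominator, so the left side is not monotone in $\delta$ in an obvious way) or instead invoking a continuity/revealed-preference argument that sidesteps the computation. I would lean toward the revealed-preference route: since in the optimal mechanism with distortion the intermediary still does at least as well as it could by serving fewer type-$L$ consumers in the product market without distorting service (a feasible alternative), the intermediary surplus — and therefore welfare, given the fixed producer surplus $\mu_0 v_H$ and zero consumer surplus — stays strictly above the no-data-trade benchmark $V + \mu_0 v_H$. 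Everything else (the $\mu_0 < v_L/v_H$ case and the $\delta^2 \geq \Delta v$ subcase) is a one-line comparison of the tabulated welfare values, so the proof is short once the middle subcase is dispatched.
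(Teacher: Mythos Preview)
Your case-by-case structure matches the paper exactly; only the middle subcase $\frac{\Delta v}{v_L+v_H}<\frac{\delta^2}{\Delta v}<1$ differs in execution. The paper dispatches it by direct algebra: writing
\[
v_L - \Big(\frac{\Delta v - \delta^2}{2\delta}\Big)^2 \;=\; \frac{-(\Delta v)^2 - \delta^4 + 2\delta^2(v_L+v_H)}{4\delta^2},
\]
and then noting that $\delta^2<\Delta v<v_L+v_H$ gives $\delta^4<\delta^2(v_L+v_H)$, while the lower threshold $\delta^2(v_L+v_H)>(\Delta v)^2$ gives the other piece, so the numerator is positive. Your preferred revealed-preference route---the no-distortion mechanism $(x_L,x_H)=(L,H)$ with $\pi(l\vert L)=\delta^2/\Delta v$ is feasible in the general service space and yields intermediary payoff $V+(1-\mu_0)v_L\,\delta^2/\Delta v>V$, hence welfare strictly above $V+\mu_0 v_H$ since producer and consumer surpluses are pinned at $\mu_0 v_H$ and $0$---is correct and sidesteps the computation entirely. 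Each approach has its merit: the paper's algebra is self-contained once the welfare formula is on the table; yours exploits the optimization structure and would extend more readily if the service technology changed.

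One caution: your continuity variant (``welfare exceeds the benchmark at both endpoints of the middle interval, so by continuity it does throughout'') is not valid as stated---a continuous function positive at two endpoints can dip below in between. To close that version you would need monotonicity of the intermediary's value function in $\delta^2$ (which does hold, again by revealed preference, since a larger $\delta^2$ relaxes $IC_H$). Since you already identify the direct revealed-preference comparison as your main line, drop the continuity detour.
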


The result suggests that the privacy laws should be careful about the levels of protection. Depending on the market environment, the highest privacy protection is sometimes optimal (e.g., $\mu_0<v_L/v_H$), while trading full data is sometimes optimal (e.g., $\delta^2\geq \Delta v$).

\section{Intermediary's Bargaining Power}\label{bargaining_power}

One assumption in the baseline analysis of intermediary optimal design is that the intermediary has the full bargaining power in the data market and hence extracts all the data trade surplus. However, transferring data from consumers to the intermediary does not necessarily fully transfer the bargaining power to the latter. %The intermediary's bargaining power in the data market and its regulation are less explored. 
This section studies the partial bargaining power of the intermediary and its regulatory implications.

In a stochastic ultimatum bargaining between the intermediary and the producer in the data market, the intermediary proposes a take-it-or-leave-it offer with probability $\beta\in(0,1)$, which measures the intermediary's relative bargaining power. When the intermediary extracts a proportion $\beta$ of the data trade surplus, its total payoff becomes
\begin{equation*}
	  f_H \mu_0  + f_L (1-\mu_0) 
	+ \beta [ (1-\mu_0) \pi(l\vert L) v_L - \mu_0 \pi(l\vert H) (v_H - v_L) + (\mu_0 v_H - v_L) \mathbb{I}_{\mu_0<v_L/v_H}  ],
\end{equation*}
while the constraints of the problem are unchanged. Solving the problem yields results about the intermediary's optimality with partial bargaining power.

%\pagebreak
\begin{prop}[Optimal design\textemdash partial bargaining power] \label{prop:bargaining_general} Suppose a partial bargaining power of the intermediary. In the intermediary optimal mechanism:
	\begin{enumerate}[label=\arabic*.] 
		\item \label{prop:bargaining_general_largemu0} For $\mu_0 > v_L/v_H$, there is no distortion of the service for the high type $x_H = H$ and
		\begin{enumerate}
			
			\item if $\frac{\delta^2}{\Delta v} \geq 1$, then the data trade  $\pi(l\vert L)=1$, $\pi(l\vert H)=C$ and the service is $x_L=L$;

			\item if $\frac{\Delta v}{v_H+v_L} <\frac{\delta^2}{\Delta v}<1$, then the data trade is $\pi(l\vert L) = 1$, $\pi(l\vert H) = C$ and the service is $x_L = L - \frac{\Delta v -\delta^2}{2\delta}$;
			
			\item if $\frac{\delta^2}{\Delta v} \leq \frac{\Delta v}{v_H+v_L}$, then the data trade is $\pi(l\vert L) = \frac{\delta^2}{\Delta v} 
			\frac{v_H+v_L}{\Delta v}$, $\pi(l\vert H) = \frac{C\delta^2}{\Delta v} 
			\frac{v_H+v_L}{\Delta v}$
			and the service is $x_L = L - \frac{\delta v_L }{\Delta v}$.
			
		\end{enumerate}
		
		\item \label{prop:bargaining_general_smallmu0}	For $\mu_0<v_L/v_H$,
		the optimal data trade is $\pi\left(l \vert L\right)=\pi\left(l \vert H\right) = 1$, and services are $x_L=L$, $x_H=H$. %and service fees are $f_L=f_H=V$.
		
	\end{enumerate}
\end{prop}

Comparing the intermediary's optimality under the partial bargaining power with that under the full bargaining power in the propositions~\ref{prop_general_service_smallmu0} and~\ref{prop_general_service_largemu0}, two main results follow. First, the partial bargaining power can decrease the amount of traded data. The optimal $\pi(l|H)$ under the partial bargaining power becomes unique from multiple solutions. For example, in the case of $\mu_0>v_L/v_H$, fixing the optimal $\pi(l|L)$, the data trade changes from $\{ \pi(l|L), \pi(l|H) \in [ 0,C\pi(l|L)]  \}$ to $\{\pi(l|L),\pi(l|H)=C\pi(l|L)\}$ with $C\in (0,1)$.\footnote{When $\mu_0>\frac{v_{L}}{v_{H}}$, in the optimal mechanism~\eqref{obl} is binding, $\pi(l|H)=C\pi(l|L)$, which together with other constraints pins down the unique solution.}
Hence, the partial bargaining power drives down the amount of information disclosed by the intermediary's optimal mechanism. 

Second, the partial bargaining power does not affect the intermediary's equilibrium payoff, although it shuts off the multiple solutions of data trade.\footnote{In the optimal mechanism, the intermediary's equilibrium payoff is $V$ for $\mu_{0}<\frac{v_{L}}{v_{H}}$, and it is
	\begin{equation*}
		\begin{cases}
			V + (1-\mu_0) v_L  &\mbox{ if } \frac{\delta^2}{\Delta v} \geq 1 \\
			V+ (1-\mu_0) v_L - (1-\mu_0)\left(\frac{\Delta v -\delta^2}{2\delta}\right)^2  &\mbox{ if } \frac{\Delta v}{v_H+v_L} \leq \frac{\delta^2}{\Delta v}  < 1 \\
			V+ (1-\mu_0)\left(\frac{\delta}{\Delta v}\right)^2 v_Lv_H  &\mbox{ if } \frac{\delta^2}{\Delta v} < \frac{\Delta v}{v_H+v_L}
		\end{cases}
	\end{equation*} 
	for $\mu_{0}>\frac{v_{L}}{v_{H}}$, which is exactly the intermediary's equilibrium payoff in the setup of $\beta=1$.}
In the case of $\mu_0>v_L/v_H$, the intermediary's total payoff can be written as
$[ V-(x_H-H)^2+\pi(l|H)\Delta v ]\mu_0 + [ V-(x_L-L)^2](1-\mu_0)
+ \beta [ (1-\mu_0)\pi(l\vert L) v_L -\mu_0 \pi(l\vert H)\Delta v ]
$
where in the optimal mechanism the data revenue is zero because
\begin{eqnarray*}
	(1-\mu_0)\pi(l\vert L) v_L -\mu_0 \pi(l\vert H)\Delta v &=& (1-\mu_0)\pi(l\vert L) v_L -\mu_0 C\pi(l\vert L)\Delta v \\
	&=& (1-\mu_0)\pi(l\vert L) v_L - (1-\mu_0) \pi(l\vert L) v_L \\
	&=& 0
\end{eqnarray*}
where the first equality follows the binding obedience constraint $OB_L$ in~\eqref{obl}: $\pi(l\vert H) = C\pi(l\vert L)$ and the second follows the expression of $C$ in~\Cref{notation}. Hence, the bargaining power parameter doesn't matter for the equilibrium payoff when $\mu_0>v_L/v_H$.

In the case of $\mu_0<v_L/v_H$, the intermediary's total payoff reads as
\begin{eqnarray*}
	&&\underbrace{[V-(x_H-H)^2+\pi(l\vert H)\Delta v -\Delta v]}_{\text{service fee for type-H consumer}}\mu_0 + [V-(x_L-L)^2](1-\mu_0) \\
	&&\hspace{2in} + \beta [(1-\mu_0)\pi(l\vert L) v_L - \mu_0\pi\left(l\vert H\right) \Delta v - (v_L-\mu_0v_H)]
\end{eqnarray*}
There is a tradeoff of increasing the probability of charging a low price for the high type, $\pi(l\vert H)$. As $\pi(l\vert H)$ increases, on one hand, the service revenue increases by $\mu_0\Delta v$ because consumers are more likely to reveal personal information. On the other hand, the data revenue decreases by $\beta\mu_0\Delta v$ because the producer charges the type-H consumers a low price with a higher probability. The first force outweighs the second given the partial bargaining power $\beta <1$. So the intermediary will increase $\pi(l\vert H)$ up to one.\footnote{Recall that when $\beta=1$, the two forces cancel each other out and hence $\pi(l|H)$ is free to have multiple solutions.} Moreover, the intermediary sets $\pi(l\vert L)=1$ to maximize the data revenue. Hence, the optimal data revenue is zero because $(1-\mu_0)\pi(l\vert L) v_L - \mu_0\pi\left(l\vert H\right) \Delta v - (v_L-\mu_0v_H) = 0$. The bargaining power $\beta<1$ does not influence the intermediary's equilibrium total payoff, which is $V$.

\paragraph{Regulation of Power.} Recall that the consumer surplus is $\max\{0,\mu_0\Delta v \mathbb{I}_{\mu_0v_H<v_L} \}$ and the producer surplus is $\max \{v_L,\mu_0v_H \}$. So the partial bargaining power of the intermediary does not affect the consumer surplus, producer profit, nor the total welfare, compared to the full bargaining power.
In sum, the introduction of the partial bargaining power yields the following insights about the regulation of intermediary's power in the data market.

\begin{prop}[Regulation of intermediary's power]
	
	Reducing the intermediary's bargaining power in the data market can decrease the amount of information traded by the intermediary, without hurting social welfare.
	
	%\begin{enumerate}[label=\arabic*.] 
	%	\item Regulating the market power has no effects on the intermediary's payoff, consumer surplus nor producer surplus.
		
	%	\item Reducing the intermediary's bargaining power in the data market may not decrease the informativeness of data trade.
	%\end{enumerate}
	%The second part follows that the unique optimal data trade in the case of partial bargaining power is also one of the optimal solutions in the full bargaining power.
	
\end{prop}

\section{Discussion}\label{discussion}

\paragraph{Other Off-path Beliefs.} The equilibrium selection criterion restricts the off-equilibrium-path belief to be $\mu_0$, i.e., the producer does not update from the consumer's non-participation. This is a restriction of ``no signaling what you don't know" as in~\citet{fudenberg1991perfect}. Alternatively, for example, the producer may treat the consumers as high type if they reject the intermediary's contract, i.e., the off-path belief is $1$. Then the producer sets the product price $p_0=v_H$ and $R_0=v_H$, the type-H consumers' outside option is $0$.
Recall in the analysis of the off-path belief being $\mu_0$, when $\mu_0>v_L/v_H$, the producer sets product price at $v_H$ and the revenue is $\mu_0v_H$, the type-H consumers' outside option is $0$. So this particular off-path belief changes the right-hand side of the producer's participation constraint~\eqref{irp} from $\mu_0 v_H$ to $v_H$, and the other constraints do not change. Hence, the results on data trade in the case of $\mu_0>v_L/v_H$ won't change, while the data fee will change.  Off-path belief restrictions are also seen in related models like \citet{lizzeri1999information}.
%In the analysis of $\mu_0<v_L/v_H$, with the off-path belief being $\mu_0$, the producer sets product price at $v_L$ and the revenue is $v_L$, the type-H consumers' outside option is $v_H-v_L$. The participation constraints of both the producer and the type-H consumers change. Both the data fee and service fee change, but the data trade doesn’t change.

\paragraph{Competition between Data Buyers.} In the stylized model, there is one data buyer, i.e., the producer. When there are two producers and both of them buy data from the intermediary, the pricing competition in the product market between the two producers might compete away the producer's profit and hence prevent the intermediary from a beneficial data trade. In this sense, the intermediary would prefer an exclusive data trade, selling data only to one producer, which goes back to the baseline framework of one data buyer.  \citet{bimpikis2019information} and \citet{bonatti2024selling} study information sales to buyers who compete in a downstream market. \citet{armstrong2022consumer} study the intermediary designed information based on which the information users compete in the product price. %\citet{choe2024softening}

\paragraph{Commitment of the Intermediary.} Recall that our work focuses on the cross-market incentive instead of the inter-temporal one. If one interprets our results from the inter-temporal viewpoint, there can be a commitment problem of the intermediary under some conditions. Take the result of $\mu_0<v_L/v_H$ in~\Cref{cor_notfull} as an example, when $\delta^2 < \Delta v$, the data trade ex ante committed by the optimal mechanism can be inconsistent with what will happen after consumers report the true types in the service market. The optimal mechanism will never sell full information when $\delta^2 < \Delta v$, but the intermediary has incentives to sell all the information to get the maximum data trade gain.\footnote{In the case of $\mu_0<v_L/v_H$, the intermediary’s commitment is not a problem when $\delta^2\geq \Delta v$: the full information data trade is optimal if $\delta^2\geq \Delta v$ and hence there will be no time-inconsistent behaviors of the intermediary after collecting the data.} \citet{bester2001contracting}, \citet{doval2022mechanism} and \citet{doval2025purchase} study the mechanism with the principal's limited commitment.

\paragraph{Transparency of Contract.} The paper assumes consumers observe everything about the data contract offered to the data-using producer. It is not an unreasonable assumption because transparency is one of the key aspects of data regulations, i.e., businesses must be transparent about their data collection and sale practices.\footnote{For example, the CCPA secures privacy rights for California consumers, including the right to know about the personal information a business collects about them and how it is used and shared. The GDPR in Europe also requires firms to disclose how they use data, apart from granting consumers more control over data. See \citet{lizzeri1999information} for a similar observability assumption.}
One interesting special scenario is that in the contract the intermediary commits to a binary data policy (share full information or not), which, however, is not optimal for the intermediary from the second half of part 2 in~\Cref{prop_general_service_largemu0} and hence underlines the importance of a flexible data trade policy.

\paragraph{Continuum of Values.} We assume binary product values. One could expect the continuous distributions of product values would affect consumer surplus. For the value distributions $F_L(v),F_H(v)\in \Delta [v_L,v_H]$, the producer sets a price between $v_L$ and $v_H$ rather than $v_L$ or $v_H$ as the product pricing problem becomes $\max_{p} p[1-F_H(p)]\mu_s + p[1-F_L(p)](1-\mu_s)$. In the setup of binary values with $\mu_0>v_L/v_H$, only high type consumers may get a positive surplus in the product market. But either type of consumer can get a positive one for continuous product values.
%Whether the intermediary or the producer benefits from the continuous support of value is not clear, which depends on the specification of distributions and the lower and upper bounds of the support.

\paragraph{Horizontal-vertical Setup.} We model the data-sourcing market as horizontal, while the data-using market is modeled as vertical. The horizontal part endows consumers with incentives to reveal personal information to match their tastes, while the assumption of the vertical element allows the producer to use consumers' data against their interests in the data-using market. If both markets are modeled as horizontal, then cutoffs for the intermediary's optimal data trade would change. If both are modeled as vertical, then consumers may have no incentives to share information in the data-sourcing market.

\section{Conclusion}\label{conclusion}

This paper develops an integrated framework of the intermediary’s data sourcing from consumers and data trade to a third party who uses the data. Consumers benefit from revealing information about their tastes to the intermediary for better-matched services. A third party buys data from the intermediary so that it can extract more surplus from consumers in the product market. %There are two effects of data trade: market inclusion and surplus reallocation. 
We characterize the intermediary optimal mechanisms for data sourcing and sales. Due to the cross-market feature of data, the equilibrium data trade depends on the consumers' {\it horizontal} differentiation level in the data-sourcing service market and the {\it vertical} one in the product market. 

In the intermediary optimal mechanism, the intermediary sells no information or partial information to preserve profits in the service market when consumers' mismatch cost from a wrong service is low. The low mismatch cost itself cannot sufficiently incentivize consumers to reveal information. The intermediary has to compensate consumers more (by decreasing service fees) to sell full information. However, the loss in service profit outweighs the net gain from selling full information rather than partial or no information. Hence, the intermediary would prefer to sell partial or no information. %, which suggests that the market can protect consumer privacy without regulations. 
In contrast, the intermediary can sell full information to the data buyer when consumers' mismatch cost in service is high. In this case, consumers' privacy protection is at the lowest level. Moreover, the provisions of service, product, and data by the optimal mechanism can be efficient under certain market conditions.

Additionally, we introduce the intermediary's partial bargaining power in the data market. The partial bargaining power does not affect the surpluses of the intermediary, producer, or consumers, although it can decrease the amount of information traded by the intermediary with the third party.

Applying the framework to examine the data market regulations, we find that banning data trade may reduce social welfare because it makes it harder to price discriminate in the product market. 
Instead, regulating the intermediary's bargaining power can limit the amount of information traded by the intermediary without hurting social welfare.
The finding suggests that the regulation of the intermediary's power is more efficient than the regulation of the intermediary's data sharing.

%The results that the intermediary sells partial or no information suggest a commitment problem of the intermediary. Thus, an important direction for future research is to better understand the effects of the intermediary's limited commitment.

%\cleardoublepage 
%\bibliography{references}

\clearpage
%\vspace{.5in}

\appendix

\section{Appendix: Omitted Proofs}

\subsection{Proof of~\Cref{lem_binary}}
\begin{proof}
	The basic idea is to merge the signals that induce the same posterior \citep{bergemann2016bayes}. Let $\hat{\pi}(l \vert \theta)=\sum_{s:\mu_{s}<v_{L} / v_{H}} \pi(s \vert \theta)$ and $\hat{\pi}(h \vert \theta)=\sum_{s: \mu_{s} \geq v_{L} / v_{H}} \pi(s \vert \theta)$. Then, the data trade is captured by $(\{l, h\}, \hat{\pi})$. In the content after \Cref{lem_binary}, we use the notation of $(\{l, h\}, \pi)$ for the binary direct data trade $(\{l, h\}, \hat{\pi})$ without confusions.
\end{proof}

\subsection{Proof of \Cref{lem_2service}}\label{pf_2service}
For a general service space $\mathcal X=\mathbb{R}$, we have results in the following \Cref{lem_service}. \Cref{lem_2service} directly follows \Cref{lem_service} with $\mathcal X=\{L,H\}$.

\begin{lem}\label{lem_service}
	Suppose a general service space $\mathcal X$. In the intermediary optimal mechanism, the service provided to type-H consumers $x_H\geq H$ and to type-L consumers $x_L\leq L$.
\end{lem}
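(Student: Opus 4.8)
The plan is to argue by contradiction: suppose the optimal mechanism has either $x_H < H$ or $x_L > L$, and show we can perturb the service provisions toward $H$ (resp. $L$) in a way that weakly relaxes every constraint in the intermediary's problem while strictly increasing revenue, contradicting optimality. The key observation is that the service provisions $x_L, x_H$ enter the problem only through the mismatch terms $(x_{\theta'}-\theta)^2$, which appear in $IC_H$~\eqref{ich}, $IC_L$~\eqref{icl}, $IR_H$~\eqref{irh} (or~\eqref{irh_largemu}/\eqref{irh_smallmu}), and $IR_L$~\eqref{irl}; they do not appear in the producer's obedience constraints~\eqref{obh}--\eqref{obl}, the producer's participation constraint~\eqref{irp}, or the feasibility conditions~\eqref{fesi}. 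Moreover, the fees $f_L,f_H$ and the data fee $T$ can always be adjusted freely to keep participation constraints binding, and since the intermediary's objective~\eqref{payoff_intermediary} is increasing in $f_L,f_H$, reducing a mismatch term lets the intermediary raise the corresponding fee and strictly gain.

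First I would fix attention on $x_H$. If $x_H < H$, replace $x_H$ by $H$ (or, more conservatively, move it toward $H$). The own-type mismatch $(x_H-H)^2$ strictly decreases, so $IR_H$ and $IC_H$ become slack on their left-hand sides; the intermediary then raises $f_H$ to restore $IR_H$ with equality, collecting $\mu_0$ times the gain. I must check that the cross term $(x_H - L)^2$ appearing on the right-hand side of $IC_L$~\eqref{icl} does not tighten $IC_L$: moving $x_H$ from below $H$ up toward $H$ moves it away from $L$ (since $L < H$), so $(x_H-L)^2$ increases, which makes the right-hand side of $IC_L$ larger — i.e., $IC_L$ is \emph{harder}. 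This is the subtlety: the perturbation that helps $IR_H$ can hurt $IC_L$. The resolution is that whenever $IC_L$ would bind, one can simultaneously lower $f_L$; but lowering $f_L$ costs revenue, so one instead argues that at the optimum $IC_L$ cannot bind when $x_H<H$ — or more cleanly, that one should push $x_H$ only as far as $IC_L$ permits, and if $IC_L$ already forces $x_H \ge H$ there is nothing to prove, while if $IC_L$ is slack then the full move to $x_H=H$ is feasible and strictly profitable. The symmetric argument handles $x_L$: if $x_L > L$, move it down toward $L$; this decreases $(x_L-L)^2$ (helping $IR_L$, $IC_L$) and decreases $(x_L-H)^2$ only if $x_L$ is moving toward the region below $H$ — in fact moving $x_L$ down from above $L$ toward $L$ moves it away from $H$, increasing $(x_L-H)^2$ and hence tightening $IC_H$; again the binding/slack case split resolves it.

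The main obstacle, then, is precisely this interaction between own-type and cross-type mismatch terms: a perturbation of one service level relaxes that type's $IR$/$IC$ but can tighten the other type's incentive constraint. The clean way to organize the proof is: (i) reduce to the case where both $IR_L$ and $IR_H$ bind at the optimum (standard, since otherwise raise fees); (ii) observe the objective can be written, after substituting the binding $IR$'s, purely in terms of $\pi$ and the mismatch terms, with the mismatch terms entering with negative sign weighted by $\mu_0$ and $1-\mu_0$; (iii) conclude that the intermediary wants each mismatch $(x_\theta - \theta)^2$ as small as possible subject only to the two $IC$ constraints; (iv) show that if $x_H < H$ then increasing $x_H$ toward $H$ does not violate $IC_H$ (its LHS improves) and the only binding concern is $IC_L$'s RHS, which one checks is compatible — and symmetrically for $x_L > L$. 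I expect step (iv), the feasibility-of-perturbation check against the \emph{other} type's $IC$, to be where the real work lies, and I would handle it by noting that at a putative optimum with $x_H < H$, if $IC_L$ binds then $x_L = H$ is forced (from~\eqref{icl} with equality and $x_H<H$ one gets $x_L$ on the far side), which contradicts $x_L \le L < H$ unless the mismatch structure degenerates; a short direct computation closes this.
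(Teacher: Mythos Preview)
Your overall plan---perturb $x_H$ toward $H$ (resp.\ $x_L$ toward $L$), raise the corresponding fee, and check all constraints---is exactly the paper's approach. But you have a sign error that creates a phantom obstacle. You write that increasing $(x_H-L)^2$ ``makes the right-hand side of $IC_L$ larger --- i.e., $IC_L$ is harder.'' The right-hand side of $IC_L$~\eqref{icl} is $V-(x_H-L)^2-f_H$; increasing $(x_H-L)^2$ makes this \emph{smaller}, so $IC_L$ becomes \emph{easier}, not harder. The same slip reappears in your treatment of $IC_H$ when you perturb $x_L$. With the correct signs there is no tension at all in the region $L\le x_H<H$: pushing $x_H$ up and raising $f_H$ relaxes every constraint simultaneously, and your steps (i)--(iv) with the binding/slack casework are unnecessary.

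The one genuine case you overlook is $x_H<L$: there, moving $x_H$ up \emph{decreases} $(x_H-L)^2$, which by itself pushes the RHS of $IC_L$ up, so one must verify that the accompanying fee increase dominates. The paper handles this cleanly by increasing $f_H$ and decreasing $(x_H-H)^2$ by \emph{exactly the same amount}, so that $V-(x_H-H)^2-f_H$ is held fixed. Writing
\[
V-(x_H-L)^2-f_H \;=\; \bigl[V-(x_H-H)^2-f_H\bigr]-\delta^2-2\delta(x_H-H),
\]
the bracketed block is constant and $-2\delta(x_H-H)$ strictly decreases as $x_H$ rises, so the RHS of $IC_L$ falls regardless of whether $x_H$ lies below or above $L$. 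This single reparametrization covers all of $x_H<H$ uniformly and eliminates any need for case splits. The argument for $x_L>L$ is symmetric. Finally, your step (i) (both $IR$'s bind) is neither needed for this lemma nor available at this point in the paper; the perturbation argument goes through directly.
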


\begin{proof}[Proof of \Cref{lem_service}]
To make the proof easy to read, we write out explicitly the intermediary’s problem below. That is, $\max\limits_{x,f,\pi,T} f_H \mu_0  + f_L \left(1-\mu_0\right) + T$ subject to {\small
	\begin{align}
		& V - \left( x_H - H \right)^2 - f_H + \pi\left(l\vert H\right)\Delta v
		\geq V - \left( x_L - H \right)^2 - f_L + \pi\left(l\vert L\right) \Delta v  \tag{$IC_H$ } \\
		& V - \left( x_L - L \right)^2  - f_L 
		\geq V -  \left( x_H - L \right)^2  - f_H  \tag{$IC_L$}  \\                                                              
		& V - \left( x_H - H \right)^2  - f_H + \pi\left(l\vert H\right) \Delta v \geq \max\left\{v_H - p_0, 0 \right\} \tag{$IR_H$} \\
		& V -  \left( x_L - L \right)^2 - f_L  \geq 0 \tag{$IR_L$}  \\
		& \mu_0 \left( 1-\pi\left(l\vert H\right) \right) \geq \frac{v_L}{v_H} \left[ \mu_0 \left(1-\pi\left(l\vert H\right) \right) +\left(1-\mu_0\right) \left( 1-\pi\left(l\vert L\right) \right)  \right] \tag{$OB_H$} \\
		& \frac{v_L}{v_H} \left[ \mu_0 \pi\left(l\vert H\right) +\left(1-\mu_0\right)\pi\left(l\vert L\right)  \right] \geq \mu_0 \pi\left(l\vert H\right) \tag{$OB_L$}  \\
		& R\left(\pi\right) - R_0 \geq T \tag{$IR_S$}  \\
		& \pi\left(l\vert L\right),\pi\left(l\vert H\right)\in\left[0,1\right]  \tag{$FE$}
	\end{align} }
	
	First, $x_H\geq H$ in the optimal mechanism. To see this by contradiction. Suppose $x_H< H$, then $\left(x_H - H\right)^2>0$. Increasing $f_H$ while decreasing $\left(x_H - H\right)^2$ by equal size will not affect $IC_H$~\eqref{ich} $IR_H$~\eqref{irh} and $IR_L$~\eqref{irl}. This change does not affect $IC_L$~\eqref{icl} either. To see this, there are two cases to consider: $x_H < L$ and $L < x_H < H$. Case 1: when $x_H < L$, decreasing $\left(x_H - H\right)^2$ means increasing $x_H$, which will decrease
	\begin{equation*}
		V - \left(x_H-L\right)^2 - f_H = V - \left( x_H - H\right)^2 - f_H - \delta^2 - 2\delta \left(x_H-H\right),
	\end{equation*} 
    because changes in $-\left( x_H - H\right)^2$ and $-f_H$ are canceled out, and $- 2\delta \left(x_H-H\right)$ decreases. So $IC_L$~\eqref{icl} is not affected. Case 2: when $L < x_H < H$, decreasing $\left(x_H - H\right)^2$ means increasing $x_H$, which will increase $\left(x_H - L \right)^2$. Then $IC_L$~\eqref{icl} obviously holds in this case. 
    
    Thus, increasing $f_H$ while decreasing $\left(x_H - H\right)^2$ by equal size will not affect $IC_L$~\eqref{icl} and other constraints while increasing the intermediary's payoff. Hence, $x_H< H$ is not optimal. In the optimality, it must be $x_H\geq H$.

	Second, in the optimal mechanism $x_L \leq L$. Suppose $x_L> L$, then $\left(x_L - L\right)^2>0$. Increasing $f_L$ while decreasing $\left(x_L - L\right)^2$ by equal size will not affect $IC_L$~\eqref{icl} $IR_H$~\eqref{irh} and $IR_L$~\eqref{irl}. It does not affect $IC_H$~\eqref{ich} either. To see this, decreasing $\left(x_L - L\right)^2$ means decreasing $x_L$ which will decrease  
	\begin{equation*}
		V - \left(x_L-H\right)^2 - f_L = V - \left(x_L-L\right)^2  - f_L- \delta^2 - 2\delta \left(x_L-L\right)
	\end{equation*}
    because changes in $- \left(x_L-L\right)^2$ and $- f_L$ are canceled out, and $- 2\delta\left(x_L-L\right) $ decreases as $x_L$ decreases. So increasing $f_L$ while decreasing $\left(x_L - L\right)^2$ by equal size will not affect $IC_H$~\eqref{ich} and thus not affect all the constraints but increase the intermediary's payoff. Hence, $x_L>L$ cannot be optimal.
    
    In particular, with $\mathcal X = \{H,L\}$, $x_L=L$ and $x_H=H$ which proves~\Cref{lem_2service}. 
    
\end{proof}

\subsection{Proof of~\Cref{prop_binary}}\label{pf_prop_binary}

Recall the obedience constraints in~\eqref{obh_0} and~\eqref{obl_0} are 
\begin{eqnarray}
	\frac{v_L}{v_H}\left( 1- \mu_0 \right) \pi\left(l\vert L\right) - \left( 1- \frac{v_L}{v_H} \right)\mu_0 \pi\left(l\vert H\right)  &\geq& \frac{v_L}{v_H} - \mu_0  \label{obh} \\
	\frac{v_L}{v_H}\left( 1- \mu_0 \right) \pi\left(l\vert L\right) - \left( 1- \frac{v_L}{v_H } \right) \mu_0 \pi\left(l\vert H\right) &\geq& 0 \notag
\end{eqnarray}
where the second inequality is shortened by~\eqref{obl}.	

\subsubsection{Proof of Part 1 of~\Cref{prop_binary}: $\mu_0 > v_L/v_H$}
		
When $\mu_0 > \frac{v_L}{v_H}$, the obedience constraint~\eqref{obh} is redundant, we can ignore~\eqref{obh} and consider~\eqref{obl} only. The type-H consumers' outside option $\max\left\{v_H - p_0, 0\right\}=0$ as $p_0=v_H$. So the intermediary's problem is simplified into
	\begin{align*}
		&\max_{f,\pi } \text{ }
		f_H \mu_0  + f_L \left(1-\mu_0\right) 
		- \mu_0\pi\left(l\vert H\right)\left(v_H - v_L \right) + \left(1-\mu_0\right)\pi\left(l\vert L\right) v_L  \\
	\text{s.t.} &~\eqref{ich}~\eqref{icl}~\eqref{irl}~\eqref{fesi}~\eqref{obl}~\eqref{irh_largemu}
    \end{align*}
    The constraints are rewritten as follows
    \begin{align*}
		&f_H - f_L \leq \left( H - L \right)^2  - \left[ \pi\left(l\vert L\right) - \pi\left(l\vert H\right)\right] \left( v_H - v_L \right)\tag{$IC_H$}  \\
		&- \left( H - L \right)^2 \leq f_H - f_L \tag{$IC_L$} \\                                                              
		&f_H \leq V + \pi\left(l\vert H\right) \left( v_H - v_L \right)  \tag{$IR_H$}\\
		&f_L  \leq V \tag{$IR_L$}\\
		&(1- \frac{v_L}{v_H }) \mu_0 \pi\left(l\vert H\right) \leq \frac{v_L}{v_H}\left( 1- \mu_0 \right) \pi\left(l\vert L\right) \tag{$OB_L$}\\
		& \pi\left(l\vert L\right),\pi\left(l\vert H\right)\in\left[0,1\right] \tag{$FE$}
	\end{align*}
	
	Note that at least one of $IR_L$~\eqref{irl} and $IR_H$~\eqref{irh_largemu} is binding. Otherwise, increasing $f_L$ and $f_H$ simultaneously by equal size will not affect $IC_H$~\eqref{ich} and $IC_L$~\eqref{icl} and will increase the intermediary's profit. The following Lemma excludes the case of only $IR_H$~\eqref{irh_largemu} being binding.
	
	\begin{lem}
		If $IR_H$~\eqref{irh_largemu} is binding, then $IR_L$~\eqref{irl} is also binding.
	\end{lem}
	\begin{proof}
		Given the binding $IR_H$: $f_H = V + \pi\left(l\vert H\right)\Delta v $, then at least one of $IC_L$ and $IR_L$ is binding. Otherwise, increasing $f_L$ increases the objective without affecting constraints $IC_H, IC_L, IR_L$. Suppose $IC_L$ is binding: $f_H - f_L = -\delta^2$. From binding $IR_H$ and $IC_L$, $f_L = V + \pi\left(l\vert H\right)\Delta v + \delta^2> V$ which violates $IR_L$. Hence, $IC_L$ is not binding and $IR_L$ must be binding.
	\end{proof}
	
	Hence, there are two cases following the above Lemma: $(1)$ only $IR_L$ is binding; $(2)$ both $IR_L$ and $IR_H$ are binding. Now we show~\Cref{prop_binary} by three steps: Step 1 shows that it's not optimal that only $IR_L$ is binding; Step 2 simplifies the intermediary's problem with binding $IR_L$ and $IR_H$; Step 3 solves the problem.
	
	\paragraph{Step 1.} Suppose $IR_L$ is binding and $IR_H$ is not binding. As $IR_H$ is not binding, $IC_H$ must be binding. Otherwise, increasing $f_H$ will not affect $IR_H$, $IC_H$, $IC_L$, $IR_L$ while increases the intermediary's profit. So $IC_H$ must be binding: $f_H - V = \delta^2 - \left[ \pi\left(l\vert L\right) - \pi\left(l\vert H\right) \right]\Delta v$. From $IC_L$ 
	\begin{equation*}
		-\delta^2 
		\leq f_H - V = \delta^2 - \left[ \pi\left(l\vert L\right) - \pi\left(l\vert H\right) \right]\Delta v 
	\end{equation*} 
	and from (non-binding) $IR_H$
	\begin{equation*}
		f_H - V = \delta^2 - \left[ \pi\left(l\vert L\right) - \pi\left(l\vert H\right) \right]\Delta v \leq \pi\left(l\vert H\right)\Delta v
	\end{equation*} 
	so $\pi\left(l\vert L\right) - \pi\left(l\vert H\right) \leq \frac{2\delta^2}{\Delta v}$ and $\pi\left(l\vert L\right) \geq \frac{\delta^2}{\Delta v}$. 
	
	Plugging the binding $IR_L$ and $IC_H$: $f_L = V$ and $ f_H = V + \delta^2 - \left[ \pi\left(l\vert L\right) - \pi\left(l\vert H\right) \right]\Delta v$ into the objective, ignoring the nonbinding $IR_H$, the problem is to maximize
	\begin{eqnarray*}
		&& V + \mu_0 \delta^2 
		-\mu_0 \left[ \pi\left(l\vert L\right) - \pi\left(l\vert H\right) \right]\Delta v - \mu_0\pi\left(l\vert H\right)\left(v_H - v_L \right) + \left(1-\mu_0\right)\pi\left(l\vert L\right) v_L \\
		&=& V + \mu_0 \delta^2 + \pi\left(l\vert L\right) \left[ v_L - \mu_0 v_H \right] 
	\end{eqnarray*}
	subject to $\pi\left(l\vert L\right) - \pi\left(l\vert H\right) \leq \frac{2\delta^2}{\Delta v}$. The solution is $\pi\left(l\vert L\right)=0$ because $v_L - \mu_0 v_H < 0$. And $\pi\left(l\vert H\right)=0$ by $OB_L$: $\left( 1-\frac{v_L}{v_H} \right)\mu_0 \pi\left(l\vert H\right) \leq \frac{v_L}{v_H}\left(1-\mu_0\right) \pi\left(l\vert L\right)$. But the nonbinding $IR_H$ requires $\pi\left(l\vert L\right) \geq \frac{\delta^2}{\Delta v}$, which does not hold at $\pi\left(l\vert L\right)=0$. Hence, it's not feasible that $IR_L$ is binding while $IR_H$ is not binding. 
	
	\paragraph{Step 2.} Following the first step, it must be that both $IR_L$ and $IR_H$ are binding: $f_H = V + \pi\left(l\vert H\right) \Delta v$, $f_L =V$. Then from $IC_L$ 
	\begin{equation*}
		-\delta^2 \leq \pi\left(l\vert H\right) \Delta v 
	\end{equation*}
	and from $IC_H$
	\begin{equation*}
		\pi\left(l\vert L\right) \Delta v - \delta^2 \leq 0
	\end{equation*}
	so $\pi\left(l\vert L\right)\leq \frac{\delta^2}{\Delta v}$. By plugging the binding $IR_L$ and $IR_H$ into the objective, the intermediary chooses $\pi\left(l\vert L\right)$ and $\pi\left(l\vert H\right)$ to maximize 
	\begin{equation*}
		  V + \mu_0 \pi\left(l\vert H\right) \Delta v - \mu_0\pi\left(l\vert H\right)\Delta v
		+ \left(1-\mu_0\right)\pi\left(l\vert L\right) v_L
	\end{equation*}
	subject to $OB_L$, $\pi\left(l\vert L\right)\leq \frac{\delta^2}{\Delta v}$ and $\pi\left(l\vert L\right),\pi\left(l\vert H\right)\in\left[0,1\right]$. 
	
	\paragraph{Step 3.} Hence, the solution depends on $\frac{\delta^2}{\Delta v}$. When $\frac{\delta^2}{\Delta v}\geq 1$, $\pi\left(l\vert L\right) = 1 $ and $\pi\left(l\vert H\right)$ is arbitrary if it satisfies $OB_L$: $\frac{v_L}{v_H}\left( 1- \mu_0 \right) \geq ( 1- \frac{v_L}{v_H } ) \mu_0 \pi\left(l\vert H\right)$. The intermediary's payoff is $V + \left(1-\mu_0\right)v_L $. 
	
    When $\frac{\delta^2}{\Delta v} < 1$, $\pi\left(l\vert L\right) = \frac{\delta^2}{\Delta v}$ and $\pi\left(l\vert H\right)$ is arbitrary if it satisfies $OB_L$: $\pi(l\vert H)\leq \frac{C \delta^2}{\Delta v}$. The corresponding intermediary payoff is $V + \left(1-\mu_0\right)v_L \Delta v$. 
		
\subsubsection{Proof of Part 2 of~\Cref{prop_binary}: $\mu_0 <  v_L/v_H$}
The detailed results of the part 2 of \Cref{prop_binary} are described below:
Suppose $\mu_0 <v_L/v_H$. In the optimal mechanism, the intermediary sets service fees to extract all the consumers surplus in the service and product markets, $f_L =V$, $f_H = V - \left( 1-\pi\left(l\vert H\right) \right) \Delta v$
	%\begin{equation*}
	%	f_L =V, \text{ } f_H = V - \left( 1-\pi\left(l\vert H\right) \right) \Delta v
	%\end{equation*}
	and the optimal data trade depends on the cross-market comparison between the mismatch cost $\delta ^2$ and the vertical differentiation $\Delta v$: $\pi\left(l\vert L\right) = 1$, $\pi\left(l\vert H\right) \in [\max\{1-\frac{\delta^2}{\Delta v}, 0\},1]$
	%\begin{equation*}
	%	\pi\left(l\vert L\right) = 1,\text{ $\pi\left(l\vert H\right)$ can be anywhere between $\max\left\{1-\frac{\delta^2}{\Delta v}, 0\right\}$ and $1$.} %\pi\left(l\vert H\right) \in\left[\max\left\{1-\frac{\delta^2}{\Delta v}, 0\right\},1\right].
	%\end{equation*}
	The data trade fee is $T =  \mu_0\Delta v \left( 1 -\pi\left(l\vert H\right) \right)$.
	%\begin{equation*}
	%	T =  \mu_0\Delta v \left( 1 -\pi\left(l\vert H\right) \right).
	%\end{equation*}
	And social welfare is $V + v_L + \mu_0\Delta v$.
	%\begin{equation*}
	%	V + v_L + \mu_0\Delta v.
	%\end{equation*}

\begin{proof}

When $\mu_0<\frac{v_L}{v_H}$, the obedience constraint for the producer to charge a low price~\eqref{obl} is redundant and we consider~\eqref{obh} only.
%This is because $OB_H$~\eqref{obh} and $OB_L$~\eqref{obl} are respectively
%			\begin{eqnarray*}
%				\frac{v_L}{v_H}\left( 1- \mu_0 \right) \pi\left(l\vert L\right) - \left( 1- \frac{v_L}{v_H} \right)\mu_0 \pi\left(l\vert H\right)  &\geq& \frac{v_L}{v_H} - \mu_0 \\
%				\frac{v_L}{v_H}\left( 1- \mu_0 \right) \pi\left(l\vert L\right) - \left( 1- \frac{v_L}{v_H } \right) \mu_0 \pi\left(l\vert H\right) &\geq& 0
%			\end{eqnarray*}
%and $\frac{v_L}{v_H} - \mu_0>0$. 
Moreover, the participation constraint of the type-H consumers~\eqref{irh} is
\begin{eqnarray}
	V - \left( x_H - H \right)^2 - f_H
	+ \left( v_H - v_L \right) \pi\left( l \vert H \right)
	\geq \Delta v  \label{irh_smallmu}
\end{eqnarray}

Without data trade the producer will charge $v_L$ and all consumers purchase in the product market. The producer's revenue is $R_0= v_L$ without data trade. The net gain in the product revenue from data trade in~\eqref{netgain} becomes	
\begin{eqnarray*}
	R\left(\pi \right) - R_0 
	= - \mu_0\left[ \pi\left(l\vert H\right) - 1 \right]\left(v_H - v_L \right) + \left(1-\mu_0\right)\left[ \pi\left(l\vert L\right) -1\right] v_L 
\end{eqnarray*}

The intermediary's problem is simplified into
	\begin{align*}
		&\max_{f,\pi } \text{ } f_H \mu_0  + f_L \left(1-\mu_0\right) 
		- \mu_0\pi\left(l\vert H\right)\left(v_H - v_L \right) + \left(1-\mu_0\right)\pi\left(l\vert L\right) v_L + \mu_0 v_H - v_L \\
		&\text{s.t. \eqref{ich} \eqref{icl} \eqref{irl} \eqref{fesi} \eqref{obh} \eqref{irh_smallmu} }
	\end{align*}
%In more detail
%\begin{eqnarray*}
%	\max_{P,\pi }&& f_H \mu_0  + f_L \left(1-\mu_0\right) 
%	- \mu_0\pi\left(l\vert H\right)\left(v_H - v_L \right) + \left(1-\mu_0\right)\pi\left(l\vert L\right) v_L + \mu_0 v_H - V_L \\
%	s.t.&& \\
%	&& V - f_H + \pi\left(l\vert H\right) \left( v_H - v_L \right) 
%	\geq V -\left( v_H - v_L \right)^2 - f_L + \pi\left(l\vert L\right) \left( v_H - v_L \right)  \\
%	&& V - f_L \geq V - \left( v_H - v_L \right)^2  -  f_H  \\                                                              
%	&& V - f_H + \pi\left(l\vert H\right) \left( v_H - v_L \right)  \geq v_H - v_L \\
%	&& V - f_L  \geq 0 \\
%	&&\frac{v_L}{v_H}\left( 1- \mu_0 \right) \pi\left(l\vert L\right) - \left( 1- \frac{v_L}{v_H } \right) \mu_0 \pi\left(l\vert H\right) %\geq \frac{v_L}{v_H} - \mu_0 \\
%	&&\pi\left(l\vert L\right),\pi\left(l\vert H\right)\in\left[0,1\right]
%\end{eqnarray*}
Note that at least one of $IR_L$~\eqref{irl} and $IR_H$~\eqref{irh_smallmu} is binding. Otherwise, increasing $f_L$ and $f_H$ simultaneously by equal size will not affect $IC_H$~\eqref{ich}, $IC_L$~\eqref{icl}, $IR_H$~\eqref{irh_smallmu}, $IR_L$~\eqref{irl}, and will increase the intermediary's payoff. In addition, the following Lemma excludes the case of only $IR_L$ being binding.
	
	\begin{lem}
		If $IR_L$~\eqref{irl} is binding, then $IR_H$~\eqref{irh_smallmu} is also binding.
	\end{lem}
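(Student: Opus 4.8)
The plan is to argue by contradiction, in exact parallel with the companion lemma already proved in the proof of Proposition~\ref{prop_largemu}. Suppose we are at an optimal mechanism in the case $\mu_0<v_L/v_H$, that $IR_L$~\eqref{irl} binds, and --- toward a contradiction --- that $IR_H$~\eqref{irh_smallmu} is slack. By Lemma~\ref{lem_2service} we may take $x_L=L$ and $x_H=H$, so the binding $IR_L$ reads $f_L=V$.

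First I would note that, with the data fee already substituted out at $T=R(\pi)-R_0$, the objective is strictly increasing in $f_H$ (its only $f_H$-term is $\mu_0 f_H$), and that among the remaining constraints of the reduced program only $IC_H$~\eqref{ich} and $IR_H$~\eqref{irh_smallmu} have a feasible set that shrinks when $f_H$ is raised: $IC_L$~\eqref{icl} only loosens, while $OB_H$~\eqref{obh} and the feasibility conditions~\eqref{fesi} do not involve $f$ at all. Hence, holding $\pi$ fixed, one could push $f_H$ upward and strictly improve the objective unless $IC_H$ or $IR_H$ binds; since $IR_H$ is assumed slack, optimality forces $IC_H$~\eqref{ich} to bind.

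The rest is a short computation. Using $x_L=L$, $x_H=H$ and $f_L=V$, the binding $IC_H$ gives $f_H=V+\delta^2-[\pi(l\vert L)-\pi(l\vert H)]\Delta v$. Rewriting $IR_H$~\eqref{irh_smallmu} as $f_H\le V-(1-\pi(l\vert H))\Delta v$, I would then check that $f_H-\big(V-(1-\pi(l\vert H))\Delta v\big)=\delta^2+(1-\pi(l\vert L))\Delta v\ge 0$, since $\delta^2\ge 0$, $\Delta v>0$, and $\pi(l\vert L)\le 1$. Thus $f_H\ge V-(1-\pi(l\vert H))\Delta v$, contradicting the supposed slackness of $IR_H$, so $IR_H$ must bind. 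The only step demanding any care is the bookkeeping in the second paragraph --- confirming that, once $T$ has been fixed at the surplus-extracting level, no constraint other than $IC_H$ and $IR_H$ reacts to an increase in $f_H$ --- and I anticipate no genuine obstacle beyond that.
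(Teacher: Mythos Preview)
Your argument is correct and mirrors the paper's proof almost exactly: both assume $IR_L$ binds and $IR_H$ is slack, use the monotonicity of the objective in $f_H$ to force $IC_H$ to bind, and then derive the same contradiction. The only cosmetic difference is that the paper rearranges the final inequality to read $\pi(l\vert L)\geq 1+\delta^2/\Delta v$ (violating feasibility), whereas you equivalently show $f_H\geq V-(1-\pi(l\vert H))\Delta v$ directly contradicts the assumed slackness of $IR_H$.
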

	\begin{proof}
		Given $IR_L$ is binding $f_L= V$, we need to show \eqref{irh_smallmu} is binding. Suppose \eqref{irh_smallmu} is not binding. Then $IC_H$ must be binding. Otherwise, increasing $f_H$ will not affect $IC_H$, $IR_H$~\eqref{irh_smallmu}, $IC_L$ and increase the intermediary's payoff. So $IC_H$ is binding and it gives $f_H - V = \delta^2 - \left[ \pi\left(l\vert L\right) - \pi\left(l\vert H\right) \right]\Delta v$. From \eqref{irh_smallmu}
		\begin{equation*}
			f_H - V = \delta^2 - \left[ \pi\left(l\vert L\right) - \pi\left(l\vert H\right) \right]\Delta v 
			\leq \left[ \pi\left(l\vert H\right) - 1 \right] \Delta v
		\end{equation*} 
		so $\pi\left(l\vert L\right)\geq 1 + \frac{\delta^2}{\Delta v}$, which contradicts with $\pi\left(l\vert L\right)\in [0,1]$. Hence, \eqref{irh_smallmu} is binding.
	\end{proof}
	
	Hence, there are two cases: $(1)$ both $IR_L$ and $IR_H$~\eqref{irh_smallmu} are binding; $(2)$ only $IR_H$~\eqref{irh_smallmu} is binding.
	
	\paragraph{Step 1.} In this step, we show that both $IR_L$ and $IR_H$~\eqref{irh_smallmu} are binding. Suppose not, then only $IR_H$~\eqref{irh_smallmu} is binding: $f_H = V + \left[ \pi\left(l\vert H\right) -1\right] \Delta v$. As $IR_L$ is not binding, $IC_L$ must be binding. Otherwise, increasing $f_L$ will not affect $IR_L$~\eqref{irl}, $IC_L$~\eqref{icl}, $IC_H$~\eqref{ich}, and will increase the intermediary's payoff. So $IC_L$ is binding. The binding $IC_L$ and $IR_H$ \eqref{irh_smallmu} give $f_L = f_H + \delta^2 = V + \left[ \pi\left(l\vert H\right) -1\right] \Delta v + \delta^2 $. From $IC_H$ 
	\begin{equation*}
		-\delta^2 \leq  \delta^2  - \left[ \pi\left(l\vert L\right) - \pi\left(l\vert H\right)\right] \Delta v 
	\end{equation*} 
	and from (non-binding) $IR_L$
	\begin{equation*}
		V + \left[ \pi\left(l\vert H\right) -1\right] \Delta v + \delta^2 < V
	\end{equation*}
	or we rewrite $IC_H$ and $IR_L$ as
	\begin{eqnarray*}
		\pi\left(l\vert L\right) - \pi\left(l\vert H\right) &\leq& \frac{2 \delta^2}{\Delta v} \\
		\pi\left(l\vert H\right)  &\leq& 1 - \frac{\delta^2}{\Delta v} 
	\end{eqnarray*}  
    Plug $f_L$ and $f_H$ from the binding $IR_H$ and $IC_L$ into the objective and ignore the nonbinding constraint, the intermediary maximizes
    \begin{eqnarray*}
    	&& \mu_0\left[  V + \left( \pi\left(l\vert H\right) -1\right) \Delta v \right] + (1-\mu_0) \left[ V + \left( \pi\left(l\vert H\right)-1 \right) \Delta v + \delta^2  \right ] \\
    	&& + \left(1-\mu_0\right)\pi\left(l\vert L\right) v_L - \mu_0\pi\left(l\vert H\right)\Delta v +\mu_0 v_H-v_L \\
    	&=& V + \left[ \pi\left(l\vert H\right) -1\right] \Delta v + (1-\mu_0) \delta^2 + \left(1-\mu_0\right)\pi\left(l\vert L\right) v_L - \mu_0\pi\left(l\vert H\right)\Delta v +\mu_0v_H-v_L \\
    	&=&	 V + (1-\mu_0) \delta^2 -\left(1-\mu_0 \right) v_H + \left(1-\mu_0\right)\Delta v \pi\left(l\vert H\right) + \left(1 -\mu_0 \right)v_L \pi\left(l\vert L\right) 
    \end{eqnarray*}
    subject to $OB_H$ and $\pi\left(l\vert L\right) - \pi\left(l\vert H\right) \leq \frac{2\delta^2}{\Delta v}$. 
  
    The solution to this reduced problem is $\pi\left(l\vert L\right) = 1$ and $\pi\left(l\vert H\right) = 1 $. Check the nonbinding $IR_L$: $\pi\left(l\vert H\right) = 1 < 1 - \frac{\delta^2}{\Delta v} $, which does not hold. So it is not optimal that only $IR_H$~\eqref{irh_smallmu} is binding. Hence, $IR_L$ and $IR_H$~\eqref{irh_smallmu} are binding.
	
    \paragraph{Step 2.} Following the first step, it must be both $IR_L$ and $IR_H$~\eqref{irh_smallmu} are binding: $f_H = V + \left[ \pi\left(l\vert H\right) - 1 \right] \Delta v$, $f_L =V$. Then plugging $f_L$, $f_H$ into $IC_H$
	\begin{equation*}
		\left[ \pi\left(l\vert H\right) - 1\right] \Delta v \leq \delta^2 - \left[ \pi\left(l\vert L\right) - \pi\left(l\vert H\right) \right]\Delta v
	\end{equation*}
	and into $IC_L$
	\begin{equation*}
		\left[ 1 - \pi\left(l\vert H\right) \right] \Delta v - \delta^2 \leq 0
	\end{equation*}
	or
	\begin{eqnarray*}
		\pi\left(l\vert L\right) &\leq& 1 + \frac{\delta^2}{\Delta v} \\
		\pi\left(l\vert H\right) &\geq& 1 - \frac{\delta^2}{\Delta v}
	\end{eqnarray*}
	
	By plugging $f_H = V + \left[ \pi\left(l\vert H\right) - 1 \right] \Delta v$, $f_L =V$ into the objective, the intermediary chooses $\pi\left(l\vert L\right)$ and $\pi\left(l\vert H\right)$ to maximize
	\begin{eqnarray*}
		&& V + \mu_0 \left[ \pi\left(l\vert H\right) - 1 \right] \Delta v
		- \mu_0\pi\left(l\vert H\right)\Delta v
		+ \left(1-\mu_0\right)\pi\left(l\vert L\right) v_L + \mu_0 v_H - v_L \\
		&=& V + \left(1-\mu_0\right) v_L \pi\left(l\vert L\right) - \left(1-\mu_0\right)v_L
	\end{eqnarray*}
	subject to $OB_H$ and 
	\begin{eqnarray*}
		\pi\left(l\vert H\right)&\geq& 1-\frac{\delta^2}{\Delta v }\\
		\pi\left(l\vert L\right) &\leq& 1 + \frac{\delta^2}{\Delta v }
	\end{eqnarray*}
    
    \paragraph{Step 3.} The solution to the reduced problem is $\pi\left(l\vert L\right)= 1$, and $\pi\left(l\vert H\right)$ depends on $\Delta v$. When $\frac{\delta^2}{\Delta v}\geq 1$, $\pi\left(l\vert H\right)\in \left[0,1\right]$. The profit from the service is $V - \mu_0\Delta v\left(1-\pi\left(l\vert H\right)\right)$ and the profit from the data trade is $\mu_0\Delta v \left( 1 -\pi\left(l\vert H\right) \right) $. So its total payoff is $V$.
	
    When $\frac{\delta^2}{\Delta v} <1$, $\pi\left(l\vert H\right)\in \big[1-\frac{\delta^2}{\Delta v},1\big]$ and the intermediary's payoff is $V$.
	
\end{proof}

\subsection{Proof of~\Cref{cor_notfull}}\label{pf_cor_notfull}

\begin{proof}
	Consider $\mu_0>\frac{v_L}{v_H}$ and $\frac{\delta^2}{\Delta v} < \frac{\Delta v}{v_L+v_H}$ (the other case is similar).
	Suppose the data trade is of full information, $\pi(l|L)=1$ and $\pi(l|H)=0$. Then $OB_L$ holds. By plugging $f_L$ and $f_H$ into the objective, the intermediary chooses $x_H$, $x_L$, $\pi\left(l\vert L\right)$, $\pi\left(l\vert H\right)$ to maximize 
	$V - \mu_0 \left( x_H - H \right)^2 - \left(1-\mu_0\right)\left( x_L - L \right)^2
	+ \left(1-\mu_0\right) v_L $
	%\begin{equation*}
	%	V - \mu_0 \left( x_H - H \right)^2 - \left(1-\mu_0\right)\left( x_L - L \right)^2
	%	+ \left(1-\mu_0\right) v_L 
	%\end{equation*}
	subject to the following $IC_L$, $IC_H$ 
	
	\begin{align*}
		\left( x_H - H \right)^2 - \left( x_H - L \right)^2 &\leq 0  \tag{$IC_L$} \\
		\left( x_L - L \right)^2 - \left( x_L - H \right)^2 &\leq - \Delta v  \tag{$IC_H$ }
	\end{align*}
    So $x_H = H$, and if $\delta^2\geq \Delta v$, then $x_L= L$; if $\delta^2 < \Delta v$, then $x_L=\frac{L+H}{2}-\frac{\Delta v}{2\delta}$. So given $\frac{\delta^2}{\Delta v} < \frac{\Delta v}{v_L+v_H}$, we have $x_L=\frac{L+H}{2}-\frac{\Delta v}{2\delta}$ and the intermediary's payoff of selling full information is 
	\begin{equation*}
		V  - \left(1-\mu_0\right)\left( \frac{\delta}{2}-\frac{\Delta v}{2\delta}  \right)^2
		+ \left(1-\mu_0\right) v_L 
	\end{equation*}
	where the service profit is $V  - \left(1-\mu_0\right)\left( \frac{\delta}{2}-\frac{\Delta v}{2\delta}  \right)^2$ and the data profit is $\left(1-\mu_0\right) v_L $.
	
	Recall that the intermediary's payoff of the optimal mechanism is 
	\begin{equation*}
		V+(1-\mu_0)\left(\frac{\delta}{\Delta v} \right)^2 v_Lv_H
	\end{equation*}
	So the difference between the two payoffs is positive
	\begin{eqnarray*}
		&&V+(1-\mu_0)\left(\frac{\delta}{\Delta v} \right)^2 v_Lv_H 
		- \left[
		V  - \left(1-\mu_0\right)\left( \frac{\delta}{2}-\frac{\Delta v}{2\delta}  \right)^2
		+ \left(1-\mu_0\right) v_L
		\right] \\
		&=&(1-\mu_0)\left(\frac{\delta}{\Delta v} \right)^2 v_Lv_H 
		+ \left(1-\mu_0\right)\left( \frac{\delta}{2}-\frac{\Delta v}{2\delta}  \right)^2
		- \left(1-\mu_0\right) v_L \\
		&\propto&\left(\frac{\delta}{\Delta v} \right)^2 v_Lv_H 
		+ \left( \frac{\delta}{2}-\frac{\Delta v}{2\delta}  \right)^2
		- v_L \\
		%&=&\frac{4\delta^4v_Lv_H +(\Delta v)^2(\delta^2-\Delta v)^2-4(\Delta v)^2\delta^2 v_L }{4(\delta^2)(\Delta v)^2} \\
		&\propto& [ (v_L+v_H)\delta^2 - (\Delta v)^2 ]^2 > 0
	\end{eqnarray*}
	which implies selling full information is not optimal.
	
\end{proof}

\subsection{Proof of~\Cref{eq_ban}}\label{pf_eq_ban}
\begin{proof}
	Without a data market, the intermediary maximizes its service revenue by choosing the service provisions and fees to solve
	\begin{align*}
		\max_{f,x} \text{ }& f_H\mu_0+f_L\left(1-\mu_0\right)\\
		\text{subject to } &V - \left(x_H - H\right)^2 - f_H \geq V - \left(x_L - H\right)^2 - f_L \\
		& V - \left(x_L -L\right)^2 - f_L \geq V -\left(x_H-L\right)^2 - f_H \\
		& V - \left(x_H -H\right)^2 - f_H \geq 0 \\
		& V - \left(x_L -L\right)^2 - f_L \geq 0   
	\end{align*}
	
	Under the assumption $\mathcal X = \left\{L,H\right\}$, $x_L = L$ and $x_H = H$ by~\Cref{lem_service}. Then the above constraints turn into $- f_H \geq -\delta^2 - f_L$, $- f_L \geq -\delta^2 - f_H$, $f_H \leq V$, $f_L \leq V$.
	%\begin{align*}
	%	- f_H &\geq -\delta^2 - f_L \\
	%	- f_L &\geq -\delta^2 - f_H \\
	%	f_H &\leq V \\
	%	f_L &\leq V   
	%\end{align*}
	Hence, $f_L=f_H =V$.
	
	%This claim is also true under the general service space $\mathcal X = \mathbb R$: $x_L=L$, $x_H=H$ and $f_L=f_H=V$.
	 
\end{proof}

\subsection{Proof of~\Cref{prop_general_service_smallmu0}}\label{general_service_smallmu0}
\begin{comment}
	Results on General Service with $\mu_0<v_L/v_H$:
	\begin{prop}\label{prop_general_service_smallmu0}
		Suppose the service space $\mathcal X = \mathbb R$ and $\mu_0<\frac{v_L}{v_H}$. In the optimal mechanism, there is no distortion in service provisions $x_L = L$, $x_H = H$, the service fees are $f_L=V$, $f_H = V + [ \pi( l \vert H)-1 ] \Delta v $, and the intermediary's data trade is
		\begin{equation*}
			\pi\left(l\vert L\right)=1,\text{ } \pi\left(l\vert H\right)\in\Big[\max \Big\{ 0,1-\frac{\delta^2}{\Delta v} \Big\},1\Big]
		\end{equation*}
		and the data trade fee is
		\begin{equation*}
			T = \mu_0 [ 1-\pi\left(l\vert H\right) ]\Delta v.
		\end{equation*}	
		In this case, social welfare is
		\begin{equation*}
			V + \left(1-\mu_0\right)v_L + \mu_0 v_H.
		\end{equation*} 
		
	\end{prop}
	
	This goes back to the binary analysis in~\Cref{prop_smallmu}. The following proves~\Cref{prop_general_service_smallmu0}.
\end{comment}

\begin{proof}
	When $\mu_{0}<\frac{v_{L}}{v_{H}}$, the obedience constraint $OB_{L}$ is redundant and the type-H consumers' outside option max $\left\{v_{H}-p_{0}, 0\right\}=v_{H}-v_{L}$. So the intermediary's problem is simplified into
	\begin{equation*}
		\max_{f,x,\pi,T}  f_H \mu_0  + f_L \left(1-\mu_0\right) + T
	\end{equation*}
	subject to the following constraints
	\begin{align}
		%&\max_{f,x,\pi,T}  f_H \mu_0  + f_L \left(1-\mu_0\right) + T \notag \\
		%&\text{s.t.} \notag \\
		& V - \left( x_H - H \right)^2 - f_H + \pi\left(l\vert H\right) \left( v_H - v_L \right) \notag \\
		&\hspace{2cm} \geq V - \left( x_L - H \right)^2 - f_L + \pi\left(l\vert L\right) \left( v_H - v_L \right)  \tag{$IC_H$ } \\
		& V - \left( x_L - L \right)^2  - f_L 
		\geq V -  \left( x_H - L \right)^2  - f_H  \tag{$IC_L$}  \\                                                              
		& V - \left( x_H - H \right)^2  - f_H + \pi\left(l\vert H\right) \left( v_H - v_L \right)  \geq v_H - v_L \tag{$IR_H$} \\
		& V -  \left( x_L - L \right)^2 - f_L  \geq 0 \tag{$IR_L$}  \\
		& \frac{v_{L}}{v_{H}}\left(1-\mu_{0}\right) \pi\left(l \vert L\right)-\left(1-\frac{v_{L}}{v_{H}}\right) \mu_{0} \pi\left(l \vert H\right) \geq \frac{v_{L}}{v_{H}}-\mu_{0} \tag{$OB_H$}  \\
		&R\left(\pi\right) - R_0 \geq T \tag{$IR_S$} \\
		& \pi\left(l\vert L\right),\pi\left(l\vert H\right)\in\left[0,1\right]  \tag{$FE$} 
	\end{align}
	
	\begin{lem}
		If $IR_{L}$~\eqref{irl} is binding, then $IR_{H}$~\eqref{irh_smallmu} is also binding.
	\end{lem}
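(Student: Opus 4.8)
The plan is to argue by contradiction, adapting to a general service level the analogous step in the proof of Proposition~\ref{prop_smallmu}. Assume $IR_L$~\eqref{irl} binds while $IR_H$~\eqref{irh_smallmu} is slack.

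First I would show that $IC_H$~\eqref{ich} must then bind. If instead both $IR_H$~\eqref{irh_smallmu} and $IC_H$~\eqref{ich} were slack, raising $f_H$ by a small $\varepsilon>0$ leaves the data policy $\pi$ untouched — hence $T=R(\pi)-R_0$ and the constraints $OB_H$, $IR_S$, $FE$ are unaffected — keeps $IR_L$~\eqref{irl} intact, relaxes $IC_L$~\eqref{icl}, and for $\varepsilon$ small preserves the two slack inequalities, while strictly increasing the objective by $\mu_0\varepsilon>0$. This contradicts optimality, so $IC_H$~\eqref{ich} binds.

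Next I would combine the two binding constraints. From the binding $IR_L$~\eqref{irl}, $f_L = V - (x_L-L)^2$. Substituting this into the equality form of $IC_H$~\eqref{ich} and simplifying shows that the left-hand side of $IR_H$~\eqref{irh_smallmu}, namely $V-(x_H-H)^2-f_H+\pi\left(l\vert H\right)\Delta v$, equals $(x_L-L)^2-(x_L-H)^2+\pi\left(l\vert L\right)\Delta v$; thus $IR_H$~\eqref{irh_smallmu} is equivalent to the requirement $(x_L-L)^2-(x_L-H)^2+\pi\left(l\vert L\right)\Delta v\ge\Delta v$.

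Finally I would bound the left-hand side. Since $(x_L-L)^2-(x_L-H)^2=\delta\,(2x_L-L-H)$ and Lemma~\ref{lem_service} gives $x_L\le L$, this term is at most $-\delta^2<0$; combined with $\pi\left(l\vert L\right)\le 1$ the left-hand side is at most $-\delta^2+\Delta v<\Delta v$, contradicting $IR_H$~\eqref{irh_smallmu}. Hence $IR_L$~\eqref{irl} binding forces $IR_H$~\eqref{irh_smallmu} to bind. The computation is short; the only points that need care are checking in the first step that perturbing $f_H$ does not interact with the data-market constraints, and noting that the final strict inequality $-\delta^2+\Delta v<\Delta v$ is exactly what uses $\mu_0<v_L/v_H$, since that is what pins the type-$H$ outside option at $\Delta v$ in $IR_H$~\eqref{irh_smallmu}.
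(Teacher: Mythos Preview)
Your proposal is correct and follows essentially the same route as the paper: the paper likewise shows (via the $f_H$-perturbation) that $IC_H$ must bind if $IR_H$ is slack, then combines binding $IR_L$ and $IC_H$ to obtain $2x_L\ge L+H$, contradicting $x_L\le L$ from Lemma~\ref{lem_service}. Your final step is the contrapositive of the paper's (you use $x_L\le L$ to violate $IR_H$ rather than using $IR_H$ to violate $x_L\le L$), and your closing remark about where $\mu_0<v_L/v_H$ enters is slightly misphrased---it is not the inequality $-\delta^2+\Delta v<\Delta v$ itself but the fact that the right-hand side of $IR_H$ is $\Delta v$ (rather than $0$) that uses $\mu_0<v_L/v_H$---but the substance is right.
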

	\begin{proof}
		Given $IR_{L}$ is binding $f_{L}=V-\left(x_{L}-L\right)^{2}$. First, at least one of $IC_{H}$ and $IR_{H}$ is binding. Suppose not, both are not binding. Increasing $f_{H}$ will not affect $IC_{H}$, $IR_H$, $IC_{L}$ and increase the intermediary's payoff.
		
		Second, $I C_{H}$ is not binding. To see this, suppose $I C_{H}$ is binding, then it gives
		$$
		V-\left(x_{H}-H\right)^{2}-f_{H}+\pi\left(l \vert H\right)\left(v_{H}-v_{L}\right)=V-\left(x_{L}-H\right)^{2}-f_{L}+\pi\left(l \vert L\right)\left(v_{H}-v_{L}\right)
		$$
		Together with $f_{L}=V-\left(x_{L}-L\right)^{2}$, we have
		$$
		\begin{aligned}
			f_{H} &=V+\left(x_{L}-H\right)^{2}-\left(x_{H}-H\right)^{2}-\left(x_{L}-L\right)^{2}-(\pi\left(l \vert L\right)-\pi\left(l \vert H\right)) \Delta v \\
			& \leq V-\left(x_{H}-H\right)^{2}+\pi\left(l \vert H\right) \Delta v-\Delta v
		\end{aligned}
		$$
		where the inequality follows from $I R_{H}$. So
		$2 x_{L} \geq L+H,$
		leading a contradiction with $x_{L} \leq L .$ Hence, $I C_{H}$ is not binding.
		
		Thus, by the above two points, $I R_{H}$ is binding.
	\end{proof}
	
	In addition, we note that at least one of $IR_{L}$ and $IR_{H}$ is binding. Otherwise, increasing $f_{L}$ and $f_{H}$ simultaneously by equal size will not affect $IC_{H}$, $IC_{L}$, $IR_{H}$, $IR_{L}$ and will increase the intermediary's payoff.
	
	Hence, by the above lemma, there are two cases: $(1)$ both $IR_{L}$ and $IR_H$ are binding; $(2)$ only $IR_H$ is binding. The following Step 1 shows that Case (2) is not feasible by contradiction, and Step 2 solves the optimal mechanism under Case (1).
	
	\paragraph{Step 1.} We show that only $IR_H$~\eqref{irh_smallmu} is binding is not feasible. Suppose not, only~\eqref{irh_smallmu} is binding: $f_{H}=V-\left(x_{H}-H\right)^{2}+\left[\pi\left(l \vert H\right)-1\right] \Delta v$. As $I R_{L}$ is not binding, $IC_{L}$ must be binding. Otherwise, increasing $f_{L}$ will not affect $IR_{L}$, $IC_{L}$, $IC_{H}$ and will increase intermediary's payoff. So the binding~\eqref{irh_smallmu} and $IC_{L}$ give
	\begin{eqnarray*}
		f_{H}&=&V-\left(x_{H}-H\right)^{2}+(\pi\left(l \vert H\right)-1) \Delta v \\
		f_{L}&=&V-\left(x_L-L\right)^2 + \left(x_H-L\right)^2 - \left(x_H-H\right)^2 + \left(\pi\left(l\vert H\right)-1\right)\Delta v 
	\end{eqnarray*}	
	Then $IC_{H}$ is
	\begin{equation*}
		2\left(x_L-x_H\right) \leq [\pi\left(l\vert H\right) - \pi\left(l\vert L\right) ] \frac{\Delta v}{\delta}
	\end{equation*}
	and ignore (non-binding) $IR_{L}$ is
	\begin{equation*}
		2 x_H - L - H < [1-\pi\left(l \vert H\right) ]  \frac{\Delta v}{\delta}
	\end{equation*}
	The intermediary's problem is
	$$
	\begin{aligned}
		\max _{x, \pi} V-\left(x_{H}-{H}\right)^{2}&+\pi(l \vert H)\left(1-\mu_{0}\right) \Delta v+\left[\left(x_{H}-{L}\right)^{2}-\left(x_{L}-{L}\right)^{2}\right]\left(1-\mu_{0}\right) \\
		&+\left(1-\mu_{0}\right) \pi\left(l \vert L\right) v_{L}-\left(1-\mu_{0}\right) v_{H}
	\end{aligned}
	$$
	subject to $O B_{H}$ and $I C_{H}$: $2\left(x_L-x_H\right) \leq [\pi\left(l\vert H\right) - \pi\left(l\vert L\right) ] \frac{\Delta v}{\delta}$,
	and $\pi\left(l \vert L\right)$, $\pi\left(l \vert H\right) \in\left[0,1\right]$. If $I C_{H}$ is not binding and ignore the constraint, the solution is
	$$
	x_{L}=L, x_{H}=\frac{H-\left(1-\mu_{0}\right) L }{\mu_{0}}, \pi\left(l \vert L\right)=\pi\left(l \vert H\right)=1
	$$
	which satisfies $IC_{H}$ but doesn't satisfy $IR_{L}$. Hence, $IC_{H}$ is binding: 
	\begin{equation}
		2\left(x_L-x_H\right) = [\pi\left(l\vert H\right) - \pi\left(l\vert L\right) ] \frac{\Delta v}{\delta}.\label{ich_binding}
	\end{equation}
	Plugging~\eqref{ich_binding} into the intermediary's objective function, it maximizes
	\begin{align*}
		&V-\left(x_{H}-H\right)^{2}+\pi\left(l \vert H\right)\left(1-\mu_{0}\right) \Delta v 
		  +\left(1-\mu_{0}\right) \pi\left(l \vert L\right) v_{L}  \\
		& +\Big[\left(x_{H}-L\right)^{2}-\Big(\frac{\pi\left(l \vert H\right)-\pi\left(l \vert L\right)}{2} \frac{\Delta v}{\delta} + x_{H} -L \Big)^{2}\Big]\left(1-\mu_{0}\right) 
		-\left(1-\mu_{0}\right) v_{H} 
	\end{align*}
	subject to $OB_H$, $\pi\left(l \vert L\right)\in\left[0,1\right]$, $\pi\left(l \vert H\right) \in\left[0,1\right]$. The first order derivative with respect to $\pi(l\vert H)$ is positive, so $OB_H$ is binding: $\pi(l\vert H)
	=\frac{\frac{v_L}{v_H}(1-\mu_0)\pi(l\vert L) - (\frac{v_L}{v_H}-\mu_0)}{(1-\frac{v_L}{v_H})\mu_0}$.

	Suppose $\pi(l\vert L)\in [0,1]$ is not binding. The first order condition with respect to $\pi(l\vert L)$ and $x_{H}$ are respectively
	\begin{eqnarray*}
		x_H - L &=& \frac{\delta v_L}{v_L-\mu_0v_H} - \frac{\pi(l\vert H)-\pi(l\vert L)}{2} \frac{\Delta v}{\delta} \\
		x_{H}-H &=&-\left(1-\mu_{0}\right)\frac{ \pi\left(l \vert H\right)-\pi\left(l \vert L\right) }{2} \frac{\Delta v}{\delta}
	\end{eqnarray*}
	from which $H - (1-\mu_0) L - \frac{(1-\mu_0)\delta v_L}{v_L-\mu_0v_H} = \mu_0x_H$ and then
	\begin{eqnarray*}
		\mu_0(x_H - H) &=& H - (1-\mu_0) L - \frac{(1-\mu_0)\delta v_L}{v_L-\mu_0v_H} - \mu_0 H \\
		%&=&(1-\mu_0)\delta  - \frac{(1-\mu_0)\delta v_L}{v_L-\mu_0v_H}  \\
		&=&(1-\mu_0)\delta \frac{-\mu_0v_H}{v_L-\mu_0v_H} < 0
	\end{eqnarray*}
	which contradicts with $x_H\geq H$. Hence, $\pi(l\vert L)=1$ is binding: $\pi(l\vert L)=1$. And $\pi(l\vert H)=1$ by the binding $OB_H$, then $x_H=H$. By~\eqref{ich_binding}, $x_L=x_H= H$ which is a contradiction with $x_L\leq L$. Hence, it's not feasible that only $I R_{H}$ is binding.
	
	\paragraph{Step 2.} Following Step 1, it must be both $I R_{L}$ and $I R_{H}$ are binding:
	\begin{eqnarray*}
		f_{H}&=&V-\left(x_{H}-H\right)^{2}+\pi\left(l \vert H\right) \Delta v-\Delta v \\
		f_{L}&=&V-\left(x_{L}-L\right)^{2}
	\end{eqnarray*}
	Then from $I C_{H}$
	\begin{equation*}
		\Delta v \geq-\left(x_{L}-H\right)^{2}+\left(x_{L}-L\right)^{2}+\pi\left(l \vert L\right) \Delta v
	\end{equation*}
	and from $I C_{L}$
	\begin{equation*}
		0 \geq-\left(x_{H}-L\right)^{2}+\left(x_{H}-H\right)^{2}-\pi\left(l \vert H\right) \Delta v+\Delta v
	\end{equation*}
	equivalently, they are
	\begin{eqnarray*}
		1-\pi\left(l \vert L\right) & \geq& \left( 2 x_{L}- L - H\right)\frac{\delta}{\Delta v}\\
		1-\pi\left(l \vert H\right) & \leq& \left( 2 x_{H}- L - H\right)\frac{\delta}{\Delta v}
	\end{eqnarray*}
	By plugging $f_{H}$ and $f_{L}$ into the objective, the intermediary chooses $\pi\left(l \vert L\right)$ and $\pi\left(l \vert H\right)$ to maximize
	$$
	V-\mu_{0}\left(x_{H}-H\right)^{2}-\left(1-\mu_{0}\right)\left(x_{L}-L\right)^{2}+\left(1-\mu_{0}\right) \pi\left(l \vert L\right) v_{L}-v_{L}\left(1-\mu_{0}\right)
	$$
	subject to $O B_{H}$ and $\pi\left(l \vert L\right)$, $\pi\left(l \vert H\right)\in\left[0,1\right]$ and
	\begin{eqnarray*}
		1-\pi\left(l \vert L\right) & \geq& \left( 2 x_{L}- L - H\right)\frac{\delta}{\Delta v}\\
		1-\pi\left(l \vert H\right) & \leq& \left( 2 x_{H}- L - H\right)\frac{\delta}{\Delta v}
	\end{eqnarray*}
	Hence, $x_{L}=L$, $x_{H}=H$ and $\pi\left(l \vert L\right)=1$ and $\pi\left(l \vert H\right) \in\left[\max\left\{0,1-\frac{\delta^2}{\Delta v } \right\},1\right]$.
	
	In sum, when $\mu_{0}<\frac{v_{L}}{v_{H}}$, $x_L=L$, $x_H=H$, $f_L=V$, $f_H=V-\left[1-\pi(l\vert H)\right]\Delta v$, $\pi\left(l \vert L\right)=1$, $\pi\left(l \vert H\right) \in\left[\max\left\{0,1-\frac{\delta^2}{\Delta v } \right\},1\right]$.
	
\end{proof}

\subsection{Proof of~\Cref{prop_general_service_largemu0} }\label{pf_general_service_largemu0}

\begin{proof}
	When $\mu_0 > \frac{v_L}{v_H}$, %From $OB_H$~\eqref{obh} and $OB_L$~\eqref{obl}
	%\begin{eqnarray*}
	%	\frac{v_L}{v_H}\left( 1- \mu_0 \right) \pi\left(l\vert L\right) - \left( 1- \frac{v_L}{v_H} \right)\mu_0 \pi\left(l\vert H\right)  &\geq& \frac{v_L}{v_H} - \mu_0 \\
	%	\frac{v_L}{v_H}\left( 1- \mu_0 \right) \pi\left(l\vert L\right) - \left( 1- \frac{v_L}{v_H } \right) \mu_0 \pi\left(l\vert H\right) &\geq& 0
	%\end{eqnarray*}
	the obedience constraint $OB_H$~\eqref{obh} is redundant, we ignore~\eqref{obh} and consider only~\eqref{obl} for obedience constraints. The type-H consumers' outside option $\max\{v_H - p_0, 0\}=0$. So the intermediary's problem is simplified into
	\begin{align*}
		&\max_{f,\pi } \text{ }
		f_H \mu_0  + f_L \left(1-\mu_0\right) 
		- \mu_0\pi\left(l\vert H\right)\left(v_H - v_L \right) + \left(1-\mu_0\right)\pi\left(l\vert L\right) v_L  \\
		\text{s.t. }& ~\eqref{ich}~\eqref{icl}~\eqref{irl}~\eqref{fesi} ~\eqref{obl} ~\eqref{irh_largemu}
	\end{align*}
	which are
	\begin{align*}
		& V - \left( x_H - H \right)^2 - f_H + \pi\left(l\vert H\right) \Delta v
		\geq V - \left( x_L - H \right)^2 - f_L + \pi\left(l\vert L\right) \Delta v  \tag{$IC_H$ } \\
		& V - \left( x_L - L \right)^2  - f_L 
		\geq V -  \left( x_H - L \right)^2  - f_H  \tag{$IC_L$}  \\                                                              
		& V - \left( x_H - H \right)^2  - f_H + \pi\left(l\vert H\right) \Delta v  \geq 0 \tag{$IR_H$} \\
		& V -  \left( x_L - L \right)^2 - f_L  \geq 0 \tag{$IR_L$}  \\
		& \pi\left(l\vert H\right) \leq C\pi\left(l\vert L\right)  \tag{$OB_L$}  \\
	%	& R\left(\pi\right) - R_0 \geq T \tag{$IR_S$}  \\
		& \pi\left(l\vert L\right),\pi\left(l\vert H\right)\in\left[0,1\right] \tag{$FE$} 
	\end{align*}
	Note that at least one of $IR_L$~\eqref{irl} and $IR_H$~\eqref{irh_largemu} is binding. Otherwise, both of $IR_L$~\eqref{irl} and $IR_H$~\eqref{irh_largemu} are not binding. So increasing $f_L$ and $f_H$ simultaneously by equal size will not affect $IC_H$~\eqref{ich} and $IC_L$~\eqref{icl}, and will increase the intermediary's profit. The following Lemma excludes the case of only $IR_H$~\eqref{irh_largemu} being binding.
	
	\begin{lem}
		If $IR_H$~\eqref{irh_largemu} is binding, then $IR_L$~\eqref{irl} is also binding.
	\end{lem}
	\begin{proof}
		Given the binding $IR_H$: $V - (x_H-H)^2 - f_H + \pi\left(l\vert H\right)\Delta v = 0$, then at least one of $IC_L$ and $IR_L$ is binding. Otherwise, increasing $f_L$ increases the objective without affecting constraints $IC_H$, $IC_L$, $IR_L$. 
		
		Suppose $IC_L$ is binding: $V-(x_L-L)^2-f_L=V-(x_H-L)^2-f_H$. From the binding $IR_H$ and $IC_L$,
		\begin{eqnarray*}
			&&V - (x_L - L)^2 - f_L \\
			&=&V - (x_H - L)^2 - V + (x_H-H)^2 - \pi\left(l\vert H\right)\Delta v \\
			&=&(L-H)(2x_H-L-H) - \pi\left(l\vert H\right)\Delta v  \\
			&<&0 
		\end{eqnarray*}	
		which violates $IR_L$. Hence, $IC_L$ cannot be binding, and $IR_L$ must be binding.
	\end{proof}
	
	Hence, there are two cases following the above Lemma: $(1)$ $IR_L$ is binding, $IR_H$ is not binding; $(2)$ both $IR_L$ and $IR_H$ are binding. Now we show~\Cref{prop_binary} by three steps: Step 1 shows that it's not optimal that only $IR_L$ is binding by contradiction; Step 2 simplifies the intermediary's problem with binding $IR_L$ and $IR_H$; Step 3 solves the problem.
	
	\paragraph{Step 1.} Suppose $IR_L$ is binding, $IR_H$ is not binding. As $IR_H$ is not binding, $IC_H$ must be binding. Otherwise, increasing $f_H$ will not affect $IR_H$, $IC_H$, $IC_L$, $IR_L$ while increases the intermediary's profit. So $IC_H$ is binding:
	\begin{equation}
		V - \left( x_H - H \right)^2 - f_H + \pi\left(l\vert H\right) \Delta v
		=  V - \left( x_L - H \right)^2 - f_L + \pi\left(l\vert L\right) \Delta v  \label{bindingich_largemu0}
	\end{equation}
	also recall the binding $IR_L$: $V - (x_L - L)^2 - f_L = 0$, we have
	\begin{equation}
		f_L = V - (x_L - L)^2 \label{fl_largemu0}
	\end{equation}
	Plugging~\eqref{fl_largemu0} into~\eqref{bindingich_largemu0} to get
	\begin{equation}
		f_H = V + \left( x_L - H \right)^2 - (x_L - L)^2 - \left( x_H - H \right)^2 
		- \pi\left(l\vert L\right) \Delta v + \pi\left(l\vert H\right) \Delta v \label{fh_largemu0}
	\end{equation}
	Using $f_L$ in~\eqref{fl_largemu0} and $f_H$ in~\eqref{fh_largemu0}, $IC_L$ becomes
	\begin{eqnarray*}
		0 
		&\geq & V - (x_H - L)^2 - (x_L - H)^2 + (x_H - H)^2 + (x_L -L)^2 - \left[ \pi\left(l\vert L\right) + \pi\left(l\vert H\right) \right]\Delta v \\
		&=& (x_L -L)^2 - (x_L - H)^2 + (x_H - H)^2  - (x_H - L)^2 - \left[ \pi\left(l\vert L\right) + \pi\left(l\vert H\right) \right]\Delta v 
	\end{eqnarray*} 
	and (non-binding) $IR_H$ is
	\begin{equation}
		(x_L - L)^2 - (x_L - H)^2 + \pi(l\vert L)\Delta v \geq 0 \label{irh_only_irl}
	\end{equation} 
	
	Plugging~\eqref{fl_largemu0} and~\eqref{fh_largemu0} into the objective function, the problem is to maximize
	\begin{eqnarray}
		&&\quad \mu_0 \left[(x_L-H)^2-(x_H-H)^2+( \pi(l\vert H)-\pi(l\vert L) )\Delta v + V - (x_L-L)^2 \right] \notag \\
		&& +(1-\mu_0)\left[V-(x_L-L)^2\right]
		- \mu_0\pi(l\vert H)\Delta v + (1-\mu_0)\pi(l\vert L) v_L \notag \\
		&=& V + \mu_0 \left[ (x_L - H)^2 - (x_H - H)^2 \right] - (x_L - L)^2 + \pi(l\vert L)(v_L - \mu_0 v_H) \label{value_only_irl}
	\end{eqnarray}
	subject to $IC_L$ and $OB_L$.
	
	We note that $IC_L$ is binding. To see this by contradiction, suppose the constraints ($IC_L$) are not binding. We have $\pi(l\vert L) = 0$ as $v_L-\mu_0 v_H<0$. Then $\pi(l\vert H)=0$ and $x_L = \frac{L-\mu_0 H }{1-\mu_0}$ and $x_H = H$. But this contradicts with $IR_H$ in~\eqref{irh_only_irl} because
	\begin{eqnarray*}
		\left(\frac{L-\mu_0 H}{1-\mu_0}-L\right)^2 - \left(\frac{L-\mu_0 H}{1-\mu_0}-H \right)^2 <0.
	\end{eqnarray*}
  
    \iffalse Then we have 2 cases depending on $OB_L$ being binding or not. Case 1: $OB_L$ is not binding. The first order conditions of maximizing~\eqref{value_only_irl} are
    \begin{eqnarray*}
    	x_L &=& \frac{L-\mu_0 H }{1-\mu_0} \\
    	x_H &=& H \\
    	\pi(l\vert L) &=& 0 \\
    	\pi(l\vert H) &=& 0
    \end{eqnarray*}
    contradicting with $IR_H$ in~\eqref{irh_only_irl}. Case 2: $OB_L$ is binding $\pi(l\vert H)=C\pi(l\vert L)$. The solution is
    \begin{eqnarray*}
    	x_L &=& \frac{L-\mu_0 H }{1-\mu_0} \\
    	x_H &=& H \\
    	\pi_L &=& 0 \\
    	\pi_H &=& 0
    \end{eqnarray*}
    which is again a contradiction with $IR_H$.  \fi
    
	Using the binding $IC_L$, i.e.,
	\begin{eqnarray*}
		(x_L -L)^2 - (x_L - H)^2 + (x_H - H)^2  - (x_H - L)^2 - \left[ \pi\left(l\vert L\right) + \pi\left(l\vert H\right) \right]\Delta v = 0
	\end{eqnarray*}
	we have
	\begin{equation*}
		2\delta (x_L - x_H) = \left[\pi(l\vert H) -\pi(l\vert L) \right] \Delta v
	\end{equation*}
    from which by substituting out $x_H$, the objective becomes
	{\small
	\begin{eqnarray*}
		&&V + \mu_0 \left[ (x_L - H)^2 - (x_H - H)^2 \right] - (x_L - L)^2 + \pi(l\vert L)(v_L - \mu_0 v_H) \\
		&=&V + \mu_0 \left[ (x_L - H)^2 - \left( x_L + \frac{(\pi(l\vert L)-\pi(l\vert H)) \Delta v}{2\delta} - H\right)^2 \right] - (x_L - L)^2 + \pi(l\vert L)(v_L - \mu_0 v_H)
	\end{eqnarray*} }
	We have two cases depending on whether $OB_L$ is binding or not and the following shows that neither case can be optimal.
	
	(1) $OB_L$ is not binding. Then the first order conditions with $x_L$, $\pi(l\vert H)$, $\pi(l\vert L)$ are
	\begin{eqnarray*}
		&&x_L = L - \frac{\mu_0 \Delta v \left( \pi(l\vert L) - \pi(l\vert H) \right)}{2\delta} \\
		&&-2\mu_0\left(x_H - H \right) \frac{-\Delta v}{2\delta } \geq 0  \\
		&&-2\mu_0\left( x_H - H \right) \frac{\Delta v}{2\delta} + (v_L -\mu_0 v_H) \leq 0
	\end{eqnarray*}
	Suppose $x_H > H$. Then $\pi(l\vert H)=1$ and $\pi(l\vert L)=0$, which induces a contradiction with $OB_L$, $\pi(l\vert H)\leq \pi(l\vert L)$. Hence, $x_H = H$. Then $\pi(l\vert L) = 0$. From the binding $IC_L$, $2\delta (x_L-x_H)=(\pi(l\vert H)-\pi(l\vert L))\Delta v < 0$, which implies $\pi(l\vert H)<0$, a contradiction.
	
	(2) $OB_L$ is binding: $\pi(l\vert H) = C\pi(l\vert L)$. Then the objective function becomes
	{\small
	\begin{eqnarray*}
		&&V + \mu_0 \left[ (x_L - H)^2 - ( x_L + \frac{(\pi(l\vert L)-\pi(l\vert H)) \Delta v}{2\delta} - H)^2 \right] - (x_L - L)^2 + \pi(l\vert L)(v_L - \mu_0 v_H) \\
		&=&V + \mu_0 \left[ (x_L - H)^2 - ( x_L + \frac{(\pi(l\vert L)-C\pi(l\vert L)) \Delta v}{2\delta} - H)^2 \right] - (x_L - L)^2 + \pi(l\vert L) (v_L - \mu_0 v_H)
	\end{eqnarray*}}
	Its first order derivative with $\pi(l\vert L)$ is $-2\mu_0\left( x_H - H\right)\frac{(1-C)\Delta v}{2\delta}+(v_L - \mu_0 v_H) <0$, so $\pi(l\vert L) = 0$ and then $\pi(l\vert H)=0$ by the binding $OB_L$. Hence, $x_L = x_H = L$ following the first order condition for $x_L$, i.e., $\mu_0(x_L-x_H)-x_L+L=0$ and the binding $IC_L$: $2\delta(x_L-x_H)=0$. It is a contradiction.
	
	In sum, Step 1 shows by contradiction that it is not feasible that $IR_L$ is binding while $IR_H$ is not binding. 
	
	\paragraph{Step 2.} Following the first step, it must be that both $IR_L$ and $IR_H$ are binding: 
	\begin{eqnarray*}                                                     
		f_H &=& V - \left( x_H - H \right)^2 + \pi\left(l\vert H\right) \Delta v \\
		f_L &=& V -  \left( x_L - L \right)^2
	\end{eqnarray*}
	
	By plugging $f_L$ and $f_H$ into the objective, the intermediary chooses $x_H$, $x_L$ and $\pi\left(l\vert L\right)$, $\pi\left(l\vert H\right)$ to maximize
	\begin{equation*}
		V - \mu_0 \left( x_H - H \right)^2 - \left(1-\mu_0\right)\left( x_L - L \right)^2
		+ \left(1-\mu_0\right)\pi\left(l\vert L\right) v_L 
	\end{equation*}
	subject to $OB_L$, $IC_L$, $IC_H$ and $\pi\left(l\vert L\right),\pi\left(l\vert H\right)\in\left[0,1\right]$. 
	
	By plugging $f_L$ and $f_H$ into $IC_L$ and $IC_H$, we have the following $OB_L$, $IC_L$, $IC_H$ 
	\begin{align*}
		\left( x_H - H \right)^2 - \left( x_H - L \right)^2 &\leq \pi\left(l\vert H\right) \Delta v  \tag{$IC_L$} \\
		\left( x_L - L \right)^2 - \left( x_L - H \right)^2 &\leq - \pi\left(l\vert L\right) \Delta v  \tag{$IC_H$ } \\
		\pi\left(l\vert H\right) &\leq C \pi\left(l\vert L\right) \tag{$OB_L$}
	\end{align*}
	
	\paragraph{Step 3.} Solving the simplified problem. Note that $x_H=H$, and $IC_L$ and $OB_L$ are redundant. If $IC_H$ is not binding, then $x_L=L$, $\pi(l\vert L)=1$. $IC_H$ requires $\Delta v\leq \delta^2$.
	
	So if $\Delta v > \delta^2$, $IC_H$ is binding, i.e., $\left( x_L - L \right)^2 - \left( x_L - H \right)^2 = - \pi\left(l\vert L\right) \Delta v$ from which
		\begin{equation*}
			( 2x_L - L - H ) \delta = -\pi(l\vert L) \Delta v
		\end{equation*}
		or 
		\begin{equation*}
			x_L = \frac{L+H}{2} - \frac{\pi(l\vert L) \Delta v}{2\delta}
		\end{equation*}
		Substituting out $x_L$ in the objective function, the intermediary chooses $\pi\left(l\vert L\right)\in \left[0,1\right]$ to maximize
		\begin{equation*}
			- \left[ \frac{\delta }{2} - \frac{\pi(l\vert L) \Delta v}{2\delta} \right]^2 + \pi\left(l\vert L\right) v_L 
		\end{equation*}
		The solution to this problem is
		\begin{itemize}
			\item if $\frac{\delta^2}{\Delta v} \leq \frac{\Delta v}{v_L+v_H}$, then $\pi(l\vert L) = \frac{\delta^2}{\Delta v} \frac{v_H+v_L}{\Delta v} $
			and $x_L= L - \frac{\delta}{\Delta v} v_L$
			
			\item if $\frac{\Delta v}{v_L+v_H}<\frac{\delta^2}{\Delta v}<1$, then $\pi(l\vert L) = 1$ and $x_L = L - \frac{\Delta v -\delta^2}{2\delta}$
		\end{itemize}
		
	Summarizing the above results,
	\begin{itemize}
		\item if $\frac{\delta^2}{\Delta v} \geq 1$, then $\pi(l\vert L)=1$ and $x_L=L$
		
		\item if $\frac{\delta^2}{\Delta v} \leq \frac{\Delta v}{v_L+v_H}$, then $\pi(l\vert L) = \frac{\delta^2}{\Delta v} \frac{v_H+v_L}{\Delta v} $
		and $x_L= L - \frac{\delta}{\Delta v} v_L$
		
		\item if $\frac{\Delta v}{v_L+v_H}<\frac{\delta^2}{\Delta v}<1$, then $\pi(l\vert L) = 1$ and $x_L = L - \frac{\Delta v -\delta^2}{2\delta}$
	\end{itemize}
	The intermediary's total revenue is
	\begin{equation*}
		%R= 
		\begin{cases}
		V+(1-\mu_0)v_L	&\mbox{ if } \frac{\delta^2}{\Delta v} \geq 1 \\
		V+ (1-\mu_0)v_L -(1-\mu_0)\left( \frac{\Delta v -\delta^2}{2\delta } \right)^2 &\mbox{ if } \frac{\Delta v}{v_L+v_H} \leq \frac{\delta^2}{\Delta v}  < 1 \\
		V+(1-\mu_0) \left(\frac{\delta}{\Delta v}\right)^2 v_L v_H &\mbox{ if } \frac{\delta^2}{\Delta v} < \frac{\Delta v}{v_L+v_H}
		\end{cases}
	\end{equation*}
	which is continuous in $\delta^2/\Delta v$ and hence is consistent with Berge's maximum theorem.
	
\end{proof}

\subsection{Proof of~\Cref{general_service_welfare}}

\begin{proof}
	For $\mu_0<\frac{v_L}{v_H}$, social welfare without regulation is $V+(1-\mu_0)v_L+\mu_0v_H$, and social welfare with data banning is $V+v_L+\mu_0\Delta v$. So banning data trade has no effect on social welfare for $\mu_0<\frac{v_L}{v_H}$.
	
	For $\mu_0>\frac{v_L}{v_H}$, social welfare with data banning is $V+\mu_0v_H$, and social welfare without regulation is
	\begin{enumerate}
		\item if $\frac{\delta^2}{\Delta v}\geq 1$: $V+(1-\mu_0)v_L+\mu_0v_H$
		\item if $\frac{\Delta v}{v_L+v_H}<\frac{\delta^2}{\Delta v} < 1$: $V - (1-\mu_0)\left(\frac{\Delta v -\delta^2}{2\delta}\right)^2 
		+ (1-\mu_0)v_L 	+\mu_0 v_H $
		\item if $\frac{\delta^2}{\Delta v}<\frac{\Delta v}{v_L+v_H}$: $V + (1-\mu_0) \left(\frac{\delta}{\Delta v}\right)^2 v_L v_H 
		+ \mu_0 v_H $
	\end{enumerate}
	Welfare in each case is strictly greater than $V+\mu_0v_H$. This is obvious for the first and third cases. In the second case, $v_L - \left(\frac{\Delta v -\delta^2}{2\delta}\right)^2 = \frac{-(\Delta v)^2 - \delta^4 + 2\delta^2(v_L+v_H)}{4\delta^2}>0 $ when $\frac{\Delta v}{v_L+v_H}<\frac{\delta^2}{\Delta v} < 1$. Hence, banning data trade decreases social welfare for $\mu_0>\frac{v_L}{v_H}$.
	
\end{proof}

\subsection{Proof of~\Cref{prop:bargaining_general} }

\begin{proof}
	The following completes the proof by two parts: 
	
	\paragraph{Proof of Part~\ref{prop:bargaining_general_largemu0} 
		of~\Cref{prop:bargaining_general} }
	
	When $\mu_0 > \frac{v_L}{v_H}$, %. From $OB_H$~\eqref{obh} and $OB_L$~\eqref{obl}
	%\begin{eqnarray*}
	%	\frac{v_L}{v_H}\left( 1- \mu_0 \right) \pi\left(l\vert L\right) - \left( 1- \frac{v_L}{v_H} \right)\mu_0 \pi\left(l\vert H\right)  &\geq& \frac{v_L}{v_H} - \mu_0 \\
	%	\frac{v_L}{v_H}\left( 1- \mu_0 \right) \pi\left(l\vert L\right) - \left( 1- \frac{v_L}{v_H } \right) \mu_0 \pi\left(l\vert H\right) &\geq& 0
	%\end{eqnarray*}
	the obedience constraint for the producer to charge a high price $OB_H$~\eqref{obh} is redundant, we can ignore~\eqref{obh}. The type-H consumers' outside option $\max\left\{v_H - p_0, 0\right\}=0$. So the intermediary's problem is simplified into
	\begin{align*}
		&\max_{f,\pi } \text{ }
		f_H \mu_0  + f_L \left(1-\mu_0\right) 
		+ \beta [\left(1-\mu_0\right)\pi\left(l\vert L\right) v_L - \mu_0\pi\left(l\vert H\right)\left(v_H - v_L \right) ]  \\
		\text{s.t. }& ~\eqref{ich}~\eqref{icl}~\eqref{irl}~\eqref{fesi} \eqref{obl} ~\eqref{irh_largemu}
	\end{align*}
	which are
	\begin{align*}
		& V - \left( x_H - H \right)^2 - f_H + \pi\left(l\vert H\right) \Delta v
		\geq V - \left( x_L - H \right)^2 - f_L + \pi\left(l\vert L\right) \Delta v  \tag{$IC_H$ } \\
		& V - \left( x_L - L \right)^2  - f_L 
		\geq V -  \left( x_H - L \right)^2  - f_H  \tag{$IC_L$}  \\                                                              
		& V - \left( x_H - H \right)^2  - f_H + \pi\left(l\vert H\right) \Delta v  \geq 0 \tag{$IR_H$} \\
		& V -  \left( x_L - L \right)^2 - f_L  \geq 0 \tag{$IR_L$}  \\
		& \pi\left(l\vert H\right) \leq C\pi\left(l\vert L\right)  \tag{$OB_L$}  \\
		%	& R\left(\pi\right) - R_0 \geq T \tag{$IR_S$}  \\
		& \pi\left(l\vert L\right),\pi\left(l\vert H\right)\in\left[0,1\right] \tag{$FE$} 
	\end{align*}
	Note that at least one of $IR_L$~\eqref{irl} and $IR_H$~\eqref{irh_largemu} is binding. Otherwise, both of $IR_L$~\eqref{irl} and $IR_H$~\eqref{irh_largemu} are not binding. So increasing $f_L$ and $f_H$ simultaneously by equal size will not affect $IC_H$~\eqref{ich} and $IC_L$~\eqref{icl}, and will increase the intermediary's profit. The following Lemma excludes the case of only $IR_H$~\eqref{irh_largemu} being binding.
	
	\begin{lem}
		If $IR_H$~\eqref{irh_largemu} is binding, then $IR_L$~\eqref{irl} is also binding.
	\end{lem}
	\begin{proof}
		Given the binding $IR_H$: $V - (x_H-H)^2 - f_H + \pi\left(l\vert H\right)\Delta v = 0$, then at least one of $IC_L$ and $IR_L$ is binding. Otherwise, increasing $f_L$ increases the objective without affecting constraints $IC_H$, $IC_L$, $IR_L$. 
		
		Suppose $IC_L$ is binding: $V-(x_L-L)^2-f_L=V-(x_H-L)^2-f_H$. From the binding $IR_H$ and $IC_L$, we have
		\begin{eqnarray*}
			&&V - (x_L - L)^2 - f_L \\
			&=&V - (x_H - L)^2 - V + (x_H-H)^2 - \pi\left(l\vert H\right)\Delta v \\
			&=&(L-H)(2x_H-L-H) - \pi\left(l\vert H\right)\Delta v  \\
			&<&0 
		\end{eqnarray*}	
		which violates $IR_L$. Hence, $IC_L$ cannot be binding, and $IR_L$ must be binding.
	\end{proof}
	
	Hence, there are two cases following the above Lemma: $(1)$ $IR_L$ is binding, $IR_H$ is not binding; $(2)$ both $IR_L$ and $IR_H$ are binding. Now we show in three steps: Step 1 shows that it's not optimal that only $IR_L$ is binding by contradiction; Step 2 simplifies the intermediary's problem with binding $IR_L$ and $IR_H$; Step 3 solves the problem.
	
	\paragraph{Step 1.} Suppose $IR_L$ is binding, $IR_H$ is not binding. As $IR_H$ is not binding, $IC_H$ must be binding. Otherwise, increasing $f_H$ will not affect $IR_H$, $IC_H$, $IC_L$, $IR_L$ while increases the intermediary's profit. So $IC_H$ is binding:
	\begin{equation}
		V - \left( x_H - H \right)^2 - f_H + \pi\left(l\vert H\right) \Delta v
		=  V - \left( x_L - H \right)^2 - f_L + \pi\left(l\vert L\right) \Delta v  \label{bindingich_largemu0_bargaining}
	\end{equation}
	also recall the binding $IR_L$: $V - (x_L - L)^2 - f_L = 0$, we have
	\begin{equation}
		f_L = V - (x_L - L)^2 \label{fl_largemu0_bargaining}
	\end{equation}
	Plugging~\eqref{fl_largemu0_bargaining} into~\eqref{bindingich_largemu0_bargaining} to get
	\begin{equation}
		f_H = V + \left( x_L - H \right)^2 - (x_L - L)^2 - \left( x_H - H \right)^2 
		- \pi\left(l\vert L\right) \Delta v + \pi\left(l\vert H\right) \Delta v \label{fh_largemu0_bargaining}
	\end{equation}
	Using $f_L$ in~\eqref{fl_largemu0_bargaining} and $f_H$ in~\eqref{fh_largemu0_bargaining}, $IC_L$ becomes
	\begin{eqnarray*}
		0 
		&\geq & V - (x_H - L)^2 - (x_L - H)^2 + (x_H - H)^2 + (x_L -L)^2 - \left[ \pi\left(l\vert L\right) + \pi\left(l\vert H\right) \right]\Delta v \\
		&=& (x_L -L)^2 - (x_L - H)^2 + (x_H - H)^2  - (x_H - L)^2 - \left[ \pi\left(l\vert L\right) + \pi\left(l\vert H\right) \right]\Delta v 
	\end{eqnarray*} 
	and ignore the (non-binding) $IR_H$
	\begin{equation}
		(x_L - L)^2 - (x_L - H)^2 + \pi(l\vert L)\Delta v \geq 0 \label{irh_only_irl_bargaining}
	\end{equation} 
	
	Plugging~\eqref{fl_largemu0_bargaining} and~\eqref{fh_largemu0_bargaining} into the objective function, the problem is to maximize
	\begin{eqnarray}
		&&\quad \mu_0 \left[ V - (x_L-L)^2 + (x_L-H)^2-(x_H-H)^2+( \pi(l\vert H)-\pi(l\vert L) )\Delta v \right] \notag \\
		&& +(1-\mu_0)\left[V-(x_L-L)^2\right] \notag \\
		&& + \beta[(1-\mu_0)\pi(l\vert L) v_L - \mu_0\pi(l\vert H)\Delta v] \notag \\
		&=& V - (x_L - L)^2 + \mu_0 \left[ (x_L - H)^2 - (x_H - H)^2 \right] \notag \\
		&& \quad   + \pi(l\vert L)[-\mu_0\Delta v + \beta(1-\mu_0)v_L]
		+ \pi(l\vert H)(1-\beta)\mu_0\Delta v
		\label{value_only_irl_bargaining}
	\end{eqnarray}
	subject to $IC_L$ and $OB_L$.
	It is useful to note $(1-\beta)\mu_0\Delta v>0$ and $-\mu_0\Delta v + \beta(1-\mu_0)v_L=\beta v_L -\mu_0(\Delta v+\beta v_L)<0$ given $\mu_0>\frac{v_L}{v_H}$.
	
	We also note that $IC_L$ is binding. To see this by contradiction, suppose the constraints ($IC_L$) are not binding and ignore it. There are two cases: (1) $OB_L$ is not binding. We have $\pi(l\vert L) = 0$ as $-\mu_0\Delta v + \beta(1-\mu_0)v_L<0$, and $\pi(l\vert H)=1$ which however violates $OB_L$. (2) $OB_L$ is binding: $\pi(l\vert H)=C\pi(l\vert L)$. The objective function becomes 
	\begin{eqnarray*}
		&&V - (x_L - L)^2 + \mu_0 \left[ (x_L - H)^2 - (x_H - H)^2 \right] \\
		&&\quad  + \pi(l\vert L)[-\mu_0\Delta v + \beta(1-\mu_0)v_L] + C\pi(l\vert L) (1-\beta)\mu_0\Delta v \\
		&=&V - (x_L - L)^2 + \mu_0 \left[ (x_L - H)^2 - (x_H - H)^2 \right] \\
		&&\quad  + \pi(l\vert L)[-\mu_0\Delta v + \beta(1-\mu_0)v_L + C (1-\beta)\mu_0\Delta v ]
	\end{eqnarray*}
	Then $\pi(l\vert L)=0$ as $-\mu_0\Delta v + \beta(1-\mu_0)v_L + C (1-\beta)\mu_0\Delta v <0$,
	and $x_L = \frac{L-\mu_0 H }{1-\mu_0}$ and $x_H = H$. But this contradicts with $IR_H$ in~\eqref{irh_only_irl_bargaining} as
	\begin{eqnarray*}
		\left(\frac{L-\mu_0 H}{1-\mu_0}-L\right)^2 - \left(\frac{L-\mu_0 H}{1-\mu_0}-H \right)^2 <0,
	\end{eqnarray*}
	so $IC_L$ is binding.
	
	\iffalse Then we have 2 cases depending on $OB_L$ being binding or not. Case 1: $OB_L$ is not binding. The first order conditions of maximizing~\eqref{value_only_irl_bargaining} are
	\begin{eqnarray*}
		x_L &=& \frac{L-\mu_0 H }{1-\mu_0} \\
		x_H &=& H \\
		\pi(l\vert L) &=& 0 \\
		\pi(l\vert H) &=& 0
	\end{eqnarray*}
	contradicting with $IR_H$ in~\eqref{irh_only_irl_bargaining}. Case 2: $OB_L$ is binding $\pi(l\vert H)=C\pi(l\vert L)$. The solution is
	\begin{eqnarray*}
		x_L &=& \frac{L-\mu_0 H }{1-\mu_0} \\
		x_H &=& H \\
		\pi_L &=& 0 \\
		\pi_H &=& 0
	\end{eqnarray*}
	which is again a contradiction with $IR_H$.  \fi
	
	Using the binding $IC_L$, i.e.,
	\begin{eqnarray*}
		(x_L -L)^2 - (x_L - H)^2 + (x_H - H)^2  - (x_H - L)^2 - \left[ \pi\left(l\vert L\right) + \pi\left(l\vert H\right) \right]\Delta v = 0
	\end{eqnarray*}
	we have
	\begin{equation*}
		2\delta (x_L - x_H) = \left[\pi(l\vert H) -\pi(l\vert L) \right] \Delta v
	\end{equation*}
	from which by substituting out $x_H$, the objective in~\eqref{value_only_irl_bargaining} becomes
	\begin{eqnarray*}
		&& V - (x_L - L)^2 + \mu_0 \left[ (x_L - H)^2 - \left( x_L + \frac{(\pi(l\vert L)-\pi(l\vert H)) \Delta v}{2\delta} - H\right)^2 \right]  \notag \\
		&& \quad   + \pi(l\vert L)[-\mu_0\Delta v + \beta(1-\mu_0)v_L]
		+ \pi(l\vert H)(1-\beta)\mu_0\Delta v .
	\end{eqnarray*}
	We have two cases depending on whether $OB_L$ is binding or not and the following shows that neither case can be optimal.
	
	Case (1): $OB_L$ is not binding. 
	%The first order conditions with $x_L$ is
	%\begin{eqnarray*}
	%	-2(x_L-L) - 2 \frac{(\pi(l\vert L)-\pi(l\vert H))\Delta v}{2\delta} = 0
	%\end{eqnarray*}	
	The first order derivatives with respect to $\pi(l\vert H)$ and $\pi(l\vert L)$ are
	\begin{eqnarray*}
		-2\mu_0 \left( x_L + \frac{(\pi(l\vert L)-\pi(l\vert H)) \Delta v}{2\delta} - H\right) \frac{-\Delta v}{2\delta} + (1-\beta)\mu_0\Delta v &>& 0 \\
		-2\mu_0  \left( x_L + \frac{(\pi(l\vert L)-\pi(l\vert H)) \Delta v}{2\delta} - H\right) \frac{\Delta v}{2\delta}  -\mu_0\Delta v + \beta(1-\mu_0)v_L &<& 0
	\end{eqnarray*}	
	where the second inequality follows $x_H\geq H$ and $-\mu_0\Delta v + \beta(1-\mu_0)v_L <0$ under $\mu_0>v_L/v_H$.
	%\begin{eqnarray*}
	%	-\mu_0\Delta v + \beta(1-\mu_0)v_L 
	%&=& -\mu_0 v_H  + \mu_0 v_L + \beta(1-\mu_0)v_L \\
	%&=& -\mu_0 v_H  + [\mu_0 + \beta(1-\mu_0)] v_L \\
	%&<& -v_L + [\mu_0 + \beta(1-\mu_0)] v_L \\
	%&<& 0
	%\end{eqnarray*} 
	So $\pi(l\vert H)=1$ and $\pi(l\vert L)=0$, which contradicts with $OB_L$.   
	
	Case (2): $OB_L$ is binding, i.e., $\pi(l\vert H) = C\pi(l\vert L)$. Then the objective function becomes
	\begin{eqnarray*}
		&& V - (x_L - L)^2 + \mu_0 \left[ (x_L - H)^2 - \left( x_L + \frac{(\pi(l\vert L)-C\pi(l\vert L)) \Delta v}{2\delta} - H\right)^2 \right]  \notag \\
		&& \quad   + \pi(l\vert L)[-\mu_0\Delta v + \beta(1-\mu_0)v_L]
		+ C\pi(l\vert L)(1-\beta)\mu_0\Delta v
	\end{eqnarray*}
	Its first order derivative with $\pi(l\vert L)$ is 
	{\small
	\begin{equation*}
		-2\mu_0  \left( x_L + \frac{(\pi(l\vert L)-\pi(l\vert H)) \Delta v}{2\delta} - H\right) \frac{(1-C)\Delta v}{2\delta}  -\mu_0\Delta v + \beta(1-\mu_0)v_L +  C(1-\beta)\mu_0\Delta v < 0
	\end{equation*}}
	where the inequality follows that $-2\mu_0  \left( x_H - H\right) \frac{(1-C)\Delta v}{2\delta}\leq 0$ and $ -\mu_0\Delta v + \beta(1-\mu_0)v_L +  C(1-\beta)\mu_0\Delta v < 0$.
	So $\pi(l\vert L) = 0$ and then $\pi(l\vert H)=0$ by the binding $OB_L$. Hence, $x_L = x_H $ following the binding $IC_L$. It is a contradiction.
	
	In sum, Step 1 shows by contradiction that it is not feasible that $IR_L$ is binding while $IR_H$ is not binding. 
	
	\paragraph{Step 2.} Following the first step, it must be that both $IR_L$ and $IR_H$ are binding: 
	\begin{eqnarray*}                                                     
		f_H &=& V - \left( x_H - H \right)^2 + \pi\left(l\vert H\right) \Delta v \\
		f_L &=& V -  \left( x_L - L \right)^2
	\end{eqnarray*}
	Plugging $f_L$ and $f_H$ into the objective, the intermediary chooses $x_H$, $x_L$, $\pi\left(l\vert L\right)$, $\pi\left(l\vert H\right)$ to maximize
	\begin{eqnarray*}
		&&	f_H \mu_0  + f_L \left(1-\mu_0\right) 
		+ \beta [\left(1-\mu_0\right)\pi\left(l\vert L\right) v_L - \mu_0\pi\left(l\vert H\right)\left(v_H - v_L \right) ]  \\
		%&=& V - \mu_0\left( x_H - H \right)^2  - (1-\mu_0)\left( x_L - L \right)^2 + \mu_0\pi\left(l\vert H\right) \Delta v \\
		%&&+ \beta [\left(1-\mu_0\right)\pi\left(l\vert L\right) v_L - \mu_0\pi\left(l\vert H\right)\Delta v ]  \\
		&=& V - \mu_0\left( x_H - H \right)^2  - (1-\mu_0)\left( x_L - L \right)^2 + (1-\beta) \mu_0\pi\left(l\vert H\right) \Delta v \\
		&&+ \beta\left(1-\mu_0\right)\pi\left(l\vert L\right) v_L
	\end{eqnarray*}
	subject to $OB_L$, $IC_L$, $IC_H$ and $\pi\left(l\vert L\right),\pi\left(l\vert H\right)\in\left[0,1\right]$. 
	
	By plugging $f_L$ and $f_H$ into $IC_L$ and $IC_H$, we have the following $OB_L$, $IC_L$, $IC_H$ 
	\begin{align*}
		\left( x_H - H \right)^2 - \left( x_H - L \right)^2 &\leq \pi\left(l\vert H\right) \Delta v  \tag{$IC_L$} \\
		\left( x_L - L \right)^2 - \left( x_L - H \right)^2 &\leq - \pi\left(l\vert L\right) \Delta v  \tag{$IC_H$ } \\
		\pi\left(l\vert H\right) &\leq C \pi\left(l\vert L\right) \tag{$OB_L$}
	\end{align*}
	
	\paragraph{Step 3.} Solving the simplified problem:
	\begin{eqnarray*}
		&& V - \mu_0\left( x_H - H \right)^2  - (1-\mu_0)\left( x_L - L \right)^2 + (1-\beta) \mu_0\pi\left(l\vert H\right) \Delta v \\
		&&+ \beta\left(1-\mu_0\right)\pi\left(l\vert L\right) v_L
	\end{eqnarray*}
	subject to  $\pi\left(l\vert L\right),\pi\left(l\vert H\right)\in\left[0,1\right]$, and
	\begin{align*}
		\left( x_H - H \right)^2 - \left( x_H - L \right)^2 &\leq \pi\left(l\vert H\right) \Delta v  \tag{$IC_L$} \\
		\left( x_L - L \right)^2 - \left( x_L - H \right)^2 &\leq - \pi\left(l\vert L\right) \Delta v  \tag{$IC_H$ } \\
		\pi\left(l\vert H\right) &\leq C \pi\left(l\vert L\right) \tag{$OB_L$}
	\end{align*} 
	Note that $x_H=H$ and $IC_L$ is redundant. 
	
	If $IC_H$ is not binding, then $x_L=L$, $\pi(l\vert L)=1$ and $\pi(l\vert H)=C$. $IC_H$ requires $\Delta v\leq \delta^2$.
	
	If $\Delta v > \delta^2$, then $IC_H$ is binding, i.e., $\left( x_L - L \right)^2 - \left( x_L - H \right)^2 = - \pi\left(l\vert L\right) \Delta v$ from which
	\begin{equation*}
		( 2x_L - L - H ) \delta = -\pi(l\vert L) \Delta v
	\end{equation*}
	or 
	\begin{equation*}
		x_L = \frac{L+H}{2} - \frac{\pi(l\vert L) \Delta v}{2\delta}
	\end{equation*}
	Substituting out $x_L$ in the objective function, the intermediary chooses $\pi\left(l\vert L\right)\in \left[0,1\right]$ to maximize
	\begin{eqnarray*}
		V - (1-\mu_0)\left( \frac{\delta}{2} - \frac{\pi(l\vert L) \Delta v}{2\delta} \right)^2 + (1-\beta) \mu_0\pi\left(l\vert H\right) \Delta v
		+ \beta\left(1-\mu_0\right)\pi\left(l\vert L\right) v_L
	\end{eqnarray*}
	Suppose $OB_L$ is not binding, then $\pi(l\vert H)=1$, contradicting with $OB_L$. So $OB_L$ is binding, and the objective function is 
	\begin{eqnarray*}
		V - (1-\mu_0)\left( \frac{\delta}{2} - \frac{\pi(l\vert L) \Delta v}{2\delta} \right)^2 
		+ (1-\beta) \mu_0 C \pi\left(l\vert L\right) \Delta v
		+ \beta\left(1-\mu_0\right)\pi\left(l\vert L\right) v_L
	\end{eqnarray*}
	If $\pi(l\vert L)$ is not binding, the first order derivative with respect to $\pi(l\vert L)$ is 
	\begin{eqnarray*}
		2(1-\mu_0)\left( \frac{\delta}{2} - \frac{\pi(l\vert L) \Delta v}{2\delta} \right)\frac{\Delta v}{2\delta}
		+ (1-\beta) \mu_0 C \Delta v
		+ \beta\left(1-\mu_0\right) v_L = 0
	\end{eqnarray*}
	from which
	\begin{eqnarray*}
		\pi(l\vert L) 
		&=&  \frac{\delta^2}{\Delta v} 
		+\frac{2(1-\beta) \mu_0 C }{1-\mu_0} \frac{\delta^2}{\Delta v}
		+\frac{2\beta v_L }{\Delta v} \frac{\delta^2}{\Delta v} \\
		&=& \frac{\delta^2}{\Delta v} \left[ 1 + \frac{2(1-\beta) \mu_0 C }{1-\mu_0}
		+\frac{2\beta v_L }{\Delta v} \right] \\
		&=& \frac{\delta^2}{\Delta v} \frac{(1-\mu_0)\Delta v + 2\left[ (1-\beta) \mu_0 C \Delta v + \beta (1-\mu_0)v_L\right] }{(1-\mu_0)\Delta v} \\
		&=& \frac{\delta^2}{\Delta v} \frac{v_H + v_L}{\Delta v}
	\end{eqnarray*}
	where the last equality follows $\mu_0 C \Delta v = (1-\mu_0)v_L$ and the nonbinding $\pi(l\vert L)$ requires $\frac{\delta^2}{\Delta v} \frac{v_H + v_L}{\Delta v} \leq 1$ or $\frac{\delta^2}{\Delta v} \leq \frac{\Delta v}{v_H+v_L}$. 
	Hence, the solution to this problem is
	\begin{itemize}
		\item if $\frac{\delta^2}{\Delta v} \leq \frac{\Delta v}{v_H+v_L}$, 
		then $\pi(l\vert L) =\frac{\delta^2}{\Delta v} \frac{v_H + v_L}{\Delta v}$
		and $x_L = \frac{L+H}{2} - \frac{\Delta v}{2\delta}\pi(l\vert L)
		= L - \frac{\delta v_L}{\Delta v}
		$
		
		\item if $\frac{\Delta v}{v_H+v_L} < \frac{\delta^2}{\Delta v} < 1$, then $\pi(l\vert L) = 1$ and $x_L = \frac{L+H}{2} - \frac{\Delta v }{2\delta} = L - \frac{\Delta v - \delta^2}{2\delta}$
	\end{itemize}
	
	In sum, we have $x_H = H$ and
	\begin{itemize}
		\item if $\frac{\delta^2}{\Delta v} \geq 1$, then $\pi(l\vert L)=1$, $\pi(l\vert H)=C$ and $x_L=L$. 
		
		\item if $\frac{\Delta v}{v_H+v_L} <\frac{\delta^2}{\Delta v}<1$, then $\pi(l\vert L) = 1$, $\pi(l\vert H) = C$ and $x_L = L - \frac{\Delta v -\delta^2}{2\delta}$. 
		
		\item if $\frac{\delta^2}{\Delta v} \leq \frac{\Delta v}{v_H+v_L}$, then $\pi(l\vert L) = \frac{\delta^2}{\Delta v} 
		\frac{v_H+v_L}{\Delta v}$, $\pi(l\vert H) = \frac{C\delta^2}{\Delta v} 
		\frac{v_H+v_L}{\Delta v}$
		and $x_L = L - \frac{\delta v_L }{\Delta v}
		$.
		
	\end{itemize}
	
	In the optimal mechanism, the consumers surplus is $0$ and the producer surplus is $\mu_0v_H$.
	The intermediary's total revenue is
	\begin{equation*}
		%R= 
		\begin{cases}
			V+(1-\beta)\mu_0C\Delta v +\beta(1-\mu_0)v_L  &\mbox{ if } \frac{\delta^2}{\Delta v} \geq 1 \\
			V+ (1-\beta)\mu_0C\Delta v + \beta(1-\mu_0)v_L -(1-\mu_0)\left(\frac{\Delta v -\delta^2}{2\delta}\right)^2  &\mbox{ if } \frac{\Delta v}{v_H+v_L} \leq \frac{\delta^2}{\Delta v}  < 1 \\
			V+ (1-\beta)\mu_0 \frac{C\delta^2(v_H+v_L)}{\Delta v}
			+\beta(1-\mu_0)\frac{(v_H+v_L) v_L \delta^2}{(\Delta v)^2} -(1-\mu_0)\left( \frac{\delta v_L}{\Delta v}  \right)^2  &\mbox{ if } \frac{\delta^2}{\Delta v} < \frac{\Delta v}{v_H+v_L}
		\end{cases}
	\end{equation*}
	By substituting out $C$, the above can be rewritten as
	%\begin{equation*}
	%	\begin{cases}
		%		V+(1-\beta) v_L(1-\mu_0) +\beta(1-\mu_0)v_L  &\mbox{ if } \frac{\delta^2}{\Delta v} \geq 1 \\
		%		V+ (1-\beta) v_L(1-\mu_0) + \beta(1-\mu_0)v_L -(1-\mu_0)\left(\frac{\Delta v -\delta^2}{2\delta}\right)^2  &\mbox{ if } \frac{\Delta v}{v_H+v_L} \leq \frac{\delta^2}{\Delta v}  < 1 \\
		%		V+ (1-\beta) v_L(1-\mu_0) \frac{\delta^2(v_H+v_L)}{(\Delta v)^2}
		%		+\beta(1-\mu_0)\frac{(v_H+v_L) v_L \delta^2}{(\Delta v)^2} -(1-\mu_0)\left( \frac{\delta v_L}{\Delta v}  \right)^2  &\mbox{ if } \frac{\delta^2}{\Delta v} < \frac{\Delta v}{v_H+v_L}
		%	\end{cases}
	%\end{equation*}
	%or 	
	\begin{equation*}
		\begin{cases}
			V + (1-\mu_0) v_L  &\mbox{ if } \frac{\delta^2}{\Delta v} \geq 1 \\
			V+ (1-\mu_0) v_L - (1-\mu_0)\left(\frac{\Delta v -\delta^2}{2\delta}\right)^2  &\mbox{ if } \frac{\Delta v}{v_H+v_L} \leq \frac{\delta^2}{\Delta v}  < 1 \\
			V+ (1-\mu_0)\left(\frac{\delta}{\Delta v}\right)^2 v_Lv_H  &\mbox{ if } \frac{\delta^2}{\Delta v} < \frac{\Delta v}{v_H+v_L}
		\end{cases}
	\end{equation*}
	which is independent of the intermediary's bargaining power parameter $\beta$ and is the same as the base model.

	\vspace{.18in}	
	Recall in the base setup with $\beta=1$, we have the following:
	\begin{itemize}
		\item if $\frac{\delta^2}{\Delta v} \geq 1$, then $\pi(l\vert L)=1$ and $x_L=L$
		
		\item if $\frac{\delta^2}{\Delta v} \leq \frac{\Delta v}{v_L+v_H}$, then $\pi(l\vert L) = \frac{\delta^2}{\Delta v} \frac{v_H+v_L}{\Delta v} $
		and $x_L= L - \frac{\delta}{\Delta v} v_L$
		
		\item if $\frac{\Delta v}{v_L+v_H}<\frac{\delta^2}{\Delta v}<1$, then $\pi(l\vert L) = 1$ and $x_L = L - \frac{\Delta v -\delta^2}{2\delta}$
	\end{itemize}
	The intermediary's total revenue is also
	\begin{equation*}
		%R= 
		\begin{cases}
			V+(1-\mu_0)v_L	&\mbox{ if } \frac{\delta^2}{\Delta v} \geq 1 \\
			V-(1-\mu_0)\left( \frac{\Delta v -\delta^2}{2\delta } \right)^2 + (1-\mu_0)v_L &\mbox{ if } \frac{\Delta v}{v_L+v_H} \leq \frac{\delta^2}{\Delta v}  < 1 \\
			V+(1-\mu_0) \left(\frac{\delta}{\Delta v}\right)^2 v_L v_H &\mbox{ if } \frac{\delta^2}{\Delta v} < \frac{\Delta v}{v_L+v_H}
		\end{cases}.
	\end{equation*}
	
	%\end{proof}

	\paragraph{Proof of Part~\ref{prop:bargaining_general_smallmu0} of~\Cref{prop:bargaining_general} }
	
	%\begin{proof}
		When $\mu_{0}<v_L/v_H$, the obedience constraint $OB_{L}$ is redundant and the type-H consumers' outside option max $\left\{v_{H}-p_{0}, 0\right\}=v_{H}-v_{L}$. Hence, the intermediary solves
		\begin{align}
			&\max_{f,x,\pi,T}  f_H \mu_0  + f_L \left(1-\mu_0\right) + \beta [R\left(\pi\right) - R_0 ] \notag \\
			& \quad =  f_H \mu_0  + f_L \left(1-\mu_0\right) 
			+ \beta \left[ \left(1-\mu_0\right)\pi\left(l\vert L\right) v_L - \mu_0\pi\left(l\vert H\right)\left(v_H - v_L \right) + \mu_0v_H - v_L  \right] \notag \\
			\text{s.t.}& \notag \\
			& V - \left( x_H - H \right)^2 - f_H + \pi\left(l\vert H\right) \left( v_H - v_L \right) \notag \\
			&\hspace{2cm} \geq V - \left( x_L - H \right)^2 - f_L + \pi\left(l\vert L\right) \left( v_H - v_L \right)  \tag{$IC_H$ } \\
			& V - \left( x_L - L \right)^2  - f_L 
			\geq V -  \left( x_H - L \right)^2  - f_H  \tag{$IC_L$}  \\                                                              
			& V - \left( x_H - H \right)^2  - f_H + \pi\left(l\vert H\right) \left( v_H - v_L \right)  \geq v_H - v_L \tag{$IR_H$} \\
			& V -  \left( x_L - L \right)^2 - f_L  \geq 0 \tag{$IR_L$}  \\
			& \frac{v_{L}}{v_{H}}\left(1-\mu_{0}\right) \pi\left(l \vert L\right)-(1-\frac{v_{L}}{v_{H}}) \mu_{0} \pi\left(l \vert H\right) \geq \frac{v_{L}}{v_{H}}-\mu_{0} \tag{$OB_H$}  \\
			%&R\left(\pi\right) - R_0 \geq T \tag{$IR_S$} \\
			& \pi\left(l\vert L\right),\pi\left(l\vert H\right)\in\left[0,1\right]  \tag{$FE$} 
		\end{align}
		
		\begin{lem}
			If $IR_{L}$~\eqref{irl} is binding, then $IR_{H}$~\eqref{irh_smallmu} is also binding.
		\end{lem}
		\begin{proof}
			Given $IR_{L}$ is binding $f_{L}=V-\left(x_{L}-L\right)^{2}$. First, at least one of $IC_{H}$ and $IR_{H}$ is binding. Suppose not, both are not binding. Increasing $f_{H}$ will not affect $IC_{H}$, $IR_H$~\eqref{irh_smallmu}, $IC_{L}$ and increase the intermediary's payoff.
			
			Second, $I C_{H}$ is not binding. To see this, suppose $I C_{H}$ is binding, then it gives
			$$
			V-\left(x_{H}-H\right)^{2}-f_{H}+\pi\left(l \vert H\right)\left(v_{H}-v_{L}\right)=V-\left(x_{L}-H\right)^{2}-f_{L}+\pi\left(l \vert L\right)\left(v_{H}-v_{L}\right)
			$$
			Together with the binding $IR_L$: $f_{L}=V-\left(x_{L}-L\right)^{2}$, we have
			$$
			\begin{aligned}
				f_{H} &=V+\left(x_{L}-H\right)^{2}-\left(x_{H}-H\right)^{2}-\left(x_{L}-L\right)^{2}-(\pi\left(l \vert L\right)-\pi\left(l \vert H\right)) \Delta v \\
				& \leq V-\left(x_{H}-H\right)^{2}+\pi\left(l \vert H\right) \Delta v-\Delta v
			\end{aligned}
			$$
			where the inequality follows from $I R_{H}$. So the above inequality yields
			$$
			2 x_{L} \geq L+H + [1-\pi(l\vert L)]\frac{\Delta v}{\delta} \geq L + H > 2L
			$$
			leading a contradiction with $x_{L} \leq L .$ Hence, $I C_{H}$ is not binding.
			
			Thus, by the above two points, $I R_{H}$ is binding.
		\end{proof}
		
		In addition, we note that at least one of $IR_{L}$ and $IR_{H}$ is binding. Otherwise, increasing $f_{L}$ and $f_{H}$ simultaneously by equal size will not affect $IC_{H}$, $IC_{L}$, $IR_{H}$, $IR_{L}$ and will increase the intermediary's payoff.
		
		Hence, by the above lemma, there are two cases: $(1)$ both $IR_{L}$ and $IR_H$ are binding; $(2)$ only $IR_H$ is binding. 
		The following Step 1 shows that Case (2) is not feasible by contradiction, and Step 2 solves the optimal mechanism under Case (1).
		
		\paragraph{Step 1.} We show that the case (1) of only $IR_H$~\eqref{irh_smallmu} being binding is not feasible by contradiction. Suppose only $IR_H$~\eqref{irh_smallmu} is binding: $f_{H}=V-\left(x_{H}-H\right)^{2}+\left[\pi\left(l \vert H\right)-1\right] \Delta v$. As $I R_{L}$ is not binding, $IC_{L}$ must be binding. Otherwise, increasing $f_{L}$ will not affect $IR_{L}$, $IC_{L}$, $IC_{H}$ and will increase intermediary's payoff. So the binding $IR_H$~\eqref{irh_smallmu} and $IC_{L}$ give
		\begin{eqnarray*}
			f_{H}&=&V-\left(x_{H}-H\right)^{2}+(\pi\left(l \vert H\right)-1) \Delta v \\
			f_{L}&=&V-\left(x_L-L\right)^2 + \left(x_H-L\right)^2 - \left(x_H-H\right)^2 + \left(\pi\left(l\vert H\right)-1\right)\Delta v 
		\end{eqnarray*}	
		Then substituting out $f_H$ and $f_L$ in $IC_{H}$, it becomes
		\begin{equation*}
			2\left(x_L-x_H\right) \leq \left[ \pi\left(l\vert H\right) - \pi\left(l\vert L\right) \right] \frac{\Delta v}{\delta}
		\end{equation*}
		and ignore the (non-binding) $IR_{L}$
		\begin{equation*}
			2 x_H - L - H \leq \left[ 1-\pi\left(l \vert H\right) \right] \frac{\Delta v}{\delta}
		\end{equation*}
		
		The intermediary's problem is
		$$
		\begin{aligned}
			\max _{x, \pi} 
			V-\left(x_{H}-{H}\right)^{2}&+(\pi(l \vert H)-1)\Delta v
			+\left[\left(x_{H}-{L}\right)^{2}-\left(x_{L}-{L}\right)^{2}\right]\left(1-\mu_{0}\right) \\
			&+ \beta \left[ \left(1-\mu_0\right)\pi\left(l\vert L\right) v_L - \mu_0\pi\left(l\vert H\right)\left(v_H - v_L \right) + \mu_0v_H - v_L  \right] \\
			=	V-\left(x_{H}-{H}\right)^{2} &+ \pi(l\vert H)(1-\beta\mu_0)\Delta v + \pi(l\vert L)\beta (1-\mu_0) v_L \\
			&+\left[\left(x_{H}-{L}\right)^{2}-\left(x_{L}-{L}\right)^{2}\right]\left(1-\mu_{0}\right) 
			-(1-\beta\mu_0)v_H-(1-\beta)v_L
		\end{aligned}
		$$
		subject to $O B_{H}$ and $I C_{H}$: $2\left(x_{L}-x_{H}\right) \leq [\pi\left(l \vert H\right)-\pi\left(l \vert L\right) ]\frac{\Delta v}{\delta}$, 
		and $\pi\left(l \vert L\right)$, $\pi\left(l \vert H\right) \in\left[0,1\right]$. There are two cases depending on whether $IC_H$ is binding or not. We show  $I C_{H}$ is binding by contradiction: Suppose $I C_{H}$ is not binding and ignore the constraint, the solution is
		$$
		x_{L}=L, x_{H}=\frac{H-\left(1-\mu_{0}\right) L }{\mu_{0}}, \pi\left(l \vert L\right)=\pi\left(l \vert H\right)=1
		$$
		which indeed satisfies $O B_{H}$ and $IC_{H}$. But the solution violates $IR_{L}$  because $2 \frac{H-\left(1-\mu_{0}\right) L }{\mu_{0}} - L - H = \frac{(2-\mu_0)(H-L)}{\mu_0} > 0 = [1-\pi(l\vert H)]\Delta v/\delta$.
		Hence, $IC_{H}$ is binding: 
		\begin{equation}
			2\left(x_{L}-x_{H}\right)=\left[\pi\left(l \vert H\right)-\pi\left(l \vert L\right) \right] \frac{\Delta v}{\delta}.\label{ich_binding_bargaining}
		\end{equation}
		
		Plugging~\eqref{ich_binding_bargaining} into the intermediary's objective function to substitute out $x_L$, it becomes
		\begin{align*}
			\max _{x, \pi} 
			V-\left(x_{H}-{H}\right)^{2} &+ \pi(l\vert H)(1-\beta\mu_0)\Delta v + \pi(l\vert L)\beta (1-\mu_0) v_L \\
			&+\left[\left(x_{H}-{L}\right)^{2}-\left(x_{L}-{L}\right)^{2}\right]\left(1-\mu_{0}\right) 
			-(1-\beta\mu_0)v_H-(1-\beta)v_L \\
			=V-\left(x_{H}-{H}\right)^{2} &+ \pi(l\vert H)(1-\beta\mu_0)\Delta v + \pi(l\vert L)\beta (1-\mu_0) v_L \\
			&+\left[\left(x_{H}-{L}\right)^{2}-\left( \frac{\pi\left(l \vert H\right)-\pi\left(l \vert L\right)}{2}\frac{\Delta v}{\delta}+x_{H}-L \right)^{2}\right]\left(1-\mu_{0}\right) \\
			&-(1-\beta\mu_0)v_H-(1-\beta)v_L
		\end{align*}
		subject to $OB_H$, $\pi\left(l \vert L\right)\in\left[0,1\right]$, $\pi\left(l \vert H\right) \in\left[0,1\right]$. The first order derivative with respect to $\pi(l\vert H)$ is positive, so $OB_H$ is binding: $\frac{v_L}{v_H}(1-\mu_0)\pi(l\vert L) - (1-\frac{v_L}{v_H})\mu_0 \pi(l\vert H) = \frac{v_L}{v_H}-\mu_0$. Let $\lambda$ be the Lagrangian multiplier of the binding $OB_H$ constraint.

		We show that either the binding or nonbinding $\pi(l\vert L)\in [0,1]$ is impossible. First, suppose $\pi(l\vert L)\in [0,1]$ is not binding and ignore it. The first order conditions with respect to $\pi(l\vert L)$, $\pi(l\vert H)$ and $x_{H}$ are respectively
		\begin{eqnarray*}
			\beta(1-\mu_0)v_L+2(1-\mu_0)\left[\frac{\pi(l\vert H)-\pi(l\vert L)}{2}\frac{\Delta v}{\delta} + x_H -L \right]\frac{\Delta v}{2\delta} 
			+\lambda\frac{v_L}{v_H}(1-\mu_0)  &=& 0\\
			(1-\beta\mu_0)\Delta v - 2(1-\mu_0)\left[\frac{\pi(l\vert H)-\pi(l\vert L)}{2}\frac{\Delta v}{\delta}+x_H-L\right]\frac{\Delta v}{2\delta}
			-\lambda(1-\frac{v_L}{v_H})\mu_0 &=& 0 \\
			-2(x_H-H)-(1-\mu_0)[\pi(l\vert H)-\pi(l\vert L)]\frac{\Delta v}{\delta}&=& 0
		\end{eqnarray*}
		the sum of the first two conditions yields
		\begin{equation*}
			\beta(1-\mu_0)v_L +(1-\beta\mu_0)\Delta v +\lambda (\frac{v_L}{v_H}-\mu_0)= 0
		\end{equation*}
		which contradicts with $\mu_0<\frac{v_L}{v_H}$. Second, suppose $\pi(l\vert L)\in [0,1]$ is binding, then $\pi(l\vert L)=1=\pi(l\vert H)$ and $x_H=H$ by the third condition. By~\eqref{ich_binding_bargaining} $x_L=x_H= H$ which contradicts with $x_L\leq L$. 
		
		In sum, it's not feasible that only $I R_{H}$ is binding.
		
		\paragraph{Step 2.} Following Step 1, it must be both $I R_{L}$ and $I R_{H}$ are binding:
		\begin{eqnarray*}
			f_{H}&=&V-\left(x_{H}-H\right)^{2}+\pi\left(l \vert H\right) \Delta v-\Delta v \\
			f_{L}&=&V-\left(x_{L}-L\right)^{2}
		\end{eqnarray*}
		Then from $I C_{H}$
		\begin{equation*}
			\Delta v \geq-\left(x_{L}-H\right)^{2}+\left(x_{L}-L\right)^{2}+\pi\left(l \vert L\right) \Delta v
		\end{equation*}
		and from $I C_{L}$
		\begin{equation*}
			0 \geq-\left(x_{H}-L\right)^{2}+\left(x_{H}-H\right)^{2}-\pi\left(l \vert H\right) \Delta v+\Delta v
		\end{equation*}
		equivalently, they are
		\begin{eqnarray*}
			1-\pi\left(l \vert L\right) & \geq& \left( 2 x_{L}- L - H\right)\frac{\delta}{\Delta v}\\
			1-\pi\left(l \vert H\right) & \leq& \left( 2 x_{H}- L - H\right)\frac{\delta}{\Delta v}
		\end{eqnarray*}
		By plugging $f_{H}$ and $f_{L}$ into the intermediary's objective, we can rewrite it as
		\begin{align*}
			&f_H \mu_0  + f_L \left(1-\mu_0\right) 
			+ \beta \left[ \left(1-\mu_0\right)\pi\left(l\vert L\right) v_L - \mu_0\pi\left(l\vert H\right)\left(v_H - v_L \right) + \mu_0v_H - v_L  \right] \notag \\
			=&[V-\left(x_{H}-H\right)^{2}+\pi\left(l \vert H\right) \Delta v-\Delta v ] \mu_0  
			+[V-\left(x_{L}-L\right)^{2}] \left(1-\mu_0\right) \\
			&+ \beta \left[ \left(1-\mu_0\right)\pi\left(l\vert L\right) v_L - \mu_0\pi\left(l\vert H\right)\left(v_H - v_L \right) + \mu_0v_H - v_L  \right] \notag \\
			=&V-\mu_0 \left(x_{H}-H\right)^{2}+\mu_0 \pi\left(l \vert H\right) \Delta v-\mu_0 \Delta v 
			-\left(1-\mu_0\right)\left(x_{L}-L\right)^{2}  \\
			&+ \beta \left[ \left(1-\mu_0\right)\pi\left(l\vert L\right) v_L - \mu_0\pi\left(l\vert H\right)\left(v_H - v_L \right) + \mu_0v_H - v_L  \right] \notag \\
			=&V-\mu_0 \left(x_{H}-H\right)^{2}
			-\left(1-\mu_0\right)\left(x_{L}-L\right)^{2} +\mu_0(1-\beta) \pi\left(l \vert H\right) \Delta v   
			+ \beta  \left(1-\mu_0\right)\pi\left(l\vert L\right) v_L \\
			& + \beta  \mu_0v_H - \beta v_L -\mu_0 \Delta v
		\end{align*}
		The intermediary maximizes the objective subject to $O B_{H}$ and $\pi\left(l \vert L\right)$, $\pi\left(l \vert H\right)\in\left[0,1\right]$ and
		\begin{eqnarray*}
			1-\pi\left(l \vert L\right) & \geq& \left( 2 x_{L}- L - H\right)\frac{\delta}{\Delta v}\\
			1-\pi\left(l \vert H\right) & \leq& \left( 2 x_{H}- L - H\right)\frac{\delta}{\Delta v}
		\end{eqnarray*}
		Hence, given $\beta\in(0,1)$, $x_{L}=L$, $x_{H}=H$ and $\pi\left(l \vert L\right)=1$ and $\pi\left(l \vert H\right)=1$. 
		%$\pi\left(l \vert H\right) \in\left[\max\left\{0,1-\frac{\delta^2}{\Delta v } \right\},1\right]$.
		
		In sum, when $\mu_{0}<\frac{v_{L}}{v_{H}}$, $x_L=L$, $x_H=H$, $f_L=f_H=V$, $\pi\left(l \vert L\right)=\pi\left(l \vert H\right) = 1$. The intermediary's payoff is $V$; the producer's payoff is $v_L$; consumers' surplus is $\mu_0\Delta v$.
		
	\end{proof}

%\cleardoublepage 
\bibliography{references}

%\section{Supplementary Appendix: Partial Correlation between the Horizontal and Vertical}

\end{document}